\definecolor{webgreen}{rgb}{0,.5,0}
\definecolor{webbrown}{rgb}{.6,0,0}
\newcommand{\seqnum}[1]{\href{https://oeis.org/#1}{\rm \underline{#1}}}
\def\suchthat{\, : \, }
\newcommand{\inn}[1]{\langle #1 \rangle}
\begin{document}

\theoremstyle{plain}
\newtheorem{theorem}{Theorem}
\newtheorem{corollary}[theorem]{Corollary}
\newtheorem{lemma}[theorem]{Lemma}
\newtheorem{proposition}[theorem]{Proposition}

\theoremstyle{definition}
\newtheorem{definition}[theorem]{Definition}
\newtheorem{example}[theorem]{Example}
\newtheorem{conjecture}[theorem]{Conjecture}
\newtheorem{openproblem}[theorem]{Open Problem}

\theoremstyle{remark}
\newtheorem{remark}[theorem]{Remark}

\title{Transduction of Automatic Sequences and Applications}

\author{
Jeffrey Shallit\footnote{Research funded by a grant from NSERC, 2018-04118.}\\
School of Computer Science\\
University of Waterloo\\
Waterloo, ON  N2L 3G1\\
Canada\\
\href{mailto:shallit@uwaterloo.ca}{\tt shallit@uwaterloo.ca}\\
\ \\
Anatoly Zavyalov\\
University of Toronto\\
Toronto, ON M5S 2E4 \\
Canada\\
\href{mailto:anatoly.zavyalov@mail.utoronto.ca}{\tt anatoly.zavyalov@mail.utoronto.ca}
}

\maketitle

\begin{abstract}
We consider the implementation of the transduction of automatic sequences, and their generalizations, in the {\tt Walnut} software for solving decision problems in combinatorics on words.   We provide a number of applications, including (a) representations of $n!$ as a sum of three squares (b) overlap-free Dyck words and (c) sums of Fibonacci representations. We also prove results about iterated running sums of the Thue-Morse sequence.
\end{abstract}

\section{Introduction}
The $k$-automatic sequences form an interesting class that has been studied for more than fifty years now \cite{Cobham:1972}.   One nice property is that the first-order theory of these sequences, with addition, is algorithmically decidable
\cite{Bruyere&Hansel&Michaux&Villemaire:1994}.   \texttt{Walnut\/}, a free software system created and implemented by Hamoon Mousavi \cite{Mousavi:2016}, makes it possible to state and evaluate the truth of nontrivial first-order statements about automatic sequences, often in a matter of seconds \cite{Shallit:2022}.   With it, we can easily reprove old theorems in a simple and uniform way, explore unresolved conjectures, and prove new theorems.

This class is defined as follows:  a sequence
$(a(n))_{n \geq 0}$ over a finite alphabet is $k$-automatic if there exists a deterministic finite automaton with output (DFAO) that, on completely processing an input of $n$ expressed in base $k$, reaches a state $s$ with output $a(n)$.

According to a famous theorem of Cobham \cite{Cobham:1972}, the class of $k$-automatic sequences is closed under deterministic $t$-uniform transductions.  The particular model of transducer here is the following:  outputs are associated with transitions, so that every input letter results in the output of exactly $t$ letters.     
Cobham's theorem was later extended to the more general class of morphic sequences (defined below) by Dekking \cite{Dekking:1994}.

Transducers make it possible to manipulate automatic sequences in useful and interesting ways.  For example, it follows that the running sum and running product (taken
modulo a natural number $M \geq 2$) 
of a $k$-automatic sequence taking
values in $\Sigma_t = \{ 0,1,\ldots, t-1 \}$
is $k$-automatic.

In this paper we report on a recent implementation of Dekking's result in {\tt Walnut}, carried out by the second author.
This new capability of transducers has been implemented in the latest version of {\tt Walnut}, which is available for free download at\\ 
\centerline{\url{https://cs.uwaterloo.ca/~shallit/walnut.html} \ .}

With it we are able to obtain a number of results, including reproving the results of Burns \cite{Burns:2022} on the representation of $n!$ as a sum of three squares, in a simpler and more general way. 
We also prove a new result on overlap-free Dyck words, and several results about iterated running sums of the Thue-Morse sequence. Lastly, we discuss the transduction of Fibonacci automata and automata in other numeration systems.

\section{Preliminaries}

\subsection{Basic definitions}

Given a finite alphabet $\Sigma$, a \textit{word} over $\Sigma$ is a concatenation of elements of $\Sigma$. If $w = w_1 \cdots w_k$ is a finite word over $\Sigma$, then the length of $w$ is denoted by $|w| = k$. The set of all finite words over $\Sigma$ is denoted by $\Sigma^*$, which also includes the empty word of length $0$, denoted by $\epsilon$. The \textit{reversal} of $w = w_1 \cdots w_k$ is defined as $w^R = w_k \cdots w_1$. An infinite word ${\bf a} = a_0 a_1 a_2 \cdots$ over $\Sigma$ is a sequence $(a_n)_{n \geq 0}$ of elements of $\Sigma$. 

Let $Q$ be a finite set. A function $h \colon Q^* \to Q^*$ is called a \textit{morphism} if $h(vw) = h(v) h(w)$ for all $v, w \in Q^*$. Note that a morphism must, by definition, satisfy $h(\epsilon) = \epsilon$. Furthermore, we say that a morphism $h$ is \textit{$k$-uniform} if $|h(q)| = k$ for every $q \in Q$. A morphism can be applied to infinite words and sequences in the obvious way: if ${\bf q} = q_0 q_1 q_2 \cdots$ is an infinite word over $Q$, then $h({\bf q}) = h(q_0) h(q_1) h(q_2) \cdots$.

If $h$ is a morphism, by $h^n$ we mean the $n$-fold
composition of $h$ with itself.
If $h$ is prolongable, that is, if
$h(a) = ax$ and $h^n(x) \not= \epsilon$ for all $n \geq 0$, then we can define the infinite word $h^{\omega}(a) := a x h(x) h^2(x) \cdots$; it is a
fixed point of $h$.

Let $\Delta$ be a finite alphabet. A \textit{coding} $\lambda \colon Q^* \to \Delta^*$ is a $1$-uniform morphism (satisfying $|\lambda(q)| = 1$ for all $q \in Q$). An infinite sequence $\bf x$ over $\Delta$ is a \textit{morphic sequence\/} if ${\bf x} = \lambda(h^\omega(a))$ where $\lambda$ is a coding, $h$ is a prolongable morphism, and $a$ is a letter.

\subsection{Numeration systems} \label{sec:numeration}

We give a brief overview of the numeration systems covered in this paper. See  \cite[Chapter 3]{Allouche&Shallit:2003} for an in-depth treatment of numeration systems.

For $k = 2, 3, 4, \ldots$, the \textit{most-significant-digit-first (msd) base-$k$ representation} of $n \in \mathbb{N}$ is $(n)_k = d_t d_{t-1} \cdots d_1 d_0$ where $n = \sum_{i=0}^t d_i k^i$ and $d_i \in \{0, \ldots, k-1\}$ for all $i = 0, \ldots, t$. We abbreviate this as the \textit{msd}-$k$ representation of $n$. The least-significant-digit-first (lsd) base-$k$ representation of $n$ is $d_0 d_1 \cdots d_{t-1} d_t$. For example, the msd-$2$ representation of $13$ is $1101$, and the lsd-$2$ representation of $13$ is $1011$.

Define $F_0 = 0$, $F_1 = 1$, and $F_n = F_{n-1} + F_{n-2}$ for $n \geq 2$. The most-significant-digit-first \textit{Fibonacci representation} of an integer $n \geq 0$ is $(n)_{F} = d_t d_{t-1} \cdots d_1 d_0$, where $n = \sum_{i=0}^t d_i F_{i+2}$ and $d_i \in \{0, 1\}$, with no consecutive $1$s, i.e., there does not exist $0 \leq i \leq t-1$ such that $d_i = d_{i+1} = 1$. Every integer $n \geq 0$ can be uniquely written in this way \cite{Lekkerkerker:1952,Zeckendorf:1972}. For example, the most-significant-digit-first Fibonacci representation of $14$ is $(14)_{F} = 100001$. If $n = \sum_{i=0}^t d_i F_{i+2}$, then the least-significant-digit-first Fibonacci representation of $n$ is $d_0 d_1 \cdots d_{t-1} d_t$. 

\subsection{DFAOs}

A \textit{deterministic finite automaton with output} (DFAO) is a $6$-tuple $M = \inn{Q, \Sigma, \delta, q_0, \Delta, \lambda}$, where 
\begin{itemize}
    \item $Q$ is a finite set of \textit{states};
\item $\Sigma$ is a finite set representing the \textit{input alphabet};
\item $\delta \colon Q \times \Sigma \to Q$ is the \textit{transition function};
\item $q_0 \in Q$ is the \textit{initial state};
\item $\Delta$ is a finite set representing the \textit{output alphabet}; and 
\item $\lambda \colon Q^* \to \Delta^*$ is the \textit{coding}, defined by a single value $\lambda(q)$ for each state $q \in Q$.
\end{itemize}
We can extend the transition function to words over $Q$ in the obvious way.

If $\Sigma = \{0, \ldots,  k-1 \}$ for some $k \in \mathbb{N}$, we consider the sequence $(x_n)_{n \geq 0}$ \textit{computed} by $M$, defined as
$$
x_n = \lambda(\delta(q_0, (n)_k)),
$$
where $(n)_k$ denotes the msd-$k$ representation of $n$. Intuitively, $x_n$ is the output when $M$ receives the msd-$k$ representation of $n$ as input. 

The sequence ${\bf x} = (x_n)_{n \geq 0}$ is called \textit{$k$-automatic} if there exists a DFAO $M$ with input alphabet $\Sigma = \{0, \ldots, k-1\}$ that computes it. Theorem 3 of Cobham \cite{Cobham:1972} proves that a $k$-automatic sequence $\bf x$ is a morphic sequence with respect to some $k$-uniform morphism. We provide the construction given in Cobham below, as it will be useful to us in Section \ref{sec:implementation}. Given a DFAO $M$ with alphabet $\Sigma = \{0, 1, \ldots, k-1 \}$, define a morphism $h \colon Q^* \to Q^*$ by
$$
h(a) = \delta(a, 0) \; \delta(a, 1) \; \cdots \; \delta(a, k-1).
$$
Then the sequences $\bf x$ and $\lambda(h^{\omega}(q_0))$ are equal. Conversely, given a $k$-automatic sequence $\mathbf{x} = (x_n)_{n \geq 0} = \lambda(h^{\omega}(q_0))$ where $q_0 \in Q$, a $k$-uniform morphism $h \colon Q^* \to Q^*$ and a coding $\lambda \colon Q^* \to \Delta^*$, we define the DFAO $N = \inn{Q, \Sigma, \delta, q_0, \Delta, \lambda}$ where $\Sigma = \{0, \ldots, k-1 \}$ and $\delta \colon Q \times \Sigma \to Q$ is defined by $\delta(q, i) = h(q)[i]$, where $h(q) = h(q)[0] \, \cdots \, h(q)[k-1]$. Then $N$ computes the sequence $\bf x$.

\subsection{Transducers} \label{sec:transducers}

A \textit{1-uniform deterministic finite-state transducer} (or \textit{transducer} for short) is a $6$-tuple
$$ T = \inn{V, \Delta, \varphi, v_0, \Gamma, \sigma},$$ where $V$ is a finite set of \textit{states}, $\Delta$ is a finite set representing the \textit{input alphabet}, $\varphi \colon V \times \Delta \to V$ is the \textit{transition function}, $v_0 \in V$ is the \textit{initial state}, $\Gamma$ is a finite set representing the \textit{output alphabet}, and $\sigma \colon V \times \Delta \to \Gamma$ is the \textit{output function}. A transducer can be viewed as a DFA with an output function $\sigma$ layered on top, where a single element of $\Gamma$ is output for each transition.

We are interested in taking a sequence $ {\bf x} = (x_n)_{n \geq 0}$ and \textit{transducing} it, meaning we pass the sequence through the transducer symbol by symbol, obtaining a single symbol output with each transition.   Formally, the transduction of a sequence ${\bf x} = (x_n)_{n \geq 0}$ over the alphabet $\Delta$ by a transducer $T$ is defined to be the infinite sequence
$$
T({\bf x}) = \sigma(v_0, x_0) \; \sigma(\varphi(v_0, x_0), x_1) \; \sigma(\varphi(v_0, x_0 x_1), x_2) \; \cdots \; \sigma(\varphi(v_0, x_0 \cdots x_{n-1}), x_n) \; \cdots.
$$

In this paper we only consider $1$-uniform transducers; the more general case of $t$-uniform transducers, as mentioned previously, can easily be handled by applying the appropriate $t$-uniform morphism to the output of a $1$-uniform transducer.

\begin{example}
Allouche and Bousquet-Melou \cite{Allouche&Bousquet-Melou:1994b}
studied the ``generalized Rudin Shapiro" sequences,
which are running sums, taken modulo 2, of the paperfolding sequences.

Let us show that if you start the sequence of
unfolding instructions
$$ 1, 1, -1, 1, -1, 1, -1, 1, -1, ...,$$
in other words, $1$, followed by $(1,-1)$ repeating,
and then apply a morphism $1 \rightarrow 0$, $-1 \rightarrow 1$
to the resulting
resulting paperfolding word
$1, 1, -1, -1, 1, -1, -1, \cdots$, prepend a $0$, and
transduce this with a running sum transducer, mod $2$,
you get the classical Rudin-Shapiro sequence.

We need to define a transducer \texttt{RUNSUM2},
which transduces a sequence $a_0 a_1 a_2 \cdots $ over $\{0, 1\}$ into the sequence $b_0 b_1 b_2 \cdots$ where the $k$'th term is the running sum mod $2$ of the subsequence $a_0 a_1 a_2 \cdots a_k$, that is, $b_k = (\sum_{i=0}^k a_i) \bmod 2$. The transducer is illustrated in Figure~\ref{fig:RUNSUM2}. See the Appendix for the \texttt{Walnut} definition of the transducer.
\begin{figure}[H] 
\begin{center}
    \includegraphics[width=3in]{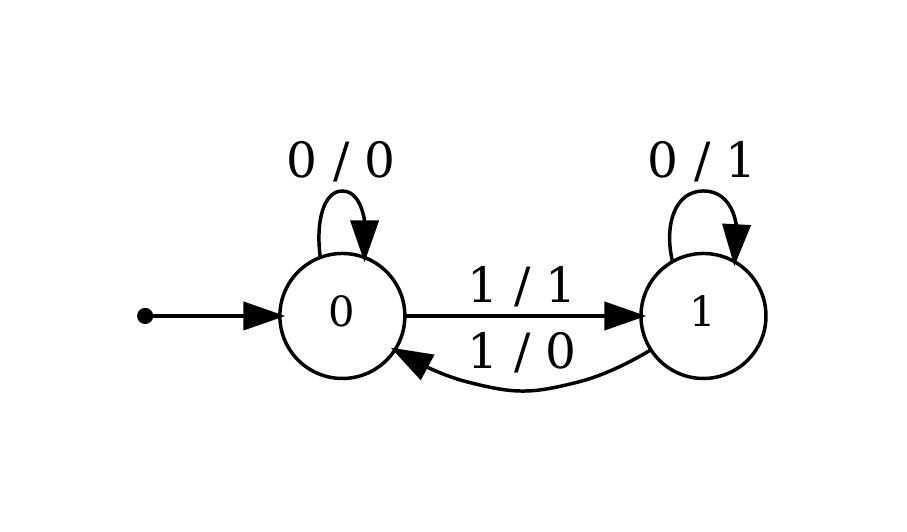}
\end{center}
\caption{Transducer \texttt{RUNSUM2} for running sum mod 2.} \label{fig:RUNSUM2}
\end{figure}

Then we use the following {\tt Walnut} code.
\begin{verbatim}
reg apfcode {-1,0,1} "1(1[-1])*0*":
# make unfolding instructions 1, 1, -1, 1, -1, 1, -1, ...

def apf "?lsd_2 Ex $apfcode(x) & FOLD[x][n]=@-1":
# return i in {0,1} if paperfolding sequence equals (-1)^i at position n

def apfm "?msd_2 `$apf(n)":
# turn an lsd-first automaton into an msd-first

combine PF1 apfm:
# turn it into a DFAO

transduce RS3 RUNSUM2 PF1:
# transduce it with running sum mod 2 transducer

eval check "An RS3[n]=RS[n]":
# check if it's the same as Rudin-Shapiro.  It is!
\end{verbatim}
\label{exam1}
\end{example}

\begin{example} \label{exm:tmrunsum}
    Consider the famous Thue-Morse sequence $\mathbf{t} = h^{\omega}(0)$ defined by the $2$-uniform morphism $h \colon \Sigma^* \to \Sigma^*$ given by $h(0) = 01$, $h(1) = 10$, with $\Sigma = \{0, 1\}$. The first few terms of the sequence are:
    $$
    \mathbf{t} = 0110100110010110 \cdots.
    $$
    Suppose we wish to compute the running sum (mod $2$) of the Thue-Morse sequence. We can do this using transduction: define the transducer $T = \inn{V = \{v_0, v_1\}, \Sigma, \varphi, v_0, \Sigma, \sigma}$ where the transition function $\varphi: V \times \Sigma \to V$ is defined by
    $$
    \varphi(v_0, 0) = v_0, \quad \varphi(v_0, 1) = v_1, \quad \varphi(v_1, 0) = v_1, \quad \varphi(v_1, 1) = v_0,
    $$
    and the output function $\sigma: V \times \Sigma \to \Sigma$ is defined by 
    $$
        \sigma(v_0, 0) = 0, \quad \sigma(v_0, 1) = 1, \quad
        \sigma(v_1, 0) = 1, \quad \sigma(v_1, 1) = 0.
    $$
    Then
    $$
    T(\mathbf{t}) = 0100111011100101 \cdots
    $$
    is the running sum of $\mathbf{t}$. 
    The transducer $T$ is illustrated in Figure \ref{fig:RUNSUM2}.
    The resulting sequence
    is sequence \seqnum{A255817} in the
    {\it On-Line Encyclopedia of Integer Sequences} (OEIS) \cite{Sloane:2023}.
    
    If we iteratively apply $T$ to $\mathbf{t}$ arbitrarily often to generate running sums $T(\mathbf{t}), T^2(\mathbf{t}), T^3(\mathbf{t}), \ldots$ (where $T^2(\mathbf{t}) \coloneqq T(T(\mathbf{t}), T^3(\mathbf{t}) \coloneqq T(T(T(\mathbf{t})))$, etc.), we can plot each running sum on a separate row to get a fractal consisting of Sierpi\'nski triangles, as illustrated in Figure \ref{fig:thueMorseRunSum}. This structure was also found by Prunescu \cite{Prunescu:2011}. 
    \begin{figure}[htb]
    \begin{center}
        \includegraphics[width=3.5in]{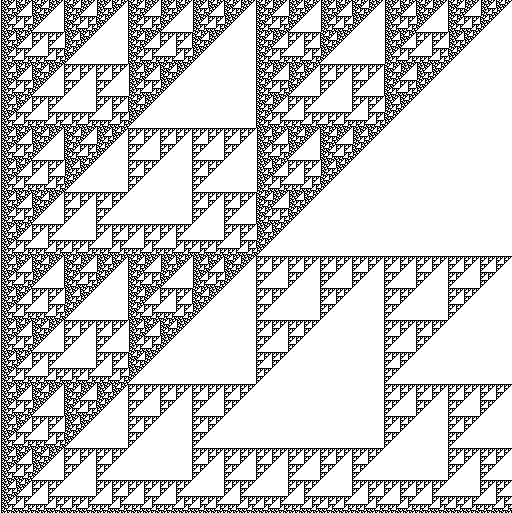}
    \end{center}
    \caption{The first 512 terms of the first 512 iterated running sums of the Thue-Morse sequence. The $k$'th row represents the terms of the sequence $T^k(\mathbf{t})$.}
    \label{fig:thueMorseRunSum}
    \end{figure}
\label{one}
\end{example}

\begin{example}
Let ${\bf p} = 10111010\cdots$ be the 
{\it period-doubling sequence} \cite{Damanik:2000},
the fixed point of the morphism $1 \rightarrow 10$,
$0 \rightarrow 11$, and sequence
\seqnum{A035263} in the OEIS.  Applying the transducer
of Example~\ref{one} to $\bf p$ gives
$\overline{\bf t} = 11010011 \cdots$, the Thue-Morse sequence with its first symbol removed.
\end{example}

\section{Dekking's result} \label{sec:dekking}

It turns out that the transduction of an $k$-automatic sequence always produces another $k$-automatic sequence, formalized in the next theorem.
\begin{theorem} \label{thm:automaticTransduce}
    Let $\mathbf{x} = (x_n)_{n \geq 0}$ be a $k$-automatic sequence over $\Delta$, and let $T = \inn{V, \Delta, \varphi, v_0, \Gamma, \sigma}$ be a transducer. Then the sequence 
    $$
    T(\mathbf{x}) = \sigma(v_0, x_0) \; \sigma(\varphi(v_0, x_0), x_1) \; \sigma(\varphi(v_0, x_0 x_1), x_2) \; \cdots \; \sigma(\varphi(v_0, x_0 \cdots x_{n-1}), x_n) \; \cdots
    $$
    is $k$-automatic.
\end{theorem}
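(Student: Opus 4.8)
The plan is to reduce the statement to the single claim that the sequence of transducer states visited while $T$ reads $\mathbf{x}$ is itself $k$-automatic, and then to read off $T(\mathbf{x})$ from it by a coding. For $n \geq 0$ let $s_n = \varphi(v_0, x_0 x_1 \cdots x_{n-1}) \in V$ be the state of $T$ just before it consumes $x_n$, so that $s_0 = v_0$. By the definition of transduction, $T(\mathbf{x})_n = \sigma(s_n, x_n)$, which is a function of the pair $(s_n, x_n)$. Hence if I can show that the sequence $\mathbf{s} = (s_n)_{n \geq 0}$ over the finite alphabet $V$ is $k$-automatic, then, since $\mathbf{x}$ is $k$-automatic, the pointwise-paired sequence $(s_n, x_n)_{n \geq 0}$ is $k$-automatic (closure under Cartesian product of DFAOs), and applying the coding $(v,c) \mapsto \sigma(v,c)$ shows $T(\mathbf{x})$ is $k$-automatic. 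All of the work is therefore in proving that $\mathbf{s}$ is $k$-automatic.

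To get at $\mathbf{s}$ I would use the morphic presentation of $\mathbf{x}$ from Cobham's construction recalled above: write $\mathbf{x} = \lambda(\mathbf{q})$, where $\mathbf{q} = q_0 q_1 q_2 \cdots = h^{\omega}(q_0)$ is the fixed point of the $k$-uniform morphism $h(a) = \delta(a,0) \cdots \delta(a,k-1)$, so that $x_n = \lambda(q_n)$ and the block $q_{kn} \cdots q_{kn+k-1}$ equals $h(q_n)$. This self-similar structure lets me decompose any prefix $x_0 \cdots x_{n-1}$ into aligned blocks of the form $\lambda(h^{\ell}(a))$ dictated by the base-$k$ digits of $n$, with $s_n$ obtained by feeding these blocks to $T$ in order, starting from $v_0$. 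The natural device for tracking the effect of $T$ on a block is its transition monoid: $T$ sends each word $u \in \Delta^*$ to the transformation $v \mapsto \varphi(v,u)$ of $V$, and these transformations live in the finite monoid $\mathcal{T} = V^{V}$. For a letter $a \in Q$ and a level $\ell \geq 0$, let $\tau_\ell(a) \in \mathcal{T}$ be the transformation induced by $\lambda(h^{\ell}(a))$; the block structure gives the recursion $\tau_{\ell+1}(a) = \tau_\ell(\delta(a,k-1)) \circ \cdots \circ \tau_\ell(\delta(a,0))$, and $s_n$ is computed by composing the relevant $\tau_\ell$'s and evaluating at $v_0$.

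The main obstacle is precisely this level dependence: the effect of a block $\lambda(h^{\ell}(a))$ on $T$ is not a function of its root label $a$ alone but also of the depth $\ell$, whereas a morphism (or a DFAO reading $(n)_k$) is depth-blind. A naive attempt to generate $(q_n, s_n)$ as the fixed point of a morphism on $Q \times V$ fails for exactly this reason, since the transducer state entering an expanded block is not determined by the state entering the unexpanded letter. The resolution I would use is that $\mathcal{T} = V^{V}$ is finite, so the sequence of level maps $(\tau_\ell)_{\ell \geq 0}$ is the orbit of a single self-map of the finite set $\mathcal{T}^{Q}$ (namely the map given by the recursion above) and is therefore eventually periodic, say with preperiod $L$ and period $P$. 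Thus only finitely many distinct block transformations ever occur, and the depth of a block enters a computation only through the bounded datum consisting of its level relative to $L$ and modulo $P$.

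With this in hand I would finish by building an enlarged $k$-uniform morphism whose alphabet augments each state of $Q$ with the current transducer state in $V$ together with just enough bounded depth data (a counter capped at $L$ and taken modulo $P$) to select the correct $\tau_\ell$ from the eventually periodic list when threading the transducer state across the children of a node. Reading the fixed point of this morphism through the coding $(a, v, \cdot) \mapsto \sigma(v, \lambda(a))$ recovers $T(\mathbf{x})$ exactly, and by the converse direction of Cobham's construction recalled above, any sequence so presented is $k$-automatic. I expect the only delicate point to be the bookkeeping in aligning, at each node, the level index actually needed with the eventually periodic regime of $(\tau_\ell)$, so that one obtains a genuine single $k$-uniform morphism rather than a recipe depending on the length of $(n)_k$; this may be cleanest to carry out in the least-significant-digit-first direction, where the level equals the step count, invoking the standard equivalence of msd- and lsd-first automaticity. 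The finiteness of $\mathcal{T}$ is what makes all of this go through.
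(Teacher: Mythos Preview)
Your strategy and your key lemma are precisely those of the paper (Dekking's argument): the paper's Lemma~\ref{lem:ultimatelyperiodic} is exactly your observation that the level maps $\tau_\ell=(a\mapsto f_{\lambda(h^\ell(a))})$ form an eventually periodic orbit in the finite set $(V^V)^{|Q|}$. Where your sketch diverges from the paper, and where there is a real gap, is in the alphabet of the enlarged morphism.

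You propose to label position $n$ by a triple $(q_n,s_n,d_n)$ with $s_n\in V$ the transducer state entering position $n$ and $d_n$ a bounded depth counter, and to read off $T(\mathbf{x})$ via $(a,v,\cdot)\mapsto\sigma(v,\lambda(a))$. But a $k$-uniform morphism must send this letter to the $k$ letters at positions $kn,\ldots,kn+k-1$, and in particular must produce $s_{kn}$. Now $s_{kn}=f_{\lambda(h(q_0\cdots q_{n-1}))}(v_0)$ depends on the entire prefix $q_0\cdots q_{n-1}$ through a \emph{different} element of $V^V$ than $s_n=f_{\lambda(q_0\cdots q_{n-1})}(v_0)$ does; knowing where one map sends $v_0$ says nothing about where the other does, and no depth counter supplies the missing information. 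Concretely, for Thue--Morse with the running-sum-mod-$2$ transducer, $n=5$ and $n'=6$ have identical $(q_n,s_n)=(0,1)$ and identical digit length, yet $s_{10}=1\neq 0=s_{12}$. Switching to least-significant-digit order does not rescue the alphabet $Q\times V\times D$ either: there too the state one must carry is a function on $Q\times V$, not a single element of $V$.

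The paper's resolution is to replace $s_n\in V$ by the window $\mathcal{I}(q_0\cdots q_{n-1})=(f_{\lambda(h^j(q_0\cdots q_{n-1}))})_{0\le j<p+r}$ of elements of $V^V$. Applying $h$ to the prefix shifts this window by one slot, with your eventual-periodicity lemma filling in the wrapped entry, so the window updates under the morphism using only local data; the coding $(a,\mathcal{I}(w))\mapsto\sigma(f_{\lambda(w)}(v_0),\lambda(a))$ then recovers $T(\mathbf{x})$. In short, the ``delicate bookkeeping'' you anticipate is not a depth counter but the necessity of tracking transformations in $V^V$ at all levels simultaneously, rather than a single state in $V$.
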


Dekking \cite{Dekking:1994} provided a constructive argument that we use to prove this theorem, which we recapitulate for the convenience of the reader. 

We first give a few definitions. Suppose that ${\bf x} = (x_n)_{n \geq 0}$ is a $k$-automatic sequence, so $\mathbf{x} =  \lambda({\bf q})$ where $h \colon Q^* \to Q^*$ is a prolongable $k$-uniform morphism, the coding $\lambda$ maps $ Q^* \to \Delta^*$, and ${\bf q} = h^\omega(q_0) = q_0 q_1 q_2 \cdots$. 
For all $y \in \Delta^*$, define the function $f_y \colon V \to V$ by $f_y(v) = \varphi^*(v, y)$. Note that $f_{y_1 y_2} = f_{y_2} \circ f_{y_1}$ for all $y_1, y_2 \in \Delta^*$.
We need the following lemma.
\begin{lemma} \label{lem:ultimatelyperiodic}
    For all $w \in Q^*$, the sequences $\mathcal{I}_{\infty}(w) = (f_{\lambda(w)}, f_{\lambda(h(w))}, f_{\lambda(h^2(w))}, \ldots, f_{\lambda(h^n(w))}, \ldots)$  are ultimately periodic with the same period and preperiod, i.e., there exist integers $p \geq 1$, $r \geq 0$ such that $f_{\lambda(h^i(w))} = f_{\lambda(h^{p+i}(w))}$ for all $w \in Q^*$ and $i \geq r$.
\end{lemma}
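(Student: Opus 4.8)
The plan is to reduce the statement to the eventual periodicity of an orbit in a finite deterministic dynamical system. The crucial observation is that every $f_y$ is an element of the finite transformation monoid $V^V$ of all functions from $V$ to itself, and that $y \mapsto f_y$ is an anti-homomorphism, since $f_{y_1 y_2} = f_{y_2} \circ f_{y_1}$. Thus all the objects we must control live inside a finite set, and the whole task is to identify the right finite piece of state to iterate.

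First I would package the data for all letters of $Q$ simultaneously. For each $n \geq 0$ define a function $g_n \colon Q \to V^V$ by $g_n(q) = f_{\lambda(h^n(q))}$. Because $Q$ and $V^V$ are both finite, $g_n$ ranges over the finite set $(V^V)^Q$. Next I would establish a recurrence expressing $g_{n+1}$ purely in terms of $g_n$. Writing $h(q) = p_1 p_2 \cdots p_k$ (using $k$-uniformity of $h$), the identity $h^{n+1}(q) = h^n(p_1) h^n(p_2) \cdots h^n(p_k)$, followed by applying $\lambda$ and the anti-homomorphism property, yields
$$ g_{n+1}(q) = f_{\lambda(h^{n+1}(q))} = g_n(p_k) \circ g_n(p_{k-1}) \circ \cdots \circ g_n(p_1). $$
Since $h$ is fixed, this shows that $g_{n+1}$ is a deterministic function of $g_n$ alone; that is, $g_{n+1} = \Phi(g_n)$ for some fixed self-map $\Phi$ of the finite set $(V^V)^Q$. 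As $(g_n)_{n \geq 0}$ is then the forward orbit of $g_0$ under $\Phi$ on a finite set, the pigeonhole principle forces it to be eventually periodic: there exist integers $p \geq 1$ and $r \geq 0$ with $g_{n+p} = g_n$ for all $n \geq r$.

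Finally, I would lift the conclusion from single letters to an arbitrary word $w = q^{(1)} \cdots q^{(m)} \in Q^*$. The same anti-homomorphism computation applied to $\lambda(h^n(w)) = \lambda(h^n(q^{(1)})) \cdots \lambda(h^n(q^{(m)}))$ gives
$$ f_{\lambda(h^n(w))} = g_n(q^{(m)}) \circ \cdots \circ g_n(q^{(1)}), $$
so the equality $g_{n+p} = g_n$ (valid for $n \geq r$) immediately implies $f_{\lambda(h^{n+p}(w))} = f_{\lambda(h^n(w))}$ for every $w \in Q^*$ and every $n \geq r$. This is exactly the asserted periodicity, with a single $p$ and a single $r$ that work uniformly for all $w$.

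The main obstacle is not any single computation but choosing the correct object to iterate: the key idea is to track the entire function $g_n \colon Q \to V^V$ rather than the behavior of individual words. This is precisely what makes the recurrence self-contained, so that finiteness delivers periodicity, and it is simultaneously what guarantees uniformity of $p$ and $r$ across all words, since every per-word transformation $f_{\lambda(h^n(w))}$ is assembled from the one finite object $g_n$.
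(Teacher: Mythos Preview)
Your proposal is correct and follows essentially the same approach as the paper: package the single-letter transformations into one element of the finite set $(V^V)^Q$, observe that this element evolves deterministically under a fixed self-map, apply pigeonhole to obtain a uniform period and preperiod, and then extend to arbitrary words via the anti-homomorphism property $f_{y_1 y_2} = f_{y_2} \circ f_{y_1}$. If anything, your version is slightly more careful than the paper's in that you explicitly verify the recurrence $g_{n+1} = \Phi(g_n)$ depends only on $g_n$, so that $\Phi$ is a well-defined self-map of $(V^V)^Q$.
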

\begin{proof}
    Suppose that $Q = \{q_1, \ldots, q_m\}$. Define the map $\Phi \colon \left(V^V\right)^m \to \left(V^V\right)^m$ such that for every $w_1, \ldots, w_m \in Q^*$, we have
    $$
    \Phi(f_{\lambda(w_1)}, \ldots, f_{\lambda(w_m)}) = (f_{\lambda(h(w_1))}, \ldots, f_{\lambda(h(w_m))}). 
    $$
    Then, for all $n \geq 1$, it is obvious that
    $$
    \Phi^n(f_{\lambda(w_1)}, \ldots, f_{\lambda(w_m)})) = (f_{\lambda(h^n(w_1))}, \ldots, f_{\lambda(h^n(w_m))}).
    $$
    Recall that the \textit{orbit} of $(f_{\lambda(q_1)}, \ldots, f_{\lambda(q_m)})$ under $\Phi$ is the sequence
    \begin{align*}
        &\left((f_{\lambda(q_1)}, \ldots, f_{\lambda(q_m)}), \Phi(f_{\lambda(q_1)}, \ldots, f_{\lambda(q_m)}), \ldots, \Phi^n(f_{\lambda(q_1)}, \ldots, f_{\lambda(q_m)}), \ldots\right) \\
        & \quad = \left((f_{\lambda(q_1)}, \ldots, f_{\lambda(q_m)}), (f_{\lambda(h(q_1))}, \ldots, f_{\lambda(h(q_m))}), \ldots, (f_{\lambda(h^n(q_1))}, \ldots, f_{\lambda(h^n(q_m))}), \ldots\right).
    \end{align*}
    As $\left(V^V\right)^m$ is a finite set, the orbit of $(f_{\lambda(q_1)}, \ldots, f_{\lambda(q_m)})$ under $\Phi$ is ultimately periodic with some period $p \geq 1$ and preperiod $r \geq 0$, meaning that for all $j \in \mathbb{N}$ with $j \geq r$, we have
    \begin{equation*}
        (f_{\lambda(h^j(q_1))}, \ldots, f_{\lambda(h^j(q_m))}) = (f_{\lambda(h^{p+j}(q_1))}, \ldots, f_{\lambda(h^{p+j}(q_m))}),
    \end{equation*}
    so $f_{\lambda(h^j(q_i))} = f_{\lambda(h^{p+j}(q_i))}$ for each  $i = 1, \ldots, m$, and by comparing each coordinate, we see that the sequences $\mathcal{I}_{\infty}(q) = (f_{\lambda(q)}, f_{\lambda(h(q))}, f_{\lambda(h^2(q))}, \ldots, f_{\lambda(h^n(q))}, \ldots)$ are ultimately periodic with the same period $p$ and preperiod $r$ for all $q \in Q$. Now, for $w = w_1 \cdots w_\ell \in Q^*$, we have
    \begin{align*}
        \mathcal{I}_{\infty}(w) &= (f_{\lambda(w)}, f_{\lambda(h(w))}, \ldots, f_{\lambda(h^n(w))}, \ldots) \\
        &= (f_{\lambda(w_1) \cdots \lambda(w_\ell)},
        f_{\lambda(h(w_1)) \cdots \lambda(h(w_\ell))}, \ldots, f_{\lambda(h^n(w_1)) \cdots \lambda(h^n(w_\ell))}, \ldots) \\
        &= (f_{\lambda(w_\ell)} \circ \cdots \circ f_{\lambda(w_1)},
        f_{\lambda(h(w_\ell))} \circ \cdots \circ f_{\lambda(h(w_1))},
        f_{\lambda(h^n(w_\ell))} \circ \cdots \circ f_{\lambda(h^n(w_1))}, \ldots).
    \end{align*}
    As each of the $\mathcal{I}_{\infty}(w_i)$ is ultimately periodic with period $p$ and preperiod $r$, it follows that $\mathcal{I}_{\infty}(w)$ is also ultimately periodic with period $p$ and preperiod $r$, as desired.
\end{proof}

We can now prove Theorem~\ref{thm:automaticTransduce}.
\begin{proof}[Proof of \textbf{Theorem \ref{thm:automaticTransduce}}]
By Lemma \ref{lem:ultimatelyperiodic}, the sequence $\mathcal{I}_{\infty}(w)$ is ultimately periodic with the same period $p$ and preperiod $r$ for every $w \in Q^*$. For each $w \in Q^*$, we define
$$
\mathcal{I}(w) = (f_{\lambda(w)}, f_{\lambda(h(w))}, \ldots, f_{\lambda(h^{p+r-1}(w))}).
$$
We also define
\begin{equation}
    \widetilde{Q} = \{(a, \mathcal{I}(w)) \suchthat a \in Q, w \in Q^* \}. \label{def:tildeQ}
\end{equation}
We define our $k$-uniform morphism $\widetilde{h} \colon \widetilde{Q}^* \to \widetilde{Q}^*$ by
\begin{equation}
    \widetilde{h}(a, \mathcal{I}(w)) = (h(a)_1, \mathcal{I}(h(w))) \:(h(a)_2, \mathcal{I}(h(w) h(a)_1)) \: \cdots \: (h(a)_k, \mathcal{I}(  h(w) h(a)_1 \cdots h(a)_{k-1})), \label{eq:morphism}
\end{equation}
where $h(a)_i$ denotes the $i$'th symbol of $h(a)$. Note that $(q_0, \mathcal{I}(\epsilon)) = \widetilde{q}_0$ is the first symbol of $\widetilde{h}(q_0, \mathcal{I}(\epsilon))$, as $h(q_0)$ starts with $q_0$ and $h(\epsilon) = \epsilon$. We can then consider the fixed point 
$$
\widetilde{\bf q} = \left(\widetilde{q}_n \right)_{n \geq 0} = \widetilde{h}^{\omega}(q_0, \mathcal{I}(\epsilon)).
$$
Define the coding $\widetilde{\lambda} \colon \widetilde{Q}^* \to \Gamma^*$ by
\begin{equation}
    \widetilde{\lambda}(a, \mathcal{I}(w)) = \sigma(f_{\lambda(w)}(v_0), \lambda(a)) = \sigma(\varphi^*(v_0, \lambda(w)), \lambda(a)). \label{def:tildeLambda}
\end{equation}
We claim that $\widetilde{\lambda}(\widetilde{\bf q}) = T(\mathbf{x})$, from which the theorem statement follows. To show this, it is sufficient to show that for all $n \in \mathbb{N}$, we have
\begin{equation}
    \widetilde{q}_n = (q_n, \mathcal{I}(q_0 \cdots q_{n-1})). \label{eq:elementEquality}
\end{equation}

We proceed inductively. For $0 \leq n \leq k-1$, this follows from (\ref{eq:morphism}) and the fact that $x_0 = \lambda(q_0)$. 
Now, for arbitrary $m \in \mathbb{N}$, suppose that (\ref{eq:elementEquality}) holds for all $n \in \mathbb{N}$ with $0 \leq n < |h^m(q_0)|$. We show that (\ref{eq:elementEquality}) holds for $0 \leq n < |h^{m+1}(q_0)|$. Since $h$ is $k$-uniform with $k > 1$, we have $|h^{m+1}(q_0)| > |h^m(q_0)|$, so proving this inductive step shows that (\ref{eq:elementEquality}) holds for all $n \in \mathbb{N}$. As $h^{m+1}(q_0) = h(h^m(q_0))$, for each $n$ such that $|h^{m}(q_0)| \leq n < |h^{m+1}(q_0)|$, there are integers $i, j$ with $1 \leq i \leq |h^m(q_0)|$ and $1 \leq j \leq |h(q_i)| = k$ such that
\begin{equation}
    q_0 q_1 \cdots q_n = h(q_0 q_1 \cdots q_{i-1}) \, h(q_i)_1 \cdots h(q_i)_j, \label{eq:subSequence}
\end{equation}
Using $\widetilde{h}(\widetilde{\bf q}) = \widetilde{\bf q}$ and $|h(a)| = |\widetilde{h}(a, \mathcal{I}(w))| = k$ for  \(a \in Q\) and  \(w \in Q^*\), we get
\begin{equation}
    \widetilde{q}_0 \widetilde{q}_1 \cdots \widetilde{q}_n = 
\widetilde{h}(\widetilde{q}_0 \widetilde{q}_1 \cdots \widetilde{q}_{i-1}) \, \widetilde{h}(\widetilde{q}_i)_1 \cdots \widetilde{h}(\widetilde{q}_i)_j. \label{eq:tildeSubSequence}
\end{equation}
It then follows that
\begin{align*}
    \widetilde{q}_n &= \widetilde{h}(\widetilde{q}_i)_j \tag*{(looking at the last symbol of both sides in \eqref{eq:tildeSubSequence})} \\
    &= \widetilde{h}(q_i, \mathcal{I}(q_0 \cdots q_{i-1}))_j \tag*{(by the induction hypothesis and \eqref{eq:elementEquality})} \\
    &= (h(q_i)_j, \mathcal{I}(h(q_0 q_1 \cdots q_{i-1}) \: h(q_i)_1 \cdots h(q_i)_{j-1})) \tag*{(by the definition of $\widetilde{h}$)} \\
    &= (q_n, \mathcal{I}(q_0 q_1 \cdots q_{n-1})), \tag*{(by \eqref{eq:subSequence})}
\end{align*}
which completes the proof.
\end{proof}

The above proof for Theorem \ref{thm:automaticTransduce} gives an obvious construction for the DFAO $M'$ that generates $T(x)$ given a transducer $T$. Namely, given $M = \inn{Q, \Sigma, \delta, q_0, \Delta, \lambda}$ and $T = \inn{V, \Delta, \varphi, v_0, \Gamma, \sigma}$, we define $M' = \inn{\Omega, \Sigma, \widetilde{\delta}, \widetilde{q}_0, \Gamma, \widetilde{\lambda}}$ where $\Omega \subset \widetilde{Q}$ is the set of reachable states in $\widetilde{Q}$ from $\widetilde{q}_0$ (with $\widetilde{q}_0 \in \Omega$), $\widetilde{Q}$ and $\widetilde{\lambda}$ are defined as in (\ref{def:tildeQ}) and (\ref{def:tildeLambda}) respectively, and $\widetilde{\delta} \colon \widetilde{Q} \times \Sigma \to \widetilde{Q}$ is defined by
\begin{equation}
    \widetilde{\delta}((a, \mathcal{I}(w)), d) = (h(a)_d, \mathcal{I}(h(w) h(a)_1 \cdots h(a)_{d-1})).
\end{equation}

\section{Implementation details}\label{sec:implementation}

We now give a pseudocode implementation of the construction of the previous section. Following from the proof of Lemma \ref{lem:ultimatelyperiodic}, Algorithm \ref{alg:findpr} finds the period $p$ and preperiod $r$ used for the states of the automaton by computing the orbit of $(f_{\lambda(q_1)}, \ldots, f_{\lambda(q_m)})$ under $\Phi$ until a cycle is detected.
\begin{algorithm}[H] 
        \caption{Find period $p$ and preperiod $r$.} \label{alg:findpr}
        \begin{algorithmic}[H]
        \State Let $\textsc{orbitMap}$ be a new hash table
        \For {$n = 1, 2, 3, \ldots$}
            \State $\textsc{orbitEl} \gets (f_{\lambda(h^n(q_1))}, \ldots, f_{\lambda(h^n(q_{|Q|}))})$ 
            \If {$\textsc{orbitEl} \notin \textsc{orbitsMap}.keys$}
                \State \textsc{orbitsMap}.put(\textsc{orbit}, $n$)
            \Else
                \State $\ell \gets \textsc{orbitsMap}[\textsc{orbit}]$
                \State $p \gets n-\ell$
                \State $r \gets \ell$
                \State \Return $(p, r)$ 
            \EndIf
        \EndFor
        \end{algorithmic}
    \end{algorithm}

To generate the resulting automaton, we start from the initial state and perform a breadth-first search to generate only the reachable states of the resulting automaton. In the pseudocode in Algorithm \ref{alg:genauto}, we use methods $\textsc{Set-Output}(s, o)$, which sets the output of a state $s$ to $o$, and $\textsc{Add-Transtion}(s_1, s_2, i)$, which adds a transition from state $s_1$ to state $s_2$ on input $i$. 

\begin{algorithm} 
    \caption{Generate the resulting automaton.} \label{alg:genauto}
    \begin{algorithmic} 
    \State Let $\textsc{states}$ be a new empty set
    \State Let $\textsc{statesQueue}$ be a new queue
    \State $\textsc{initState} \gets (q_0, \mathcal{I}(\epsilon))$
    \State $\textsc{states}$.add(\textsc{initState})
    \State \textsc{statesQueue}.append(\textsc{initState}) 
    \While {$\textsc{statesQueue}.length>0$}
        \State $\textsc{currState} \gets (a, \mathcal{I}(w)) = \textsc{statesQueue}$.pop()
        \State $\textsc{Set-Output}(\textsc{currState}, \; \sigma(f_w(v_0), a))$
        \For {$i = 0, \ldots, k-1$}
            \State $\textsc{newState} \gets (h(a)_{i+1}, \mathcal{I}(  h(w) h(a)_1 \cdots h(a)_{i}))$
            \State $\textsc{Add-Transition}(\textsc{currState}, \; \textsc{newState}, \; i)$
            \If {\textsc{newState} $\notin$ \textsc{states}}
                \State \textsc{states}.add(\textsc{newState}) 
                \State \textsc{statesQueue}.append(\textsc{newState}) 
            \EndIf
        \EndFor
    \EndWhile
    
    \end{algorithmic}
\end{algorithm}

\subsection{Complexity analysis}

The proof of Lemma \ref{lem:ultimatelyperiodic} finds $p$ and $r$ by iterating through the finite set $(V^V)^{|Q|}$ until a repetition is found, which requires to iterate through at most $|V|^{|Q| \cdot |V|}$ maps. Consequently we get that $p, r \leq |V|^{|Q| \cdot |V|}$. The total number of states $|\Omega|$ is, in the worst case, at most $|\widetilde{Q}| \leq |Q| \cdot |V|^{(p+r) |V|} \leq |Q| \cdot |V|^{2 \cdot |V|^{|Q| \cdot |V| + 1}}$. To find the reachable states $\Omega$, we run a breadth-first search starting from the initial state, which, in the worst-case, takes time $\mathcal{O}(|Q| \cdot |V|^{(p+r) |V|} + k \cdot |Q| \cdot |V|^{(p+r) |V|}) = \mathcal{O}((k+1) \cdot |Q| \cdot |V|^{2 \cdot |V|^{|Q| \cdot |V| + 1}})$.

In practice, however, we do not usually see this kind of astonishingly large worst-case complexity.

\section{Applications}

\subsection{Factorial as a sum of three squares} \label{sec:fac_as_sum_three_sq}

Let $S_3 = \{ 0,1,2,3,4,5,6,8,\ldots \}$
be the set of natural numbers that can be
expressed as the sum of three squares of natural
numbers.
A well-known theorem of Legendre (see, for example, \cite[\S 1.5]{Nathanson:1996}) states that $N \in S_3$ if and only if it cannot be expressed in the
form $4^i (8j+7)$, where $i, j \geq 0$ are natural numbers.    It follows that the characteristic
sequence of the set $S_3$ is $2$-automatic,
a fact first observed by Cobham \cite[Example 8, p.~172]{Cobham:1972}.  Indeed, $N \in S_3$ if and only if $(N)_2$ is accepted by the automaton depicted in Figure~\ref{fig:threesq}.
\begin{figure}[H]
\begin{center}
    \includegraphics[width=6in]{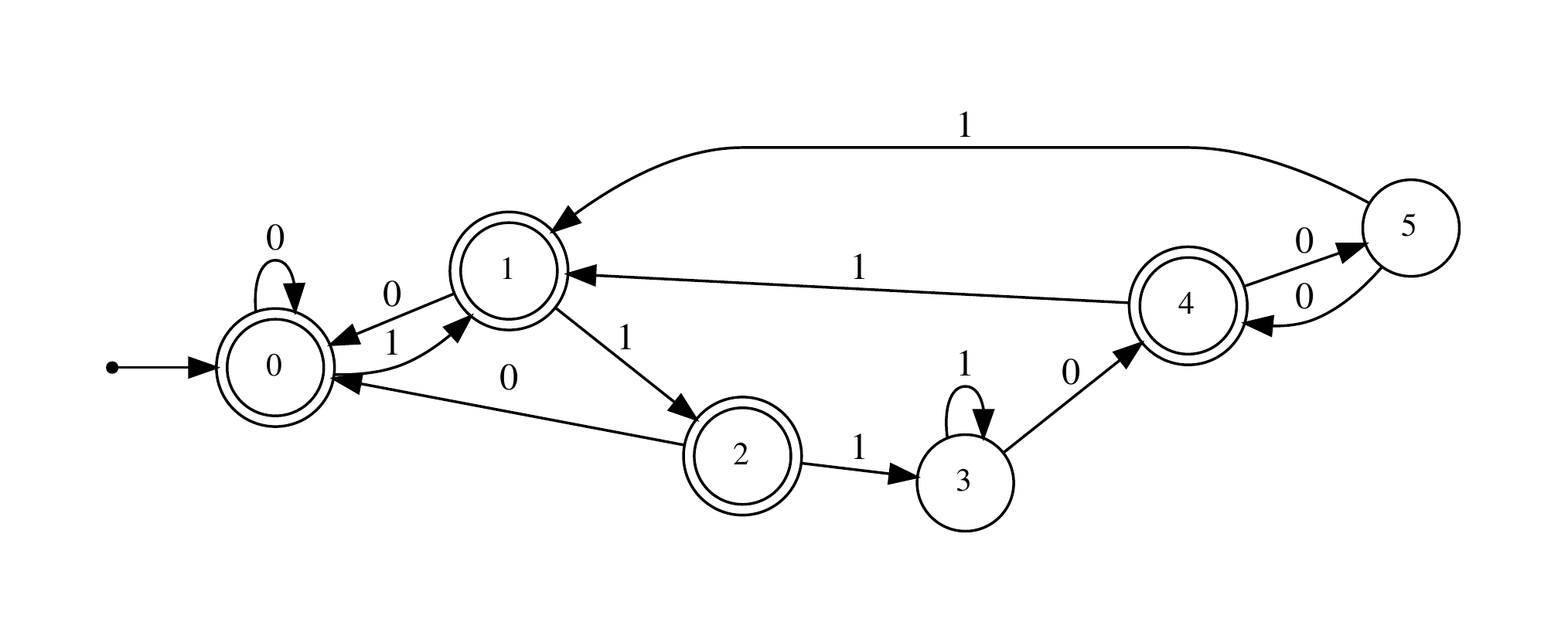}
\end{center}
\caption{Automaton accepting $(S_3)_2$.}
\label{fig:threesq}
\end{figure}

Recently a number of authors \cite{Deshouillers&Luca:2010,Hajdu&Papp:2018,Burns:2021,Burns:2022} have been interested in studying the properties of the set
$S = \{ n \suchthat n! \in S_3 \}$.  In particular, the set $S$ is $2$-automatic.   In this section, we show how to determine a DFAO for (the characteristic sequence) of $S$ by using transducers.    Let $\nu_2 (n)$ denote the exponent of the highest power of $2$ dividing $n$.   The basic idea is to use the
fact that $n! = \prod_{1 \leq i \leq n} i$, and keep track of both the parity of the exponent of the highest power of $2$ dividing $i$ and the last $3$ bits of $i/2^{\nu_2(i)}$.

Clearly
$$\nu_2(mn) = \nu_2(m) + \nu_2(n)$$
for $m,n \geq 1$.  It now follows that $\nu_2(n!) \bmod 2$ is the running sum (mod $2$) of the sequence
$\nu_2(1) \cdots \nu_2(n)$.

Let $g(n) = n/2^{\nu_2 (n)}$ for $n\geq 1$ and set $g(0)=1$.   Then it is easy to
see that $g(n) \in \{1, 3, 5, 7\}$ for all $n \geq 1$ and
$$ g(mn) \equiv g(m) g(n) \pmod 8 $$
for $m,n \geq 1$. 
 Thus
$g(n!) \bmod 8$ is the running product
(mod $8$) of the sequence
$g(1) g(2) \cdots g(n)$.  

Hence $\nu_2(n!) \bmod 2$ can be computed by a running-sum transducer, and $g(n!) \bmod 8$ can be computed by a running-product transducer.
Now $n \notin S_3$ if and only if $\nu_2(n) \equiv 0 \pmod 2$ and $g(n) \equiv 7 \pmod 8$, and therefore $n \notin S$ if and only if $\nu_2(n!) = \sum_{i = 1}^n \nu_2(i) \equiv 0 \pmod 2$ and $g(n!) \equiv \prod_{i=1}^n g(i) \equiv 7 \pmod 8$.

We can now implement these ideas in
{\tt Walnut}.
We first define the DFAO \texttt{NU\_MOD2}, which generates the sequence $(\nu_2(n) \bmod{2})_{n \geq 1}$, illustrated below:
\begin{figure}[H]
\begin{center}
    \includegraphics[width=4in]{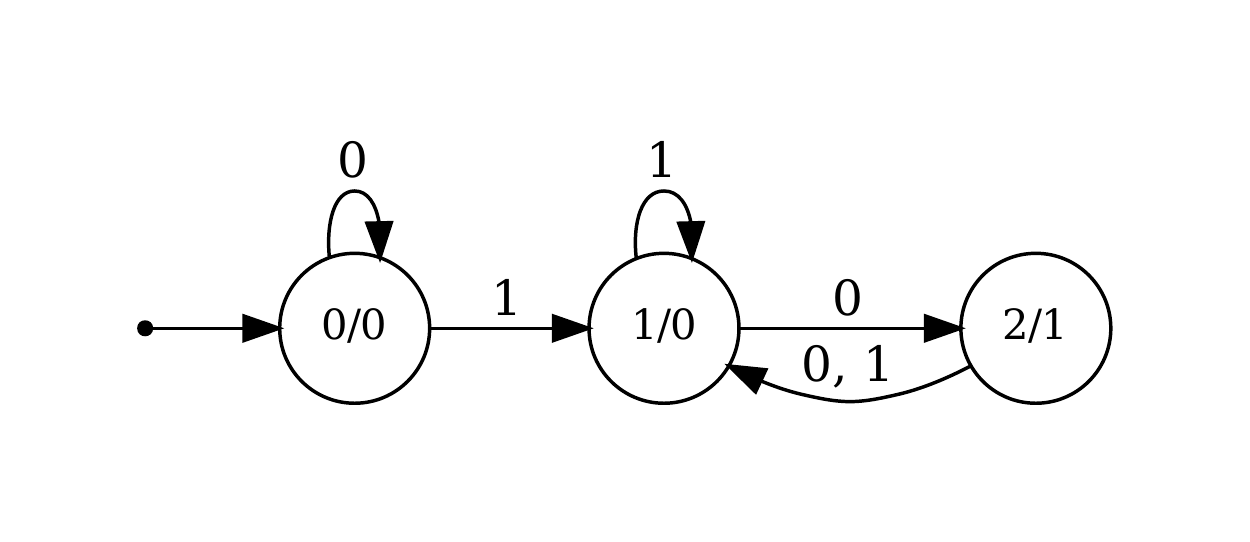}
\end{center}
\caption{DFAO \texttt{NU\_MOD2} computing $\nu_2(n) \bmod 2$.}
\end{figure}
This can be done with the following {\tt Walnut} command, which defines it via a regular expression:
\begin{verbatim}
reg nu2odd msd_2 "(0|1)*10(00)*":
combine NU_MOD2 nu2odd:
\end{verbatim}

It is easy to see that $g$ satisfies
the following identities:
\begin{align*}
g(2n) &= g(n) \\
g(4n+2) &= g(2n+1) \\
g(8n+i) &= i \quad \text{for $i \in \{1,3,5,7\}$.}
\end{align*}
From these identities, an lsd-first automaton for $g$ is trivial to derive, as illustrated in Figure~\ref{g88}.
\begin{figure}[htb]
\begin{center}
\includegraphics[width=3in]{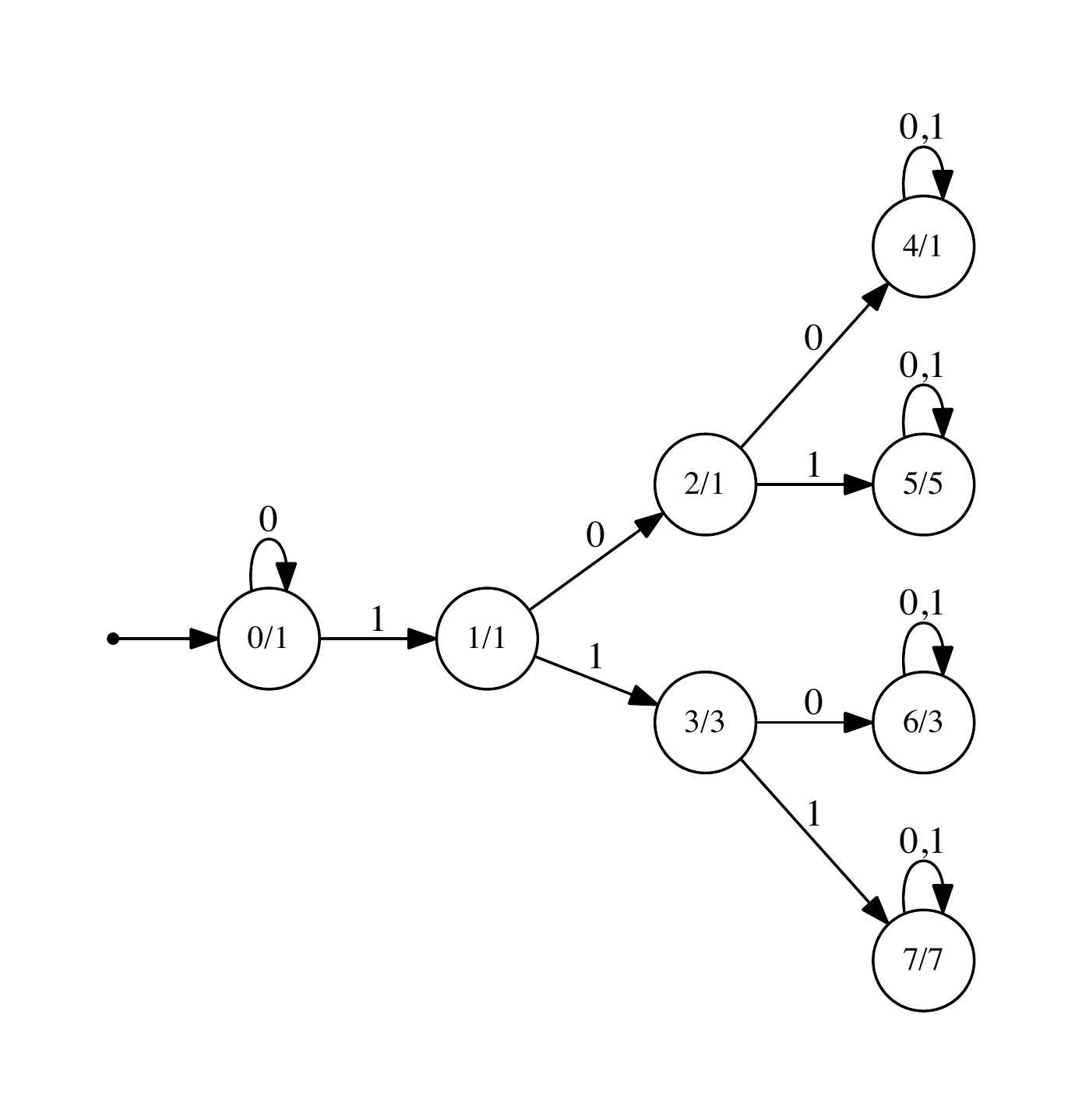}
\end{center}
\caption{DFAO computing $g(n)$ in lsd-first format.}
\label{g88}
\end{figure}
We can then reverse this automaton,
using the following {\tt Walnut} commands:
\begin{verbatim}
reverse G_MOD8 G8:
\end{verbatim}
and get
the $12$-state DFAO \texttt{G\_MOD8} that computes $g(n)$ in msd-first format.  It is illustrated in Figure~\ref{nfac}.
\begin{figure}[H]
\begin{center}
    \includegraphics[width=6in]{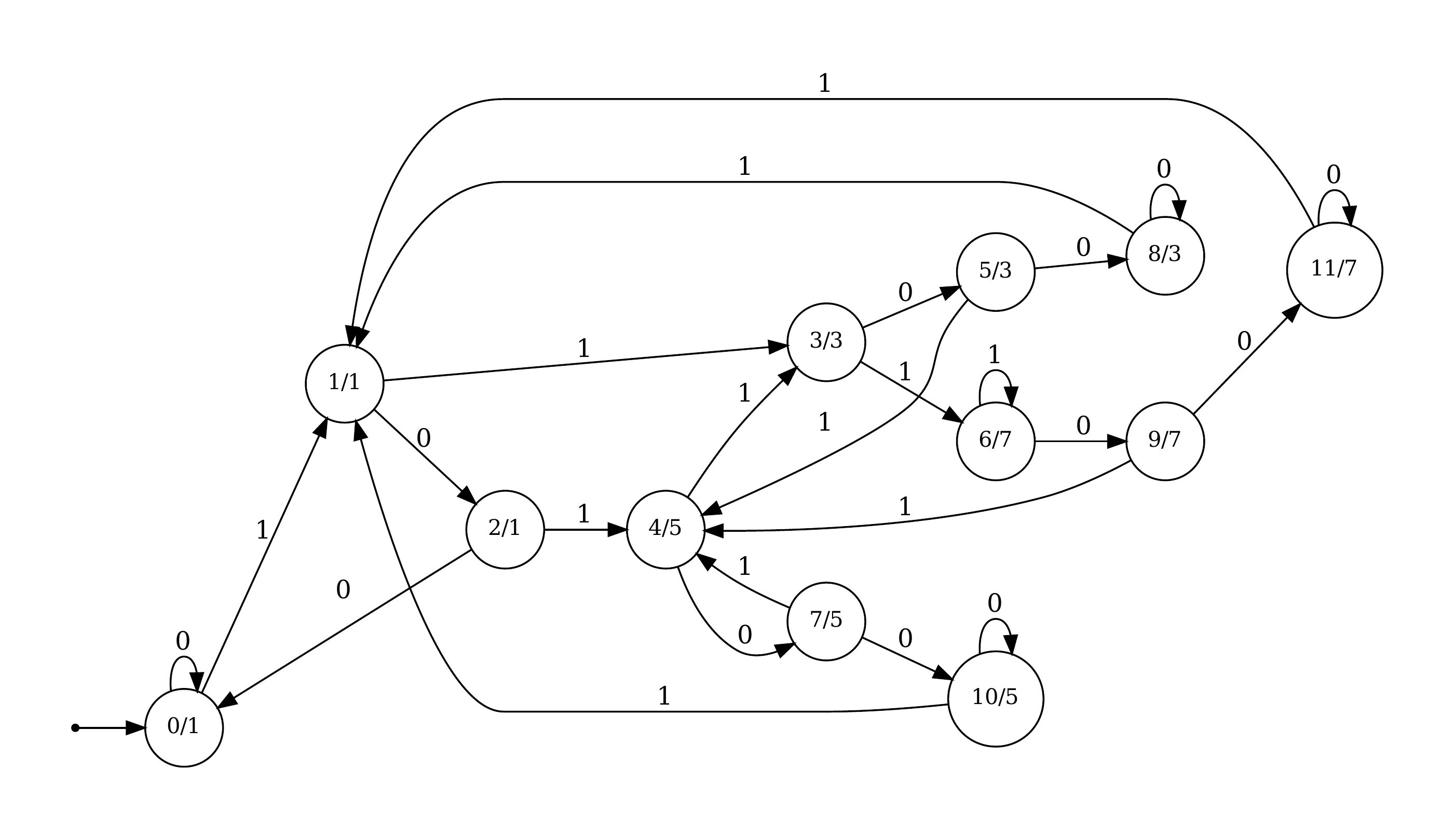}
\end{center}
\caption{DFAO \texttt{G\_MOD8} computing $g(n)$.}
\label{nfac}
\end{figure}

Lastly, we define the transducer \texttt{RUNPROD1357}, which transduces the sequence $g(1) g(2) \cdots g(n)$ into the sequence $h_1 h_2 \cdots h_n$ where the $k$'th term is the running product (mod 8) of the sequence $g(1) g(2) \cdots g(n)$, i.e., $h_k = \prod_{i=1}^k g(i) \bmod 8$. The transducer is illustrated in Figure~\ref{nfac3}, and the \texttt{Walnut} definition is given in the Appendix.
\begin{figure}[H]
\begin{center}
    \includegraphics[width=6in]{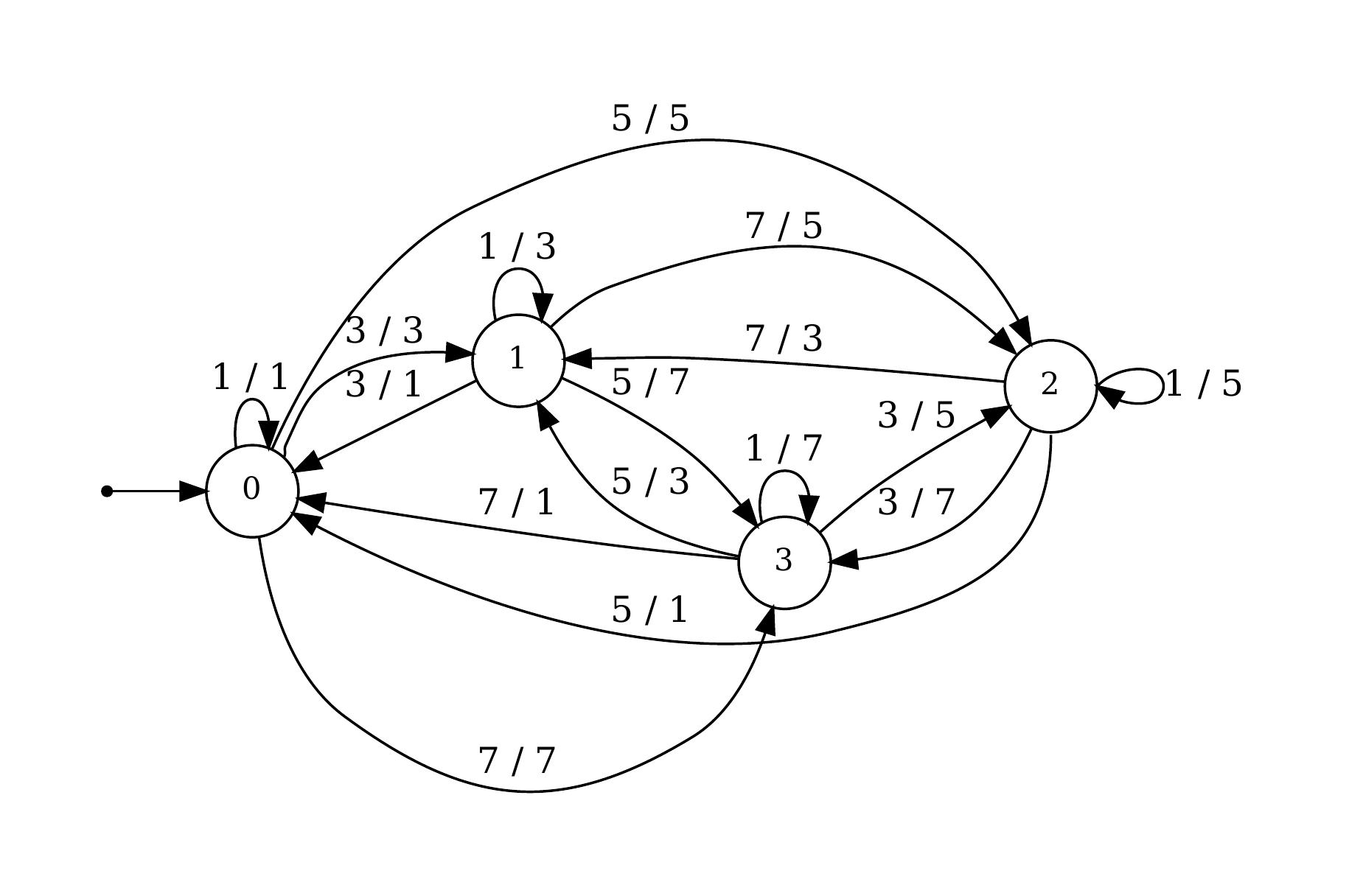}
\end{center}
\caption{Transducer \texttt{RUNPROD1357} for running product mod 8.}
\label{nfac3}
\end{figure}

We can then transduce \texttt{NU\_MOD2} with \texttt{RUNSUM2} to get a resulting DFAO \texttt{NU\_RUNSUM} using the {\tt Walnut}
command
\begin{verbatim}
transduce NU_RUNSUM RUNSUM2 NU_MOD2:
\end{verbatim}
as depicted in Figure~\ref{nfac4}.
\begin{figure}[H]
\begin{center}
    \includegraphics[width=4in]{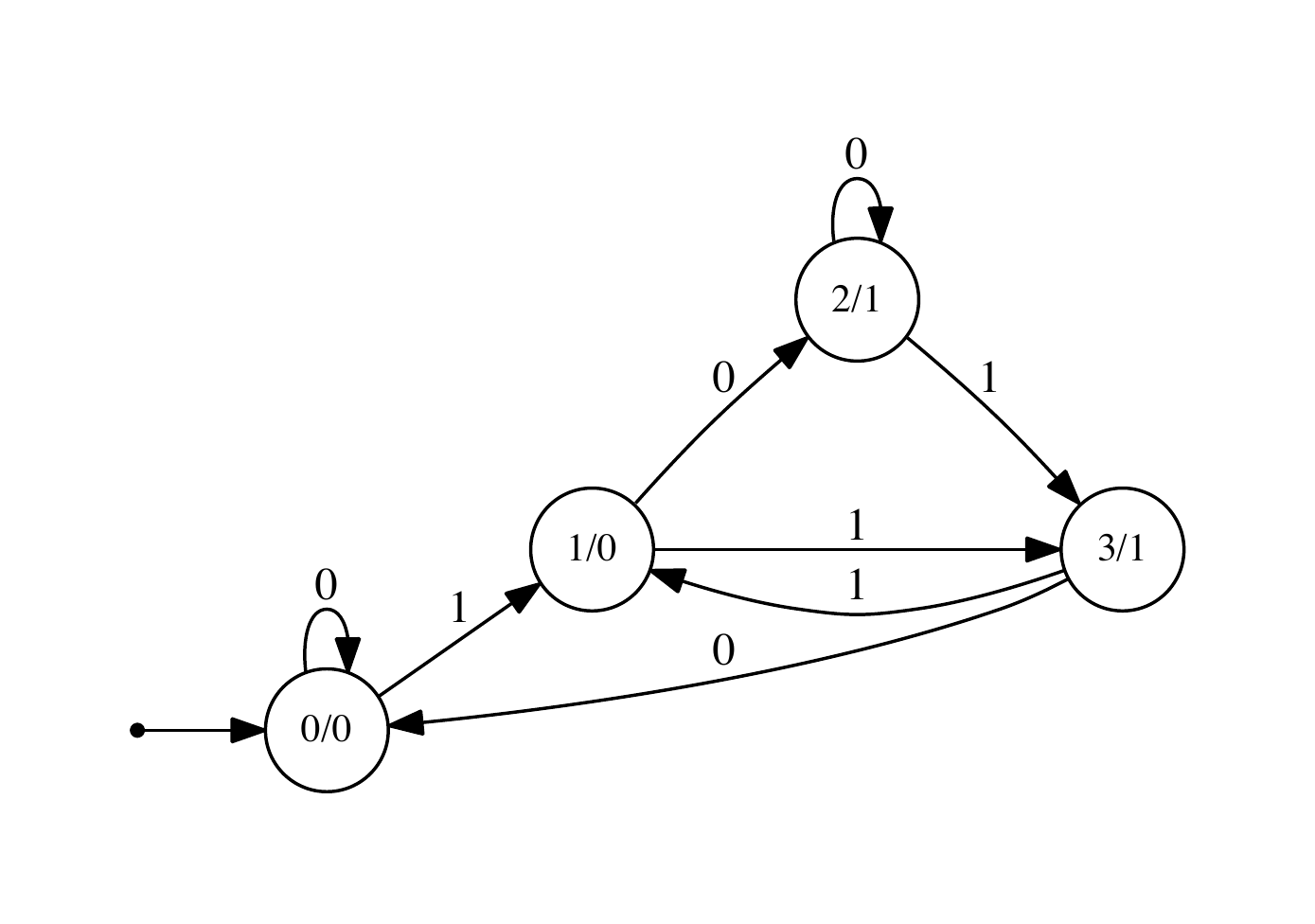}
\end{center}
\caption{DFAO \texttt{NU\_RUNSUM} computing the running sum (mod 2) of \texttt{NU\_MOD2}.}
\label{nfac4}
\end{figure}

We similarly transduce \texttt{G\_MOD8} with \texttt{RUNPROD1357} to get a resulting DFAO \texttt{G\_RUNPROD} (depicted in Figure \ref{fig:G_RUNPROD}) with the following \texttt{Walnut} command:

\begin{verbatim}
transduce G_RUNPROD RUNPROD1357 G_MOD8:
\end{verbatim}
\begin{figure}[H]
\begin{center}
    \includegraphics[width=6.5in]{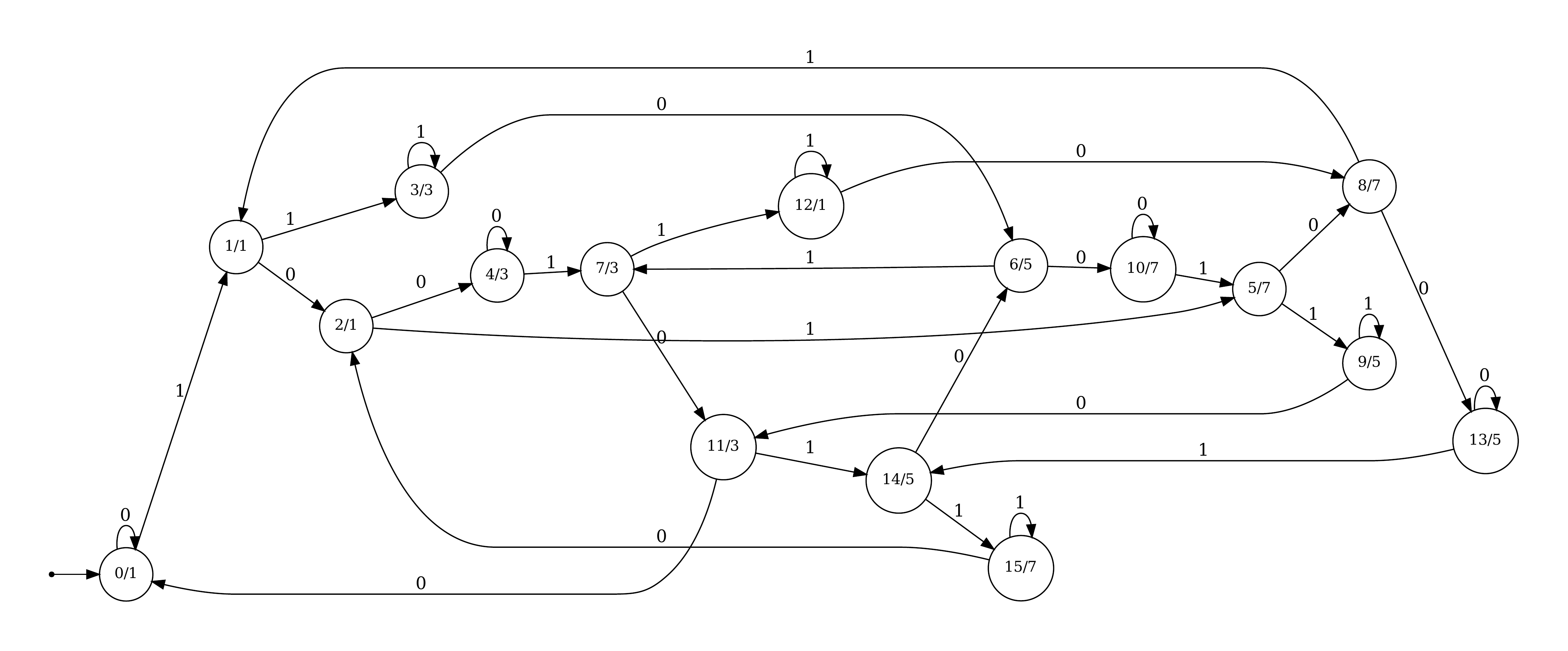}
\end{center}
\caption{DFAO \texttt{G\_RUNPROD} computing the running product (mod 8) of \texttt{G\_MOD8}.} \label{fig:G_RUNPROD}
\end{figure}

Lastly, we generate the final automaton that accepts $S$, using the characterization above that $n \in S$ if and only if $\nu_2(n!) = \sum_{i = 1}^n \nu_2(i) \equiv 1 \pmod 2$ or $g(n!) \equiv \prod_{i=1}^n g(i) \not\equiv 7 \pmod 8$, which can be directly translated into the following \texttt{Walnut} command:
\begin{verbatim}
def nfac_in_s "(NU_RUNSUM[i] = @1) | ~(G_RUNPROD[i] = @7)":
\end{verbatim}

This gives the $32$-state automaton in Figure~\ref{nfac5}.
\begin{figure}[H]
\begin{center}
    \includegraphics[width=6.5in]{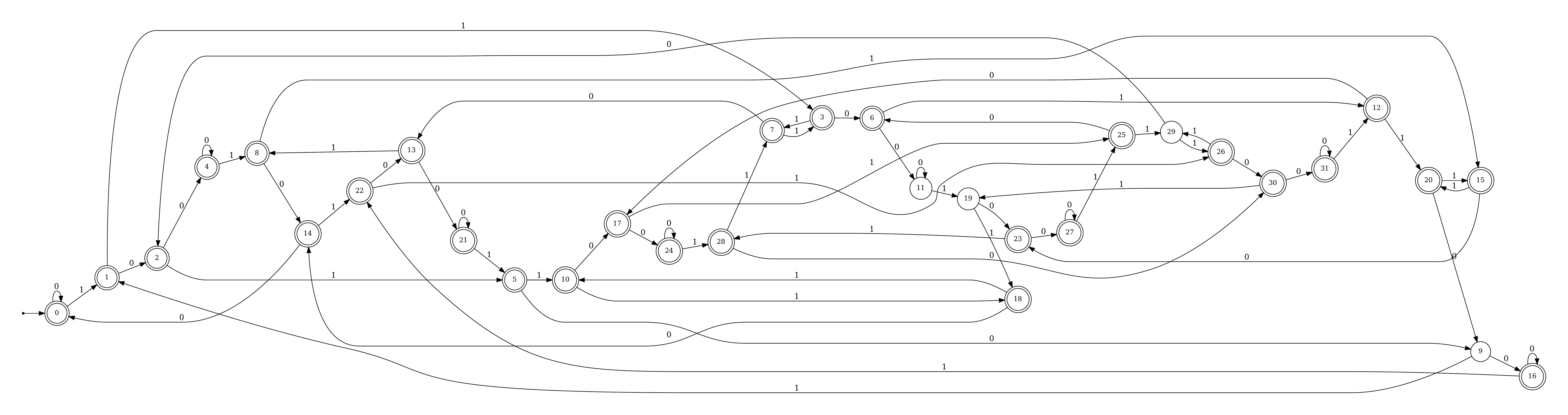}
\end{center}
\caption{Automaton accepting $S = \{ n \suchthat n! \in S_3 \}$.}
\label{nfac5}
\end{figure}

Figure 4 of Burns \cite{Burns:2022}, depicts an automaton accepting $n$, represented in base-$2$ with {\it least-significant digit\/} first, if $n!$ \textit{cannot} be written as a sum of 3 squares represented in the {\tt lsd\_k}. We can now obtain their automaton by reversing and negating the automaton \texttt{nfac\_in\_s} using the following \texttt{Walnut} command:
\begin{verbatim}
def nfac_in_s_rev_neg "~`$nfac_in_s(i)":
\end{verbatim}

This gives us the $35$-state automaton in Figure~\ref{nfac6}, which is
identical to Figure 4 of Burns \cite{Burns:2022}.   We have therefore rederived Burns' result in a simpler and more natural way.
\begin{figure}[H]
\begin{center}
   \quad \includegraphics[width=6.5in]{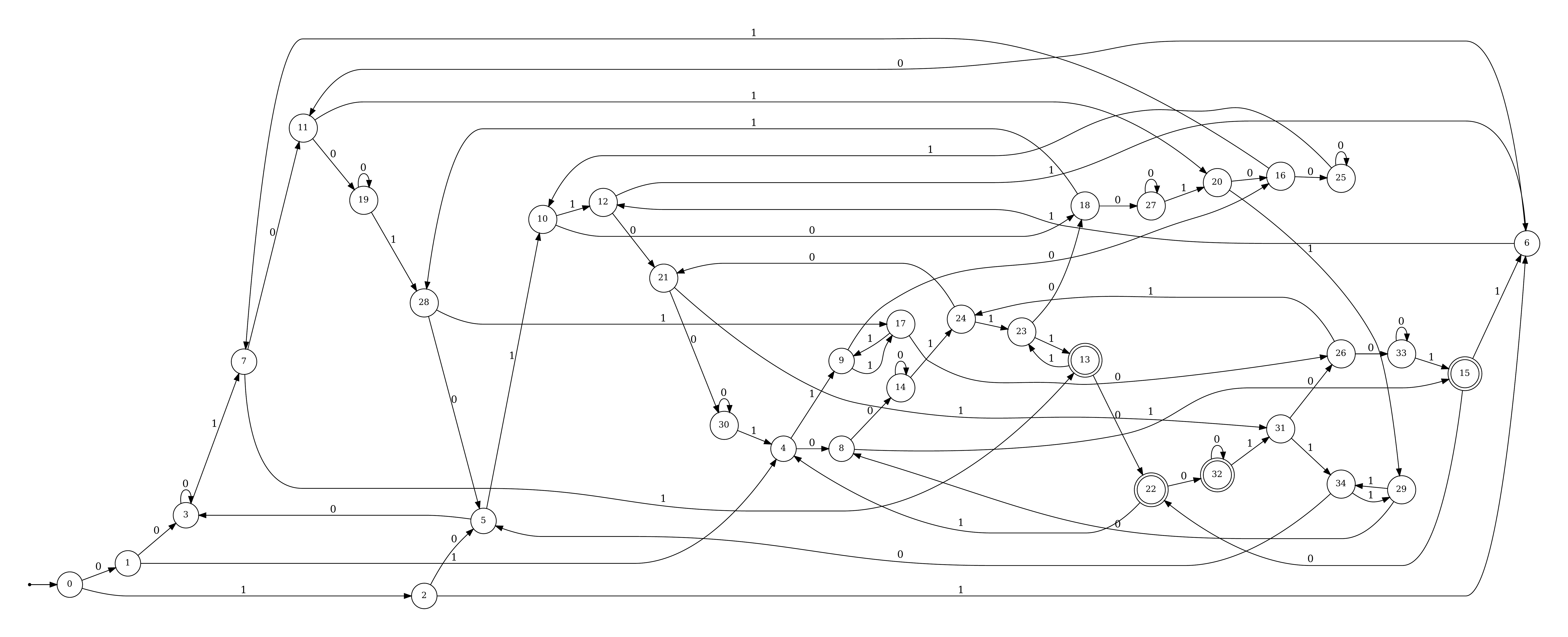}
\end{center}
\caption{LSD-first automaton accepting $n$ if and only if $n!$ is not 
a sum of $3$ squares.}
\label{nfac6}
\end{figure}

\subsection{Overlap-free Dyck words}

In this section we cover another interesting application of transducers, to overlap-free Dyck words.
We need a few definitions.

A word of the form $axaxa$, where $a$ is a single letter, and $x$ is a (possibly empty) string, is called an {\it overlap}.
An example is the French word {\tt entente}.
A string is said to be {\it overlap-free\/}
if it contains no block that is an overlap.
A binary word is a {\it Dyck word} if it represents a string of balanced parentheses, if $1$ is treated as a left paren and $0$ as a right paren.
The {\it nesting level\/} $N(x)$ of a finite Dyck word $x$ is defined as follows: 
$N(\epsilon) = 0$, $N(1x0) = N(x) + 1$ if
$x$ is balanced, and $N(xy) = \max(N(x), N(y))$ if
$x,y$ are both balanced.

In the paper \cite{Mol&Rampersad&Shallit:2023}, the following result is proved:  
there are arbitrarily long overlap-free binary words of nesting level $3$.  Here we can prove the same result using a different, explicit construction involving the Thue-Morse morphism $\mu$, where $\mu(0) = 01$ and
$\mu(1) = 10$:

\begin{theorem}
Define $x_0 = 10$ and $x_{n+1} = \mu(101 \mu(x_n) 101)$ for $n \geq 0$.
Define $y_n = 00x_n 11$ for $n \geq 0$.  Then
$y_n$ is overlap-free for $n \geq 0$ and
$y_n$ has nesting level $3$ for $n \geq 1$.
\end{theorem}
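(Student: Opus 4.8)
The plan is to establish the two claimed properties—overlap-freeness of $y_n$ and the nesting level being exactly $3$—as two separate arguments, the first being the serious one and the second being a routine structural computation.

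For \emph{overlap-freeness}, the natural approach is induction on $n$, but the recursion $x_{n+1} = \mu(101\,\mu(x_n)\,101)$ makes a direct induction on $x_n$ alone awkward, because overlaps can straddle the boundary between $x_n$ and the surrounding material. A cleaner strategy is to work with the full word $y_n = 00\,x_n\,11$ and exploit the well-known fact that the Thue–Morse morphism $\mu$ preserves and reflects overlap-freeness: a word $w$ is overlap-free if and only if $\mu(w)$ is overlap-free, provided one controls the short boundary factors. Concretely, I would first verify the base case $y_0 = 00\,10\,11 = 001011$ is overlap-free by hand. For the inductive step, I would rewrite $y_{n+1} = 00\,\mu(101\,\mu(x_n)\,101)\,11$ and try to express it, up to a bounded prefix/suffix, as $\mu$ applied to a word built from $y_n$, so that the standard lemma ``$\mu(w)$ overlap-free $\iff$ $w$ overlap-free, together with checking that no overlap has period $\leq 2$'' applies. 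The key classical input is that any overlap in an image $\mu(w)$ forces an overlap in $w$ once overlaps of small period are ruled out; ruling those out amounts to checking finitely many short factors.

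\textbf{The main obstacle} is precisely handling the \emph{boundary interactions}: the blocks $101$ inserted around $\mu(x_n)$, and the prepended $00$ and appended $11$, create new junctions where an overlap could appear that does not descend from an overlap in $x_n$. So the heart of the proof will be a finite but careful case analysis showing that every sufficiently long factor spanning these junctions is overlap-free, combined with the inductive guarantee on the interior. In practice the most reliable way to discharge this—entirely in the spirit of this paper—is to \emph{verify it in \texttt{Walnut}}: encode the sequence of words $(x_n)$ (or better, its limiting infinite word, since the $x_n$ are nested prefixes up to the fixed insertions) as an automatic sequence, and then assert overlap-freeness as a first-order statement, which \texttt{Walnut} decides automatically. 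This sidesteps the delicate boundary bookkeeping by reducing it to a decidable predicate over an automatic sequence.

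For the \emph{nesting level} claim, I would argue directly from the definition of $N$. One shows by induction that $x_n$ is a balanced Dyck word and computes $N(x_n)$ from the recursion: writing $x_{n+1} = \mu(101\,\mu(x_n)\,101)$, one tracks how $\mu$ and the wrapping blocks $101$ affect nesting depth. The wrapping $00\,x_n\,11$ in $y_n$ adds exactly one outer level of nesting (two extra matched pairs), so it suffices to show $N(x_n) = 2$ for $n \geq 1$ and then conclude $N(y_n) = 3$. The base computation $N(y_1)=3$ is done by hand, and the inductive step reduces to verifying that the operation $x \mapsto \mu(101\,\mu(x)\,101)$ preserves the nesting level, which follows by decomposing the word into its maximal balanced factors and applying the $N(xy)=\max(N(x),N(y))$ and $N(1x0)=N(x)+1$ rules. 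I expect this part to be straightforward once the balancedness and the maximal-factor decomposition are set up correctly.
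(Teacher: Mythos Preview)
Your high-level instinct---reduce both properties to decidable predicates over an automatic sequence and discharge them in \texttt{Walnut}---matches the paper exactly. But two of your concrete steps do not go through.

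\textbf{Encoding as an automatic sequence.} You write that the $x_n$ are ``nested prefixes up to the fixed insertions'' and that there is a limiting infinite word. This is false: from $x_{n+1} = \mu(101\,\mu(x_n)\,101)$ you get $\mu^2(x_n)$ sitting in the \emph{middle} of $x_{n+1}$, flanked by new material on both sides, so $x_n$ is neither a prefix nor a suffix of $x_{n+1}$, and no direct limit exists. The paper's key device, which you are missing, is to form the \emph{concatenation} $\mathbf{d} = 01\,y_0 y_1 y_2 \cdots$ and then prove that $\mathbf{d}$ is $2$-automatic. This is nontrivial: it is done by showing that $\mathbf{d}$ differs from $\mu^2(\mathbf{d})$ only at an explicit, automatic set of positions, and that this property together with $\mathbf{d}[0]=0$ pins $\mathbf{d}$ down uniquely; one then guesses a DFAO and verifies in \texttt{Walnut} that its output satisfies the same characterisation. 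Only after this step does ``assert overlap-freeness of each $y_n$ as a first-order sentence about $\mathbf{d}$'' make sense.

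\textbf{Nesting level.} Your plan ``show $N(x_n)=2$ and then the wrapping $00\,x_n\,11$ adds one level'' fails for a more basic reason: the $x_n$ are not Dyck words for $n\ge 1$ (under either bracket convention---just compute the running height of $x_1$), so $N(x_n)$ is undefined and the rule $N(0w1)=N(w)+1$ cannot be applied. The $y_n$ are Dyck words, but not via a clean $N(\text{inner})+c$ decomposition. The paper therefore does not argue inductively at all: it builds a finite-state \emph{transducer} that tracks the running depth (saturating at depth~$4$), transduces $\mathbf{d}$, observes that the output never hits~$4$, and then checks in \texttt{Walnut} that between consecutive returns to depth~$0$ (which occur exactly at the boundaries of the $y_n$ inside $\mathbf{d}$) the depth always reaches~$3$. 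This is precisely the application of the transduction machinery that the paper is advertising; a bare inductive structural argument along your lines would need a different, and correct, decomposition of $y_{n+1}$ into Dyck factors.
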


\begin{proof}
We start by defining the infinite binary
word ${\bf d} = 01\, y_0 y_1 y_2 \cdots$.   
We will prove the results about the $y_n$, indirectly, by proving results
about $\bf d$ instead.
We claim that
\begin{itemize}
\item[(a)] $|y_n|= 6 \cdot 4^n$ for $n \geq 0$
\item[(b)] $y_n = {\bf d}[2 \cdot 4^n..2\cdot 4^{n+1}-1]$ for $n \geq 0$
\item[(c)] Define $I = i \in \{2\} \cup \{2\cdot 4^n - 2 \,:\, n\geq 1\} \cup \{2 \cdot 4^n + 1 \,
:\, n \geq 1\}$.  Then $\bf d$ differs from $\mu^2({\bf d})$ precisely
at the indices in $I$, and furthermore $\bf d$ is the unique infinite binary
word starting with $0$ with this property.
\item[(d)] $\bf d$ is generated by the 
$10$-state DFAO in Figure~\ref{fig11}.
\begin{figure}[H]
\begin{center}
\includegraphics[width=5.75in]{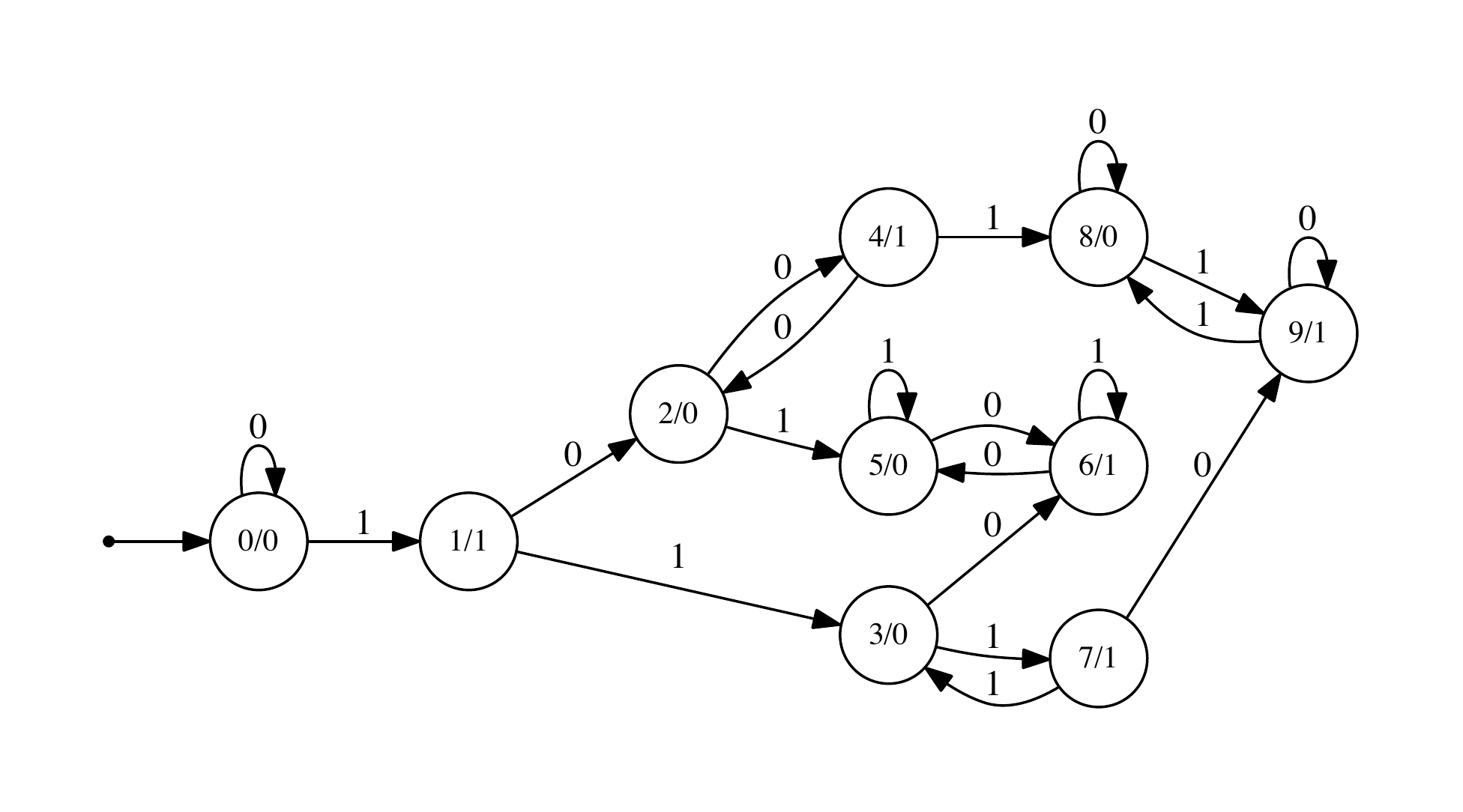}
\end{center}
\caption{DFAO for the sequence $\bf d$.}
\label{fig11}
\end{figure}
\item[(e)] Each $y_n$ is overlap-free, for $n\geq 0$.
\item[(f)] Each $y_n$ has nesting level $3$, for $n \geq 1$.
\end{itemize}

We now prove each of the claims.
\begin{itemize}
\item[(a)]  Clearly $|x_{n+1}| = 4|x_n| + 12$ for $n \geq 0$.
That, together with $|x_0| = 2$, gives $|x_n| = 6 \cdot 4^n - 4$ by
an easy induction.  Hence $|y_n| = 6 \cdot 4^n$.

\item[(b)]  Again, an easy induction shows
$\sum_{0\leq i \leq n} |y_i| = \sum_{0\leq i \leq n} 6 \cdot 4^i = 2 \cdot 4^{n+1} - 2$, which proves the claim.

\item[(c)]   It suffices to show that
${\bf d}[0..2\cdot 4^{i+1} - 1]$ differs from
$\mu^2({\bf d}[0..2\cdot 4^i - 1])$ precisely at
the indices $I \, \cap \, \{0,\ldots, 2\cdot4^{i+1}-1 \}$.
To see this, note that
\begin{align*}
{\bf d}[0..2\cdot 4^{i+1}-1] &= 01\, y_0 \prod_{1 \leq j\leq i} y_j \\
&= (01)(001011) \prod_{1 \leq j \leq i} 00 \, x_j\, 11 \\
&= (01)(001011) \prod_{1 \leq j \leq i} 00100110 \, \mu^2(x_{j-1}) \, 10011011 \\
&= (01)(001011) \prod_{0 \leq j < i} 00100110 \, \mu^2(x_j) \, 10011011.
\end{align*}
On the other hand,
\begin{align*}
\mu^2({\bf d}[0..2\cdot 4^i - 1]) &= \mu^2 (01 \prod_{0\leq j < i} y_j) \\
&= 01101001 \prod_{0 \leq j < i} \mu^2(y_j)  \\
&= (01)(101001) \prod_{0 \leq j < i} \mu^2(00 \, x_j \, 11) \\
&= (01)(101001) \prod_{0 \leq j < i} 01100110 \, \mu^2(x_j) \, 10011001.
\end{align*}
Inspection now shows that 
the differences occur precisely at the indices mentioned.

\item[(d)]  We save the automaton in {\tt Walnut} and then verify
the word generated satisfies the criterion of part (c).
\begin{verbatim}
morphism dd "0->01 1->23 2->45 3->67 4->28 5->65 6->56 7->93 8->89 9->98":
morphism c "0->0 1->1 2->0 3->0 4->1 5->0 6->1 7->1 8->0 9->1":
promote D1 dd:
image D c D1:
# the automaton in the figure for the word d

morphism mu2 "0->0110 1->1001":
image DP mu2 D:
# the automaton for mu^2(d)

reg power4 msd_2 "0*1(00)*":
def differ "D[n]!=DP[n]":
# indices where they differ
eval test "An $differ(n) <=> (Ex $power4(x) & x>=4 & (n=2|n=2*x-2|n=2*x+1))":
# check if criterion satisfied
\end{verbatim}
And {\tt Walnut} returns {\tt TRUE}.

\item[(e)]  
We use the following {\tt Walnut} code.  It asserts that there is 
a segment of $\bf d$ inside a $y_n$ that is an overlap.  
When we run it, {\tt Walnut} returns {\tt FALSE}.
\begin{verbatim}
eval has_overlap "Ex,i,n $power4(x) & i>=2*x & n>=1 & i+2*n<8*x & 
   At,u (t>=i & t<=i+n & u=t+n) => D[t]=D[u]":
\end{verbatim}

\item[(f)]
To check the nesting level, we first make a finite-state transducer that,
on input a word $x$, computes the nesting level if it is $\leq 3$ (and outputs
$4$ if it is $\geq 4$ or $<0$).   It is depicted in Figure~\ref{fig12}.
\begin{figure}[H]
\begin{center}
\includegraphics[width=6in]{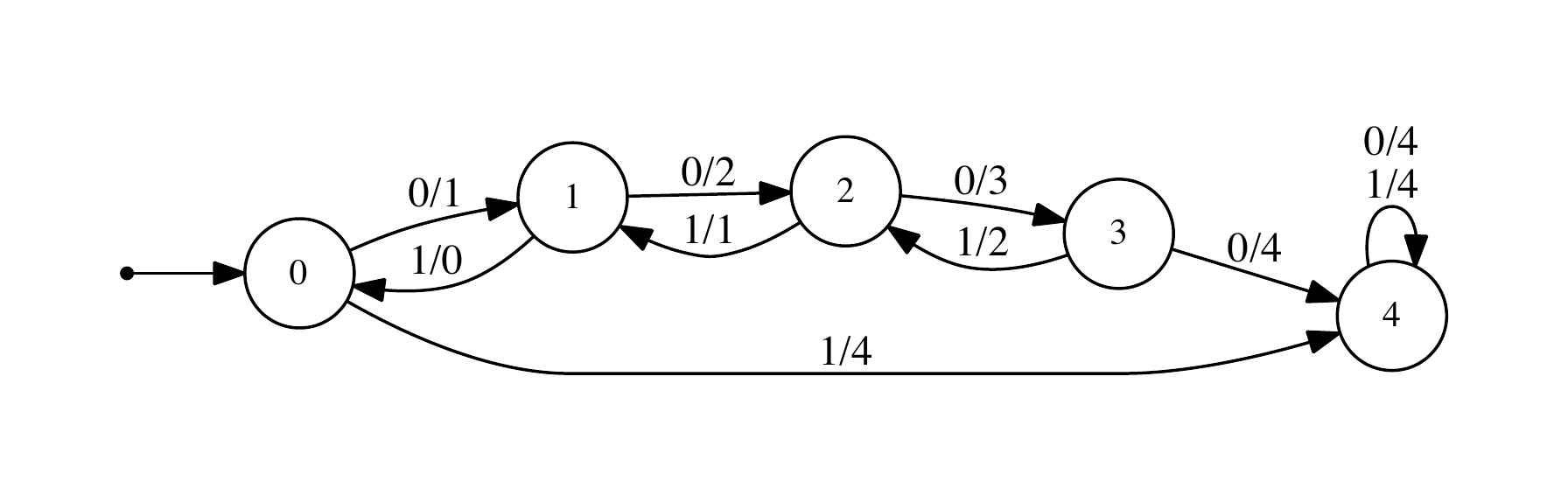}
\end{center}
\caption{Nesting level transducer.}
\label{fig12}
\end{figure}
We now simply run our sequence $\bf d$ through this transducer and
examine the output, as given in
Figure~\ref{fig13}.
\begin{figure}[htb]
\begin{center}
\includegraphics[width=5in]{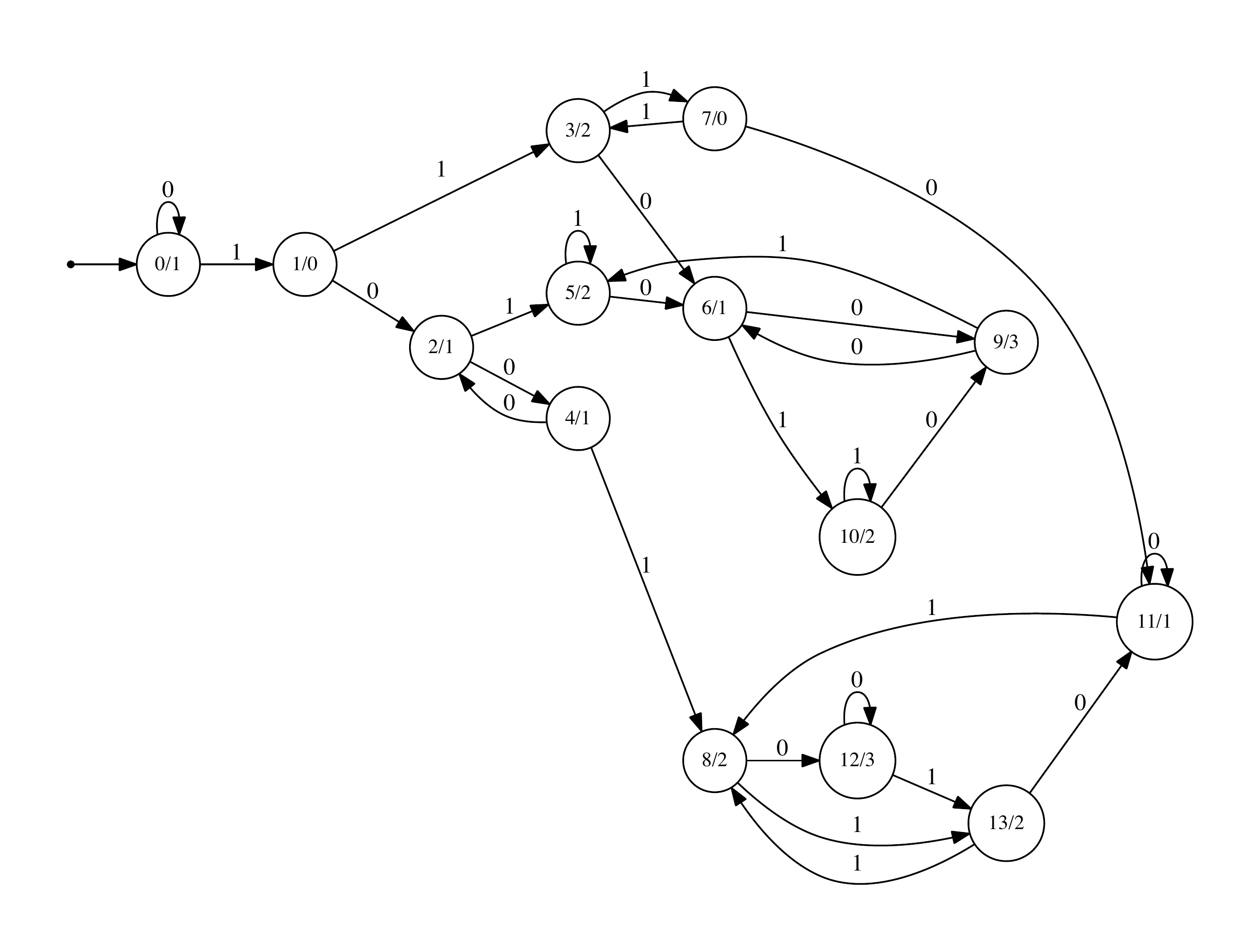}
\end{center}
\caption{Output of transducer applied to $\bf d$.}
\label{fig13}
\end{figure}
It has no outputs of $4$, so the nesting level
of $\bf d$ is $\leq 3$.   We can determine  the positions
of the $0$'s in the transduced sequence as follows:
\begin{verbatim}
transduce DN nest D:
eval zeros "An DN[n]=@0 <=> (Ex $power4(x) & n=2*x-1)":
eval threes "Am,n (m<n & DN[m]=@0 & DN[n]=@0 & m>=7) => 
   Et m<t & t<n & DN[t]=@3":
\end{verbatim}
The first assertion is that $0$'s occur in the transduced sequence
at exactly the positions corresponding to $2\cdot 4^i - 1$.
The second is that between two occurrences of $0$ in the
transduced sequence $\bf dn$ there is always an occurrence of $3$,
provided the first occurrence is at a position $\geq 7$.
{\tt Walnut} returns {\tt TRUE} for both.  Thus each $y_n$, $n \geq 1$,
has nesting level exactly $3$.
\end{itemize}
This completes the proof.
\end{proof}

\subsection{Iterated running sums of Thue-Morse}

Throughout this section, we write $\Sigma_2 = \{0, 1\}$. Recall the iterated running sums of the Thue-Morse sequence as described in Example \ref{exm:tmrunsum}: the transducer that computes the running sum is $T = \inn{V = \{v_0, v_1\}, \Sigma_2, \varphi, v_0, \Sigma_2, \sigma}$, where the transition function \(\varphi \colon V \times \Sigma_2 \to V\) is defined by
$$
\varphi(v_0, 0) = v_0, \quad \varphi(v_0, 1) = v_1, \quad \varphi(v_1, 0) = v_1, \quad \varphi(v_1, 1) = v_0,
$$
and the output function $\sigma: V \times \Sigma_2 \to \Sigma_2$ is defined by 
$$
    \sigma(v_0, 0) = 0, \quad \sigma(v_0, 1) = 1, \quad
    \sigma(v_1, 0) = 1, \quad \sigma(v_1, 1) = 0.
$$
Then we denote the $n$-fold running sum (mod 2) of the Thue-Morse sequence $\mathbf{t}$ by $T^n(\mathbf{t})$, which is defined iteratively: $T^n(\mathbf{t}) = T(T^{n-1}(\mathbf{t}))$. For notational convenience, we write $\mathbf{t}_m = T^m(\mathbf{t})$ to denote the $m$-fold running sum of $\mathbf{t}$, where $\mathbf{t}_0 = \mathbf{t}$.

We can find automata computing the first several running sums in \texttt{Walnut} as follows. Using the \texttt{RUNSUM2} running sum transducer pictured in Figure \ref{fig:RUNSUM2}, we define the automaton \texttt{TSUM1} that computes the first running sum of the Thue-Morse sequence, pictured in Figure \ref{fig:TSUM1}, with the following \texttt{Walnut} command:
\begin{verbatim}
transduce TSUM1 RUNSUM2 T:
\end{verbatim}
\begin{figure}[H]
\begin{center}
    \includegraphics[width=6.5in]{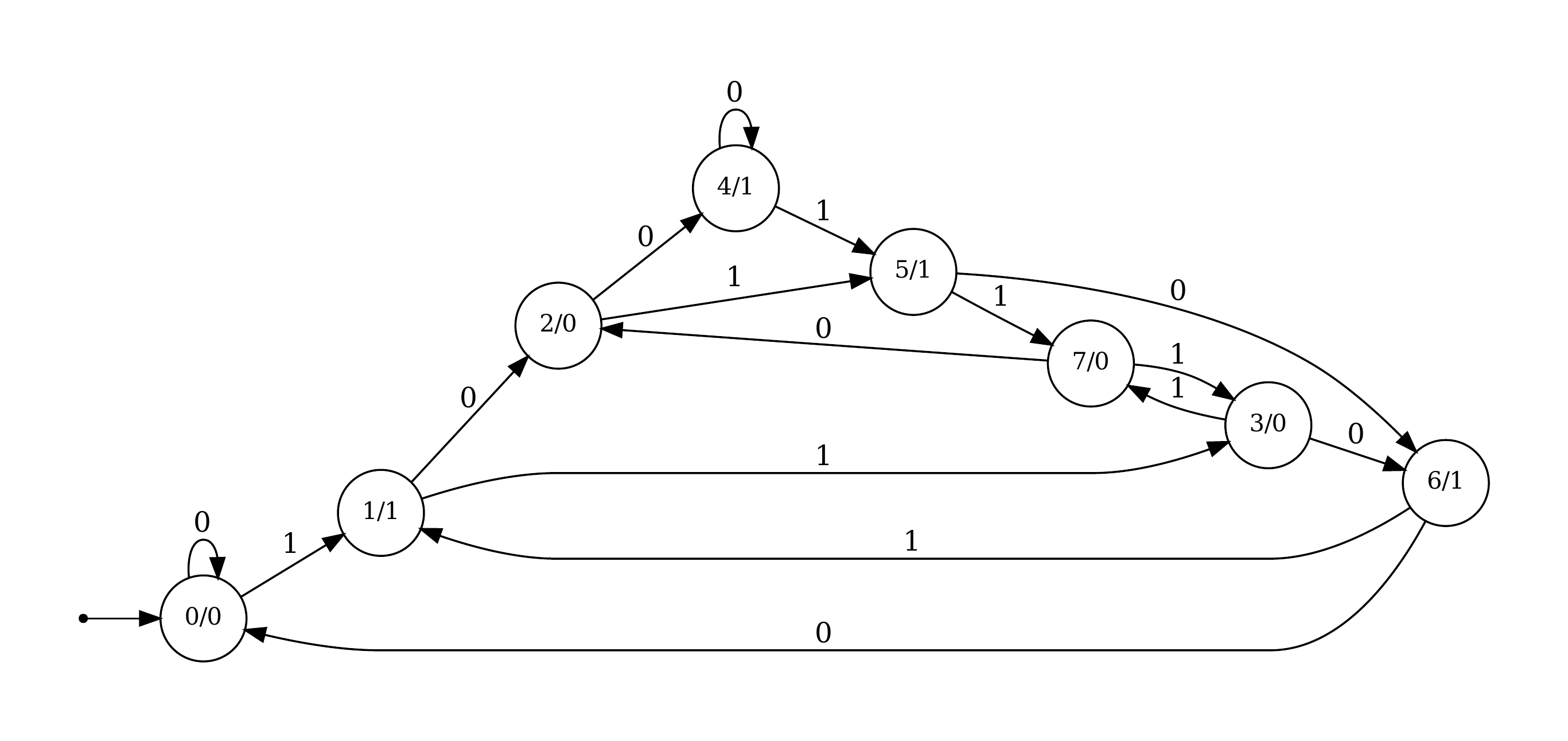}
\end{center}
\caption{DFAO \texttt{TSUM1} computing the running sum of $\mathbf{t}$.} \label{fig:TSUM1}
\end{figure}

To compute the $2$-fold running sum, we apply the \texttt{RUNSUM2} transducer to \texttt{TSUM1} to give the automaton \texttt{TSUM2}, using the following \texttt{Walnut} command:
\begin{verbatim}
transduce TSUM2 RUNSUM2 TSUM1:
\end{verbatim}
\begin{figure}[H]
\begin{center}
    \includegraphics[width=6.5in]{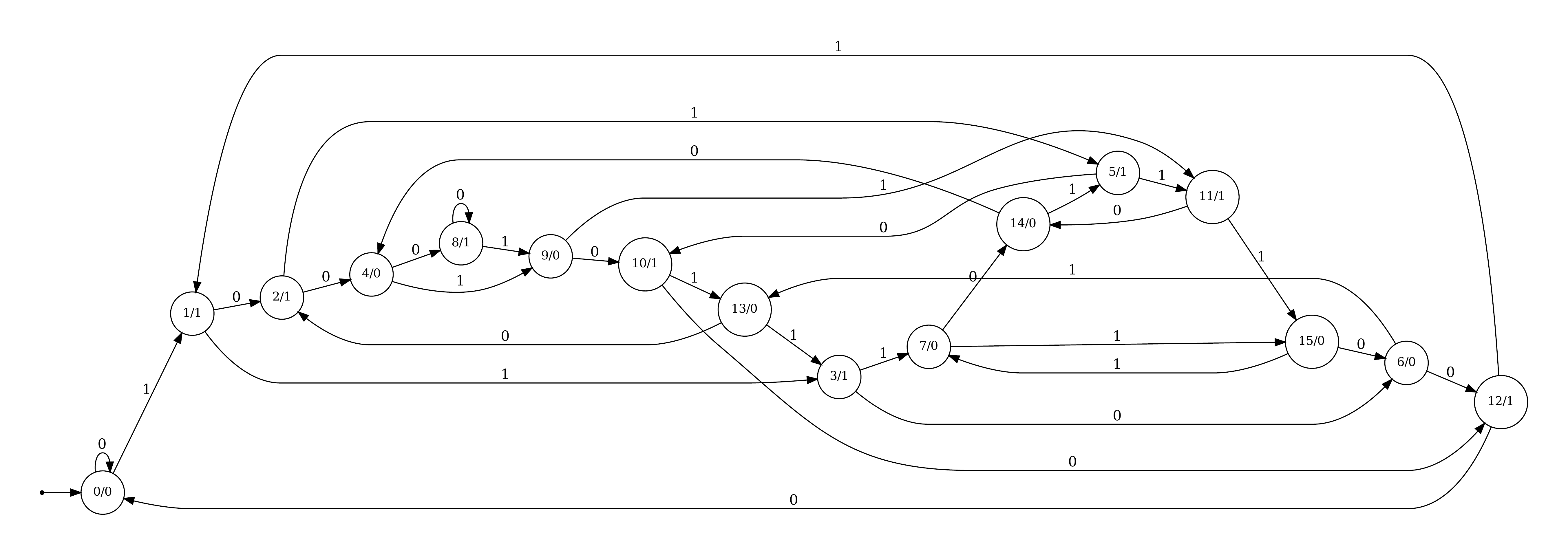}
\end{center}
\caption{DFAO \texttt{TSUM2} computing the $2$-fold running sum of $\mathbf{t}$.} \label{fig:TSUM2}
\end{figure}

We notice that the Thue-Morse automaton has $2$ states, the automaton $\texttt{TSUM1}$ computing the running sum of $\mathbf{t}$ has $8 = 2^3$ states, and the automaton $\texttt{TSUM2}$ has $16 = 2^4$ states. Manually computing the first $34$ running sums as shown above, we get the following sequence of numbers of states of the minimal automata:
\begin{gather*}
    \mathbf{8}, \mathbf{16}, 12, \mathbf{32}, 24, 19, 28, \mathbf{64}, 48, 38, 36, 34, 29, 48, 52, \mathbf{128}, 96, 76, \\
    72, 74, 54, 56, 52, 64, 53, 48, 41, 84, 64, 83, 108, \mathbf{256}, 192, 152, \ldots
\end{gather*}
It is sequence \seqnum{A359228} in the OEIS.
We observe that the minimal automaton computing the $2^n$-fold running sum seems to have $2^{n+3}$ states, which turns out to be true. The remainder of this section is devoted to proving this:

\begin{theorem} \label{thm:tm_min_states}
    For every $n \geq 0$, the minimal DFAO that computes $\mathbf{t}_{2^n}$ has $2^{n+3}$ states.
\end{theorem}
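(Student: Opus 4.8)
The plan is to reduce the $2^n$-fold running sum to a single running sum acting on the high-order part of the index, and then to build and minimize the resulting msd automaton directly.

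\emph{Step 1 (closed form).} First I would work in the ring of formal power series over $\mathbb{F}_2$, identifying a binary sequence $(a_m)$ with $\sum_m a_m X^m$. The running-sum transducer $T$ acts as multiplication by $1/(1+X)$, so $\mathbf{t}_{2^n}$ corresponds to $\hat t(X)/(1+X)^{2^n}$ with $\hat t=\sum_m t_m X^m$. Since $(1+X)^{2^n}=1+X^{2^n}$ over $\mathbb{F}_2$, this equals $\hat t(X)\sum_{j\ge 0}X^{j2^n}$, which gives, writing $N=q\,2^n+r$ with $0\le r<2^n$ and $q=\lfloor N/2^n\rfloor$,
\[
(\mathbf{t}_{2^n})_N \;=\; \bigoplus_{i=0}^{q} t_{i2^n+r} \;=\; s_q \,\oplus\, \big((q+1)\bmod 2\big)\,t_r ,
\]
where $s:=\mathbf{t}_1$ is the single running sum of $\mathbf t$ and I used $t_{i2^n+r}=t_i\oplus t_r$ (valid since $r<2^n$). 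Thus $\mathbf{t}_{2^n}$ is determined by the value of $s$ on the high part $q$, the parity of $q$, and the Thue--Morse value on the low $n$ bits $r$.

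\emph{Step 2 (upper bound).} From the closed form I would construct an msd automaton whose state after reading a prefix of value $V$ is a pair $(\theta,\beta)$, where $\beta\in\{0,1\}^n$ is the window of the last $n$ bits read (padded with leading zeros for short inputs, and eventually equal to $r$) and $\theta$ is the state of the minimal $8$-state automaton for $s$ run on the high part $q$. On each new input bit the bit leaving the window is fed into the $s$-automaton and the window slides. The output at $(\theta,\beta)$ is $s_q\oplus\big((q+1)\bmod 2\big)t_\beta$, and this is well defined because the parity of $q$ is recoverable from $\theta$: the identities $s_{2N}=(N\bmod2)\oplus t_N$ and $s_{2N+1}=\overline{N\bmod2}$ give $s_{2q+1}=\overline{q\bmod2}$, so feeding a single $1$ to the $s$-automaton reads off the parity of $q$. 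Hence $\mathbf{t}_{2^n}$ is computed by at most $8\cdot 2^n=2^{n+3}$ states, and the base case $n=0$ is exactly the $8$-state automaton \texttt{TSUM1} of Figure~\ref{fig:TSUM1}.

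\emph{Step 3 (lower bound, the hard part).} It remains to show these $2^{n+3}$ states are reachable and pairwise inequivalent. Reachability is immediate, since reading the binary of a suitable $q$ followed by any $\beta\in\{0,1\}^n$ realizes $(\theta,\beta)$ and all $8$ states of the $s$-automaton occur. For distinguishability I would use Myhill--Nerode in two regimes. When the windows agree but the $s$-states differ, I append a continuation of length $\ge n$: its first $n$ bits replay the common window (a fixed action on the $s$-automaton), after which a distinguishing suffix inherited from minimality of the $8$-state $s$-automaton separates the outputs, while the low-order $t_r$ contributions remain identical. When the windows differ, I append a short continuation that slides a differing window bit into the low part and fixes the other bits so that the parity factor is $1$; since $t_r$ flips under any single-bit change, the outputs differ. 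The main obstacle is the interference built into the closed form: the output is the XOR of an $s$-term and a $t$-term, so each distinguishing experiment must be engineered so that exactly one term changes while the other is held constant, and accidental cancellation must be excluded uniformly in $n$. I expect this to require the explicit $8$-state automaton for $s$ (minimized by \texttt{Walnut}) to verify the finitely many base cases of the $s$-state step, with the window factor handled purely by the bit-sensitivity of the Thue--Morse sequence.
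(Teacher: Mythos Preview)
Your Steps~1 and~2 are correct and in fact cleaner than the paper's derivation: the generating-function identity $(1+X)^{2^n}=1+X^{2^n}$ over $\mathbb{F}_2$ gives the closed form in one line, whereas the paper goes through a binomial identity (Lemma~\ref{lem:tm_binom}) and then Lucas's theorem (Lemma~\ref{lem:tm_pow2sum}) to reach the same formula. Your upper-bound automaton is sound; the parity-recovery trick via $s_{2q+1}=\overline{q\bmod 2}$ is exactly right.

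Step~3 is where the proposal is genuinely incomplete. Two concrete issues:

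\begin{itemize}
\item In the ``windows agree, $s$-states differ'' case, your argument silently needs that feeding the common window $\beta$ into the $s$-automaton preserves the distinction $\theta\neq\theta'$, i.e.\ that the $8$-state automaton for $\mathbf{t}_1$ is a \emph{permutation} automaton. This happens to be true (each input bit acts bijectively on the $8$ states), but you do not say so, and without it the reduction to minimality of the $s$-automaton fails.
\item In the ``windows differ'' case, your sketch is too vague to carry out. Once $\beta\neq\beta'$, appending bits pushes \emph{different} bits into the $s$-automaton on the two sides, so both the $s$-term and the $t$-term move simultaneously and the cancellation you flag as ``the main obstacle'' really does bite. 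Saying you will ``verify finitely many base cases'' does not localize the problem: the windows have length $n$, so there is no uniform finite base case in your parametrization.
\end{itemize}

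The paper sidesteps this by \emph{re}-parametrizing: instead of your $8\times 2^n$ product $(\theta,\beta)$, it uses the equivalent $2\times 2^{n+2}$ description $(\mathbf{t}[q'],r')$ with $q'=\lfloor k/2^{n+2}\rfloor$ and $r'=k\bmod 2^{n+2}$, absorbing two more low-order bits into the residue. With those two extra bits exposed, the closed form splits into the four explicit pieces of Corollary~\ref{cor:tm_pow2sum_explicit} (values $\mathbf{t}[q']\oplus\mathbf{t}[r']$, $1$, $\mathbf{t}[q']\oplus\mathbf{t}[r'-2\cdot 2^n]$, $0$ on the four quarter-intervals of $r'$), and the Myhill--Nerode distinguishing suffixes can be written down directly by steering $r'$ into a chosen quarter. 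That is the missing idea: move the two ``parity/$s$'' bits from the high side into the low residue so that the case analysis becomes finite and uniform in $n$.
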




We first begin with a few lemmas that help us characterize the running sums. 

\begin{lemma} \label{lem:tm_binom}
For any $k \geq 0$ and $m \geq 1$,
\begin{equation} \label{eqn:tm_binom}
\mathbf{t}_m[k] \equiv \sum_{j=0}^k \binom{m-1 + (k-j)}{k-j} \mathbf{t}[j] \pmod 2.
\end{equation}
\end{lemma}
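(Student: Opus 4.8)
The plan is to prove the identity by induction on $m$, exploiting that the running-sum operator $T$ is linear over $\mathbb{F}_2$ and tracking, for each position $k$, the $\mathbb{F}_2$-coefficient of every input symbol $\mathbf{t}[j]$. Write $\mathbf{t}_m[k] \equiv \sum_{j=0}^k a_m(k,j)\,\mathbf{t}[j] \pmod 2$, so that the content of the lemma is exactly the claim $a_m(k,j) = \binom{m-1+(k-j)}{k-j} \bmod 2$. Since $T$ sends a sequence $(c_i)_{i \geq 0}$ to its running sum $\bigl(\sum_{i=0}^k c_i\bigr)_{k \geq 0}$ and reduction mod $2$ is additive, applying $T$ to $\mathbf{t}_{m-1}$ and interchanging the two finite summations gives the recurrence
\[
a_m(k,j) \equiv \sum_{i=j}^{k} a_{m-1}(i,j) \pmod 2 .
\]
The base case $m=1$ is immediate, since $\mathbf{t}_1[k] = \sum_{j=0}^k \mathbf{t}[j]$ forces $a_1(k,j) = 1 = \binom{k-j}{k-j}$.

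For the inductive step I would substitute the hypothesis $a_{m-1}(i,j) = \binom{m-2+(i-j)}{i-j}$ into the recurrence; reindexing by $e = i - j$ and writing $d = k - j$, the right-hand side becomes $\sum_{e=0}^{d} \binom{m-2+e}{e}$. The key combinatorial fact is then the hockey-stick identity
\[
\sum_{e=0}^{d} \binom{m-2+e}{e} = \binom{m-1+d}{d},
\]
which holds over $\mathbb{Z}$ (hence, after reduction, over $\mathbb{F}_2$) and follows from a one-line induction on $d$ using Pascal's rule. This yields $a_m(k,j) = \binom{m-1+(k-j)}{k-j}$ and closes the induction.

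An equivalent and slicker route is through generating functions in $\mathbb{F}_2[[x]]$. If $A(x) = \sum_{n \geq 0} \mathbf{t}[n]\,x^n$, then forming a running sum corresponds to multiplying by $1/(1-x) = \sum_{i \geq 0} x^i$, so $\mathbf{t}_m$ has generating function $A(x)/(1-x)^m$. Expanding $1/(1-x)^m = \sum_{d \geq 0} \binom{m-1+d}{d}\,x^d$ (the negative-binomial series) and reading off the coefficient of $x^k$ produces the stated convolution at once.

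I do not anticipate a genuine obstacle: this is an elementary induction, or a one-line generating-function computation. The only points needing care are checking that $T$ is $\mathbb{F}_2$-linear so the coefficients $a_m(k,j)$ are well defined and add across inputs, justifying the interchange of the two finite sums, and aligning the base case with the convention $\binom{d}{d} = 1$. The conceptual crux is recognizing the iterated partial-summation coefficients as the negative-binomial numbers $\binom{m-1+d}{d}$.
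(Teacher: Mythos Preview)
Your proof is correct. Both your argument and the paper's are inductions that reduce to a binomial identity, but the structure differs: the paper performs a simultaneous induction on $m$ and $k$, using the two-term recurrence $\mathbf{t}_m[k] \equiv \mathbf{t}_{m-1}[k] + \mathbf{t}_m[k-1]$ and closing with a single application of Pascal's rule, whereas you induct on $m$ alone, carry the full running-sum definition through, and close with the hockey-stick identity $\sum_{e=0}^d \binom{m-2+e}{e} = \binom{m-1+d}{d}$. Your version is arguably tidier, since it avoids the two-variable induction scheme; the paper's version has the advantage that the needed identity is Pascal's rule itself rather than a consequence of it. Your generating-function alternative, observing that $T$ is multiplication by $1/(1-x)$ in $\mathbb{F}_2[[x]]$ and reading off the negative-binomial expansion of $1/(1-x)^m$, is not in the paper and is the cleanest of the three.
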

\begin{proof}
     Our proof will be by induction on $m$ and $k$. From the definition of the running sum
     \begin{equation} \label{eqn:tm_sum_def}
         \mathbf{t}_m[k] \equiv \sum_{j=0}^k \mathbf{t}_{m-1}[j] \pmod 2,
     \end{equation}
    for all $m \geq 1$ and $k \geq 0$, it is easy to check that
    \begin{equation} \label{eqn:tm_sum_recurrence}
        \mathbf{t}_m[k] \equiv \mathbf{t}_{m-1}[k] + \mathbf{t}_m[k-1] \pmod 2
    \end{equation}
    for any $m \geq 1$ and $k \geq 1$, and
    \begin{equation} \label{eqn:tm_sum_k=0}
        \mathbf{t}_m[0] = \mathbf{t}_{m-1}[0]
    \end{equation}
    for any $m \geq 1$. 
    
    Showing that (\ref{eqn:tm_binom}) holds for all $m \geq 1$ with $k = 0$. is clear from (\ref{eqn:tm_sum_k=0}): clearly (\ref{eqn:tm_binom}) holds for $\mathbf{t}_0[0]$ and by (\ref{eqn:tm_sum_k=0}) we have $\mathbf{t}_m[0] = \mathbf{t}[0]$ for all $m \geq 1$. Showing that (\ref{eqn:tm_binom}) holds for $m = 1$ and all $k \geq 0$ is trivial from (\ref{eqn:tm_sum_def}):
    $$
    \mathbf{t}_1[k] \equiv \sum_{j=0}^k \mathbf{t}[j] \equiv \sum_{j=0}^k \binom{1 - 1 + (k-j)}{k-j} \mathbf{t}[j]  \pmod 2.
    $$
    For the inductive step, suppose that (\ref{eqn:tm_binom}) holds for $\mathbf{t}_{m-1}[k]$ and $\mathbf{t}_m[k-1]$ for $m \geq 2$. From (\ref{eqn:tm_sum_recurrence}), we have
    
    \begin{align*}
        \mathbf{t}_m[k] &\equiv \mathbf{t}_{m-1}[k] + \mathbf{t}_m[k-1]  \pmod 2 \\
        &\equiv \sum_{j=0}^k \binom{m-2 + (k-j)}{k-j} \mathbf{t}[j] + \sum_{j=0}^{k-1} \binom{m-1+(k-1-j)}{k-1-j} \mathbf{t}[j]  \pmod 2 \\
        &\equiv \sum_{j=0}^{k-1} \left[ \binom{m-2 + (k-j)}{k-j} + \binom{m-2+(k-j)}{k-1-j}\right] \mathbf{t}[j] + \binom{m-2}{0} \mathbf{t}[k]  \pmod 2 \\
        &\equiv \sum_{j=0}^{k-1} \binom{m-1+(k-j)}{k-j} \mathbf{t}[j] + \binom{m-1+0}{0} \mathbf{t}[k]  \pmod 2 \\
        &\equiv \sum_{j=0}^k \binom{m-1 + (k-j)}{k-j} \mathbf{t}[j]  \pmod 2,
    \end{align*}
    completing the proof of (\ref{eqn:tm_binom}).
\end{proof}

\begin{lemma} \label{lem:tm_pow2sum}
    For any $k \geq 0$ and $n \geq 0$,
    \begin{equation} \label{eqn:tm_pow2sum}
        \mathbf{t}_{2^n}[k] \equiv \mathbf{t}_1 \left[ \, \left\lfloor \frac{k}{2^n} \, \right\rfloor \right] + \left( \left\lfloor \frac{k}{2^n} \right\rfloor + 1 \right) \mathbf{t}[k \bmod 2^n] \pmod 2
    \end{equation}
\end{lemma}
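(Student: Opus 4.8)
The plan is to start from the closed form in Lemma~\ref{lem:tm_binom}, specialized to $m = 2^n$, and then simplify the resulting binomial coefficients modulo $2$ by Lucas' theorem. Setting $m = 2^n$ in \eqref{eqn:tm_binom} gives
$$
\mathbf{t}_{2^n}[k] \equiv \sum_{j=0}^k \binom{2^n - 1 + (k-j)}{k-j} \mathbf{t}[j] \pmod 2,
$$
so everything hinges on evaluating $\binom{2^n - 1 + i}{i} \bmod 2$ for $i = k-j \geq 0$. I would rewrite this as $\binom{2^n-1+i}{2^n-1}$ and apply Lucas' theorem. Writing $i = q\cdot 2^n + s$ with $0 \le s < 2^n$, a short computation shows $2^n - 1 + i = (q+1)2^n + (s-1)$, so the lowest $n$ binary digits of the top argument are all $1$'s (matching those of $2^n-1$) precisely when $s = 0$, while for $s \ge 1$ at least one of those low digits is $0$ where $2^n-1$ has a $1$. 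Hence $\binom{2^n-1+i}{i} \equiv 1 \pmod 2$ if $2^n \mid i$, and $\equiv 0$ otherwise.

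This collapses the sum to those $j$ with $k - j \equiv 0 \pmod{2^n}$, i.e.\ $j \equiv k \pmod{2^n}$. Writing $Q = \lfloor k/2^n\rfloor$ and $r = k \bmod 2^n$, the surviving indices are $j = r + t\cdot 2^n$ for $t = 0, 1, \ldots, Q$, so
$$
\mathbf{t}_{2^n}[k] \equiv \sum_{t=0}^{Q} \mathbf{t}[r + t\cdot 2^n] \pmod 2.
$$
I would finish by invoking the digit-additivity of the Thue-Morse sequence: since $\mathbf{t}[m]$ is the parity of the number of $1$'s in the base-$2$ expansion of $m$, and $0 \le r < 2^n$ forces $r$ into the low $n$ bits and $t$ into the high bits of $r + t\cdot 2^n$ with no carries between the blocks, we have $\mathbf{t}[r + t\cdot 2^n] \equiv \mathbf{t}[t] + \mathbf{t}[r] \pmod 2$. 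Substituting and splitting the sum gives $\sum_{t=0}^{Q}\mathbf{t}[t] + (Q+1)\mathbf{t}[r]$, and by \eqref{eqn:tm_sum_def} with $m=1$ the first sum is exactly $\mathbf{t}_1[Q]$, yielding \eqref{eqn:tm_pow2sum}.

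The main obstacle is the mod-$2$ evaluation of the shifted binomial coefficient $\binom{2^n-1+i}{i}$; once that is pinned down as an indicator of $2^n \mid i$, the rest is bookkeeping. I expect the two places requiring care to be the boundary case $s = 0$ (equivalently $i$ a multiple of $2^n$) in the Lucas computation, and the justification that the digit-additivity $\mathbf{t}[r+t\cdot 2^n]\equiv \mathbf{t}[t]+\mathbf{t}[r]$ is valid, which uses precisely that $r < 2^n$ so that the low and high digit blocks do not interact.
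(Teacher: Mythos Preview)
Your proposal is correct and follows essentially the same route as the paper: specialize Lemma~\ref{lem:tm_binom} to $m=2^n$, use Lucas' theorem to reduce $\binom{2^n-1+i}{i}\bmod 2$ to the indicator of $2^n\mid i$, and then split each surviving $\mathbf{t}[\,\cdot\,]$ via digit-additivity. The only cosmetic difference is your indexing of the surviving terms as $j=r+t\cdot 2^n$ (increasing $t$) versus the paper's $j=k-i\cdot 2^n$ (increasing $i$); your treatment of the Lucas step is in fact more carefully argued than the paper's.
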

\begin{proof}
Setting $m = 2^n$ in (\ref{eqn:tm_binom}) gives us
\begin{equation} \label{eqn:tm_binom_pow2}
\mathbf{t}_{2^n}[k] \equiv \sum_{j=0}^k \binom{2^n-1+(k-j)}{k-j} \mathbf{t}[j] \pmod 2
\end{equation}
By Lucas's theorem, $\binom{2^n - 1 + (k-j)}{k-j} \equiv 1 \pmod 2$ if and only if the binary representation of $k-j$ does not contain any $0$'s in its last $n$ bits, or equivalently, $k-j = i 2^n$ for some $i \in \mathbb{N}$, which we may rewrite as $j = k - i 2^n$, which then allows us to write (\ref{eqn:tm_binom_pow2}) as
\begin{equation*}
    \mathbf{t}_{2^n}[k] \equiv \sum_{i=0}^{\lfloor \frac{k}{2^n} \rfloor} \mathbf{t}[k-i 2^n] \pmod 2 
\end{equation*}
As $\mathbf{t}[k-i 2^n]$ is the sum of the bits in the binary representation of $k-i 2^n$, we split it as
\begin{align*}
    \mathbf{t}_{2^n}[k] &\equiv \sum_{i=0}^{\lfloor \frac{k}{2^n} \rfloor} \left( \mathbf{t} \left[ \left\lfloor \frac{k}{2^n} \right\rfloor - i \right] + t[k \bmod 2^n] \right) \pmod 2 \\
    &\equiv \sum_{i=0}^{\lfloor \frac{k}{2^n} \rfloor} \mathbf{t} \left[ i \right] + \left( \left\lfloor \frac{k}{2^n} \right\rfloor + 1 \right) t[k \bmod 2^n] \pmod 2 \\
    &\equiv \mathbf{t}_1 \left[ \, \left\lfloor \frac{k}{2^n} \, \right\rfloor \right] + \left( \left\lfloor \frac{k}{2^n} \right\rfloor + 1 \right) \mathbf{t}[k \bmod 2^n] \pmod 2.
\end{align*}

\end{proof}

From (\ref{eqn:tm_pow2sum}), we get explicit expressions for $\mathbf{t}_{2^n}$:

\begin{corollary} \label{cor:tm_pow2sum_explicit}
For any $n \geq 0$ and $k \geq 0$, if we let $q = \left\lfloor \frac{k}{2^{n+2}} \right\rfloor$ and $r = k \bmod 2^{n+2}$, then
\begin{equation} \label{eqn:tm_pow2sum_explicit}
    \mathbf{t}_{2^n}[2^{n+2}q + r] \equiv
    \begin{cases}
        \mathbf{t}[q] + \mathbf{t}[r] \pmod 2, & \text{if $0 \leq r \leq 2^n - 1$}; \\
        1, & \text{if $2^n \leq r \leq 2 \cdot 2^n - 1$}; \\
        \mathbf{t}[q] + \mathbf{t}[r - 2 \cdot 2^n] \pmod 2, & \text{if $2 \cdot 2^n \leq r \leq 3 \cdot 2^n - 1$}; \\
        0, & \text{if $3 \cdot 2^n \leq r \leq 2^{n+2}-1$}.
    \end{cases}
\end{equation}

\end{corollary}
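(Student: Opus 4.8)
The plan is to feed the explicit formula of Lemma~\ref{lem:tm_pow2sum} the decomposition $k = 2^{n+2}q + r$ and reduce everything to four simple identities about the single running sum $\mathbf{t}_1$. First I would refine the decomposition by writing $r = 2^n s + u$ with $s = \lfloor r/2^n \rfloor \in \{0,1,2,3\}$ (legitimate since $0 \le r < 2^{n+2} = 4\cdot 2^n$) and $u = r \bmod 2^n$. Then $k = 2^n(4q+s) + u$ with $0 \le u < 2^n$, so $\lfloor k/2^n \rfloor = 4q+s$ and $k \bmod 2^n = u$. Substituting into \eqref{eqn:tm_pow2sum} and using $4q+s+1 \equiv s+1 \pmod 2$ gives
$$\mathbf{t}_{2^n}[k] \equiv \mathbf{t}_1[4q+s] + (s+1)\,\mathbf{t}[u] \pmod 2.$$
The four ranges of $r$ in the statement correspond exactly to $s = 0,1,2,3$, with $u$ equal to $r$, $r-2^n$, $r-2\cdot 2^n$, and $r-3\cdot 2^n$ respectively, and the factor $s+1$ annihilates the $\mathbf{t}[u]$ term precisely when $s$ is odd.

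It therefore remains to prove the four identities
$$\mathbf{t}_1[4q] \equiv \mathbf{t}[q], \quad \mathbf{t}_1[4q+1] \equiv 1, \quad \mathbf{t}_1[4q+2] \equiv \mathbf{t}[q], \quad \mathbf{t}_1[4q+3] \equiv 0 \pmod 2,$$
since plugging each of these into the displayed formula reproduces the four cases of the corollary on the nose. I would establish them using the standard Thue-Morse recursions $\mathbf{t}[2m] = \mathbf{t}[m]$ and $\mathbf{t}[2m+1] = 1 - \mathbf{t}[m]$, which let me compute the four entries of a block: $\mathbf{t}[4q], \mathbf{t}[4q+1], \mathbf{t}[4q+2], \mathbf{t}[4q+3]$ equal $\mathbf{t}[q], 1-\mathbf{t}[q], 1-\mathbf{t}[q], \mathbf{t}[q]$.

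The engine of the argument is the immediate consequence that each block of four consecutive Thue-Morse terms sums to $0 \pmod 2$. Grouping $\mathbf{t}_1[4q+3] = \sum_{j=0}^{4q+3}\mathbf{t}[j]$ into such blocks gives the fourth identity directly (for all $q \ge 0$). The first identity then follows for $q \ge 1$ from $\mathbf{t}_1[4q] = \mathbf{t}_1[4(q-1)+3] + \mathbf{t}[4q] \equiv 0 + \mathbf{t}[q]$, with the base case $q=0$ checked by hand; and the second and third follow by adding one term at a time, using $\mathbf{t}[4q+1] = \mathbf{t}[4q+2] = 1-\mathbf{t}[q]$.

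The only delicate point is the bookkeeping in the opening step: correctly reading off $\lfloor k/2^n \rfloor = 4q+s$ and $k \bmod 2^n = u$ from the nested decomposition $k = 2^{n+2}q + 2^n s + u$, and matching the four values of $s$ to the four $r$-ranges and their shifted arguments $u$. Once the reduced formula $\mathbf{t}_{2^n}[k] \equiv \mathbf{t}_1[4q+s] + (s+1)\mathbf{t}[u]$ is in hand, the block-sum computation is routine, so I expect no genuine obstacle beyond careful index tracking.
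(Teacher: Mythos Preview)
Your proposal is correct and is exactly the computation the paper has in mind: the paper states the corollary as an immediate consequence of Lemma~\ref{lem:tm_pow2sum} without giving any details, and your decomposition $k = 2^{n+2}q + 2^n s + u$ together with the four block identities $\mathbf{t}_1[4q+s]$ for $s\in\{0,1,2,3\}$ is precisely how one unpacks that lemma into the four cases. There is nothing to add.
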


We can also write this in a more elegant way:

\begin{corollary} \label{lem:tm_sum_as_morphism}
    For every $n \geq 0$, we have 
    $$
        \mathbf{t}_{2^n} = (g_n \circ h^\omega)(0),
    $$
    where $h \colon \Sigma_2^* \to \Sigma_2^*$ is the Thue-Morse morphism defined by $h(0) = 01, h(1) = 10$, and $g_n \colon \Sigma_2^* \to \Sigma_2^*$ is the $2^{n+2}$-uniform morphism defined by 
    $$
        g_n(x) = h^n(x) \, 1^{2^n} \, h^n(x) \, 0^{2^n}
    $$
    for $x \in \Sigma_2$.
\end{corollary}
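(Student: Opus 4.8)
The plan is to deduce the statement directly from the explicit case formula of Corollary~\ref{cor:tm_pow2sum_explicit}, by reading that formula as the description of a $2^{n+2}$-uniform morphism applied to the Thue--Morse sequence. Since $h^\omega(0) = \mathbf{t}$, we have $(g_n \circ h^\omega)(0) = g_n(\mathbf{t}) = g_n(\mathbf{t}[0])\,g_n(\mathbf{t}[1])\,g_n(\mathbf{t}[2]) \cdots$, and because each block $g_n(\mathbf{t}[q])$ has length $2^{n+2}$, the symbol at position $k = 2^{n+2}q + r$ (with $q = \lfloor k/2^{n+2}\rfloor$ and $r = k \bmod 2^{n+2}$) is exactly the $r$-th letter of $g_n(\mathbf{t}[q])$. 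So it suffices to verify, for each of the four sub-intervals of $r$ partitioning $\{0,\ldots,2^{n+2}-1\}$ built into $g_n$, that this $r$-th letter agrees with the corresponding case of~\eqref{eqn:tm_pow2sum_explicit}.

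First I would record the one nontrivial ingredient: the identity
$$ h^n(b)[i] \equiv \mathbf{t}[i] + b \pmod 2 \qquad (b \in \Sigma_2,\ 0 \le i \le 2^n - 1). $$
This holds because $\mathbf{t} = h^\omega(0)$ begins with $h^n(0)$, so $h^n(0) = \mathbf{t}[0..2^n-1]$, while $h^n(1) = \overline{h^n(0)}$ is its bitwise complement (an immediate induction on $n$ from $h(1) = \overline{h(0)}$ and the fact that $h$ commutes with complementation). Thus $h^n(b)$ equals $\mathbf{t}[0..2^n-1]$ when $b = 0$ and its complement when $b = 1$, which is precisely the stated congruence.

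With this in hand the verification is a routine four-way case analysis on $r$. For $0 \le r \le 2^n - 1$ the $r$-th letter of $g_n(\mathbf{t}[q])$ lies in the first $h^n(\mathbf{t}[q])$ factor, so it equals $h^n(\mathbf{t}[q])[r] \equiv \mathbf{t}[r] + \mathbf{t}[q] \pmod 2$, matching the first case of~\eqref{eqn:tm_pow2sum_explicit}. For $2^n \le r \le 2\cdot 2^n - 1$ it lies in the $1^{2^n}$ factor and equals $1$; for $2\cdot 2^n \le r \le 3\cdot 2^n-1$ it lies in the second $h^n(\mathbf{t}[q])$ factor at internal index $r - 2\cdot 2^n$, giving $\mathbf{t}[r - 2\cdot 2^n] + \mathbf{t}[q] \pmod 2$; and for $3\cdot 2^n \le r \le 2^{n+2}-1$ it lies in the $0^{2^n}$ factor and equals $0$. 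These four outcomes coincide term by term with the four cases of Corollary~\ref{cor:tm_pow2sum_explicit}, so $g_n(\mathbf{t})[k] = \mathbf{t}_{2^n}[k]$ for all $k$, proving the identity.

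The only real content, and hence the step to get right, is the complementation identity for $h^n$ together with the bookkeeping that aligns the internal index within each block with the residue $r$; everything else is a direct comparison with the already-established case formula. There is essentially no analytic obstacle here, since the substantive work was carried out in proving Corollary~\ref{cor:tm_pow2sum_explicit}, so this corollary is best viewed as a repackaging of that formula into morphic form.
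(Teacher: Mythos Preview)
Your proof is correct and follows exactly the approach the paper intends: the paper states this result as an immediate corollary of Corollary~\ref{cor:tm_pow2sum_explicit} without giving a separate proof, and your argument supplies precisely the direct verification that the four blocks of $g_n(\mathbf{t}[q])$ match the four cases of~\eqref{eqn:tm_pow2sum_explicit}. The only thing you add beyond the obvious unpacking is the clean justification of $h^n(b)[i] \equiv \mathbf{t}[i] + b \pmod 2$, which is the right lemma to isolate.
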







We now present the main result of this section.

\begin{proof}[Proof of \textbf{Theorem \ref{thm:tm_min_states}}]
The idea of this proof is to use the Myhill-Nerode theorem \cite[Thm.~3.4]{Hopcroft&Ullman:1979} by showing that there are exactly $2^{n+3}$ equivalence classes of the Myhill-Nerode congruence. Define the language
$$L = \{0^* (k)_2 : k \in \mathbb{N}, \mathbf{t}_{2^n}[k] = 1\}$$
to be the set of all binary representations of $k \in \mathbb{N}$ such that $\mathbf{t}_{2^n}[k] = 1$, allowing for arbitrarily many leading zeroes. Define the \textit{Myhill-Nerode congruence} $\sim$ as follows: $x \sim y$ if and only if we have that $xz \in L \iff yz \in L$ for all $z \in \Sigma_2^*$. This is an equivalence relation, and the Myhill-Nerode theorem (Theorem 4.1.8 of \cite{Allouche&Shallit:2003}) gives us that $L$ is a regular language if and only if there are finitely many equivalence classes for $\sim$, and furthermore the number of equivalence classes of $\sim$ is exactly the number of states of a minimal DFA that recognizes $L$. This minimal DFA can then be turned into a minimal DFAO computing $L$ by defining a coding $\lambda \colon Q^* \to \Sigma_2^*$ by $\lambda(q) = 1$ if $q \in F$ and $\lambda(q) = 0$ otherwise for all $q \in Q$.

First note that clearly $(k)_2 \sim 0^m(k)_2$ for any $k, m \geq 0$.
Let $k, k' \in \mathbb{N}$, and let $q = \left\lfloor \frac{k}{2^{n+2}} \right\rfloor$, $r = k \bmod 2^{n+2}$, $q' = \left\lfloor \frac{k'}{2^{n+2}} \right\rfloor$ and $r' = k' \bmod 2^{n+2}$, so $k = 2^{n+2} q + r$ and $k' = 2^{n+2} q' + r'$. 

We  prove that $(k)_2 \sim (k')_2$ if and only if $\mathbf{t}[q] = \mathbf{t}[q']$ and $r = r'$. 

The ``only if'' direction follows immediately from (\ref{eqn:tm_pow2sum_explicit}).

We prove the ``if'' direction contrapositively. Recall that $\mathbf{t}[m]$ is the parity of the $1$s in $(m)_2$ for any $m \in \mathbb N$, and note that $|(r)_2| = |(r')_2| = n+2$. The whole argument will consist of appending strings to both $(k)_2$ and $(k')_2$ to obtain new $k_{\rm new}$ and $k'_{\rm new}$ respectively, and then redefining $r_{\rm new}, r'_{\rm new}, q_{\rm new}$, and $q'_{\rm new}$ accordingly. We then work with $k_{\rm new}$ and $k'_{\rm new}$ instead of $k$ and $k'$. If we show that $(k_{\rm new})_2 \not\sim (k'_{\rm new})_2$, then this will imply $(k)_2 \not\sim (k')_2$ by the definition of the congruence $\sim$.
Here is an example of the appending process for $n=2$:
\begin{gather*}
    (k)_2 = \underbrace{110111}_{q} \underbrace{0101}_{r} \xrightarrow{\text{Append $00$}} \underbrace{11011101}_{q_{\rm new}} \underbrace{01 \mathbf{00}}_{r_{\rm new}} = (k_{\rm new})_2
\end{gather*}

We now proceed with the cases.

\underline{Case 1: $\mathbf{t}[q] \neq \mathbf{t}[q']$ and $r = r'$.} 
We append $0^{n+2}$ to both $(k)_2$ and $(k')_2$ to get $(k_{\rm new})_2 = (k)_2 0^{n+2}$ and $(k'_{\rm new})_2 = (k')_2 0^{n+2}$. This gives that $r_{\rm new} = r'_{\rm new} = 0$ and $\mathbf{t}[q_{\rm new}] \neq \mathbf{t}[q'_{\rm new}]$ (the parity of $(q)_2$ and $(q')_2$ is unaltered, as $(r)_2 = (r')_2$). But then by (\ref{eqn:tm_pow2sum_explicit}) we get $\mathbf{t}_{2^n}[k_{\rm new}] \neq \mathbf{t}_{2^n}[k'_{\rm new}]$, so $(k_{\rm new})_2 \not\sim (k'_{\rm new})_2$.

For the remainder of the proof, we will imply the ``new'' subscript whenever we append anything to $(k)_2$ and $(k')_2$.

\underline{Case 2: $\mathbf{t}[q] = \mathbf{t}[q']$ and $r \neq r'$.}
If $(r)_2$ and $(r')_2$ differ on an odd number of bits, then we can append $0^{n+2}$ to both $(k)_2$ and $(k')_2$ and get $r = r' = 0$ and $\mathbf{t}[q] \neq \mathbf{t}[q']$, so Case 1 applies.

Throughout the remainder of this case, assume $(r)_2$ and $(r')_2$ differ on an even number of bits. By appending $0^m$ for a sufficient $m$, we can further assume that $(r)_2$ and $(r')_2$ differ in exactly two bits, and note that the equality $\mathbf{t}[q] = \mathbf{t}[q']$ still holds, as we carry over an even number of differing bits of $(r)_2$ and $(r')_2$ into $(q)_2$ and $(q')_2$. We can also ensure that $(r)_2$ and $(r')_2$ differ in the first bit by appending some string to both $(k)_2$ and $(k')_2$, and without loss of generality we can assume that $(r)_2$ starts with $0$ and $(r')_2$ starts with $1$. We now consider some subcases.

If $(r)_2$ starts with $01$ and $(r')_2$ starts with $11$ then we have $2^n \leq r \leq 2 \cdot 2^{n}-1$ and $3 \cdot 2^n \leq r' \leq 2^{n+2} - 1$, so $\mathbf{t}_{2^n}[k] = 1 \neq 0 = \mathbf{t}_{2^n}[k']$, so $(k)_2 \not\sim (k')_2$.

If $(r)_2$ starts with $00$ and $(r')_2$ starts with $11$, then by appending $0$ to $(k)_2$ and $(k')_2$ we get that $(r)_2$ starts with $00$, $(r')_2$ starts with $10$, and $\mathbf{t}[q] \neq \mathbf{t}[q']$ (as the leading $1$ in $(r')_2$ flips $\mathbf{t}[q']$).  As the remaining $n$ bits of $(r)_2$ and $(r')_2$ are the same (as we originally assumed that $(r)_2$ and $(r')_2$ differ in two bits, so they now differ in just the first!), we see that $\mathbf{t}[r-2 \cdot 2^{n}] = \mathbf{t}[r']$, and as $0 \leq r \leq 2^{n-1}$ and $2 \cdot 2^n \leq r' \leq 3 \cdot 2^n - 1$ we have $\mathbf{t}_{2^n}[k] = \mathbf{t}[q] + \mathbf{t}[r-2 \cdot 2^{n}] \neq \mathbf{t}[q'] + \mathbf{t}[r'] = \mathbf{t}[k']$, so $(k)_2 \not\sim (k')_2$.

If $n > 0$ and $(r)_2$ starts with $010$ and $(r')_2$ starts with $100$, then by appending $0$ to $(k)_2$ and $(k')_2$ we get that $(r)_2$ starts with $10$ and $(r')_2$ starts with $00$, and $\mathbf{t}[q] \neq \mathbf{t}[q']$, so we can repeat the same argument as in the prior paragraph.

If $n > 0$ and $(r)_2$ starts with $011$ and $(r')_2$ starts with $101$, then we append $0$ to $(k)_2$ and $(k')_2$ and get that $(r)_2$ starts with $11$ and $(r')_2$ starts with $01$, which gives $3 \cdot 2^n \leq r \leq 2^{n+2} - 1$ and $2^n \leq r' \leq 2 \cdot 2^n - 1$, so $\mathbf{t}_{2^n}[k] = 1$ and $\mathbf{t}_{2^n}[k'] = 0$, thus $(k)_2 \not\sim (k')_2$. 

When $n = 0$ and $(r)_2 = 01$ and $(r')_2 = 10$, we append $0$ to $(k)_2$ and $(k')_2$ to get $(r)_2 = 10$ and $(r')_2 = 00$, which gives us again that $\mathbf{t}[q] \neq \mathbf{t}[q']$ and $\mathbf{t}[r-2 \cdot 2^{n}] = \mathbf{t}[r'] = 0$, and as $0 \leq r \leq 2^{n-1}$ and $2 \cdot 2^n \leq r' \leq 3 \cdot 2^n - 1$ we have $\mathbf{t}_{2^n}[k] = \mathbf{t}[q] + \mathbf{t}[r-2 \cdot 2^{n}] \neq \mathbf{t}[q'] + \mathbf{t}[r'] = \mathbf{t}[k']$, so $(k)_2 \not\sim (k')_2$. 

Lastly, if $(r)_2$ starts with $00$ and $(r')_2$ starts with $10$, then as we assumed that $(r)_2$ and $(r')_2$ differ in two bits, the last $n$ bits of $(r)_2$ and $(r')_2$ differ in exactly one bit, hence $\mathbf{t}[r] \neq \mathbf{t}[r' - 2 \cdot 2^{n}]$, and as $0 \leq r \leq 2^n-1$ and $2 \cdot 2^n \leq r' \leq 3 \cdot 2^n - 1$ we have $\mathbf{t}_{2^n}[k] \neq \mathbf{t}_{2^n}[k']$ by (\ref{eqn:tm_pow2sum_explicit}), so $(k)_2 \not\sim (k')_2$.

\underline{Case 3: $t[q] \neq t[q']$ and $r \neq r'$.} If $(r)_2$ and $(r')_2$ differ in more than one bit, we can append $0^m$ for a sufficient $m$ to both $(k)_2$ and $(k')_2$ to move only the first pair of differing bits of $(r)_2$ and $(r')_2$ into $(q)_2$ and $(q')_2$, which will give $\mathbf{t}[q] = \mathbf{t}[q']$ and $r \neq r'$, so Case 2 applies.

If $(r)_2$ and $(r')_2$ differ in one bit, we can append $0^m$ for a sufficient $m$ to both $(k)_2$ and $(k')_2$ to move the differing bits to the front of $(r)_2$ and $(r')_2$, so we can assume without loss of generality that $r$ starts with $0$ and $r'$ starts with $1$. As the second bits must be equal, we can just check both cases. If $r$ starts with $01$ and $r'$ starts with $11$, then clearly $\mathbf{t}_{2^n}[k] = 1$ and $\mathbf{t}_{2^n}[k'] = 0$ by (\ref{eqn:tm_pow2sum_explicit}). Otherwise, if $r$ starts with $00$ and $r'$ starts with $10$, then as all the other bits of $r$ and $r'$ must be equal we have $\mathbf{t}[r] = \mathbf{t}[r' - 2 \cdot 2^n]$, but as $t[q] \neq t[q']$, by (\ref{eqn:tm_pow2sum_explicit}) we get $\mathbf{t}_{2^n}[k] \neq \mathbf{t}_{2^n}[k']$, so $(k)_2 \not\sim (k')_2$ in both cases, completing the proof of Case 3.

These three cases complete the ``if'' direction. Thus, we have established that $(k)_2 \sim (k')_2$ if and only if $\mathbf{t}[q] = \mathbf{t}[q']$ and $r = r'$.

From here, it is clear that $\sim$ has precisely $2^{n+3}$ equivalence classes each represented by an element of $\{0, 1\}^{n+3}$; binary strings of length $n+3$. The first bit represents $\mathbf{t}[q]$, and the remaining $n+2$ bits represent $r$. For any binary string $(k)_2$ longer than $n+3$, it lies in the equivalence class corresponding to $\mathbf{t}[q]$ and $r$, e.g., $10111$ lies in the equivalence class represented by $011$ in the case of $n=0$. 

Clearly, binary strings of length $<n+3$ can be viewed as binary strings of length $n+3$ by prepending sufficiently many zeroes, e.g., $1$ lies in the equivalence class represented by $001$ in the case of $n=0$.
 
\end{proof}

\section{Going beyond \texorpdfstring{$k$}{k}-automatic sequences}

\subsection{Generalizations to non-uniform morphisms} \label{sec:nonuniform}
So far, we have only worked with transducers for $k$-automatic sequences and automata that take input over base-$k$ representations of numbers. 
To generalize this, suppose we have a sequence $(x_n)_{n \geq 0}$ that is automatic in an arbitrary numeration system $\mathcal N$ where the digits of all representations are in $\Sigma = \{0, \ldots, k-1\}$ for some $k \in \mathbb N$, written most significant digit first (least-significant-digit-first numeration systems are considered in \ref{sec:lsdtransduce}). 

The sequence $(x_n)_{n \geq 0}$ is computed by a DFAO $M = \langle Q, \Sigma, \delta, q_0, \Delta, \lambda \rangle$, where for each $q \in Q$ and $a \in \Sigma$, we have $|\delta(q, a)| \in \{0, 1\}$ (that is, there is \textit{at most} one edge going out of $q$ on input $a$). 
When $M$ receives the $\mathcal N$-representation (i.e., the representation in the numeration system $\mathcal N$) of $n$ as input, the automaton computes $x_n$.

\begin{example}[Fibonacci-Thue-Morse] \label{exm:ftm}

    Consider the \textit{Fibonacci-Thue-Morse} sequence $\mathbf{ftm} = (\mathbf{ftm}[n])_{n \geq 0}$, where $\mathbf{ftm}[n]$ is defined to be the sum (mod $2$) of the Fibonacci representation of $n$ (as in \S \ref{sec:numeration}). The first few terms of the sequence are as follows:
$$
\mathbf{ftm} = 01110100100011000101 \cdots
$$
The sequence $\mathbf{ftm}$ is \textit{Fibonacci-automatic}: there is a finite automaton with output $M$ that computes $\mathbf{ftm}[n]$ on input the Fibonacci representation of $n$. Such an automaton $M$ is pictured in Figure~\ref{nfac7}.     
\begin{figure}[H]
\begin{center}
    \includegraphics[width=4in]{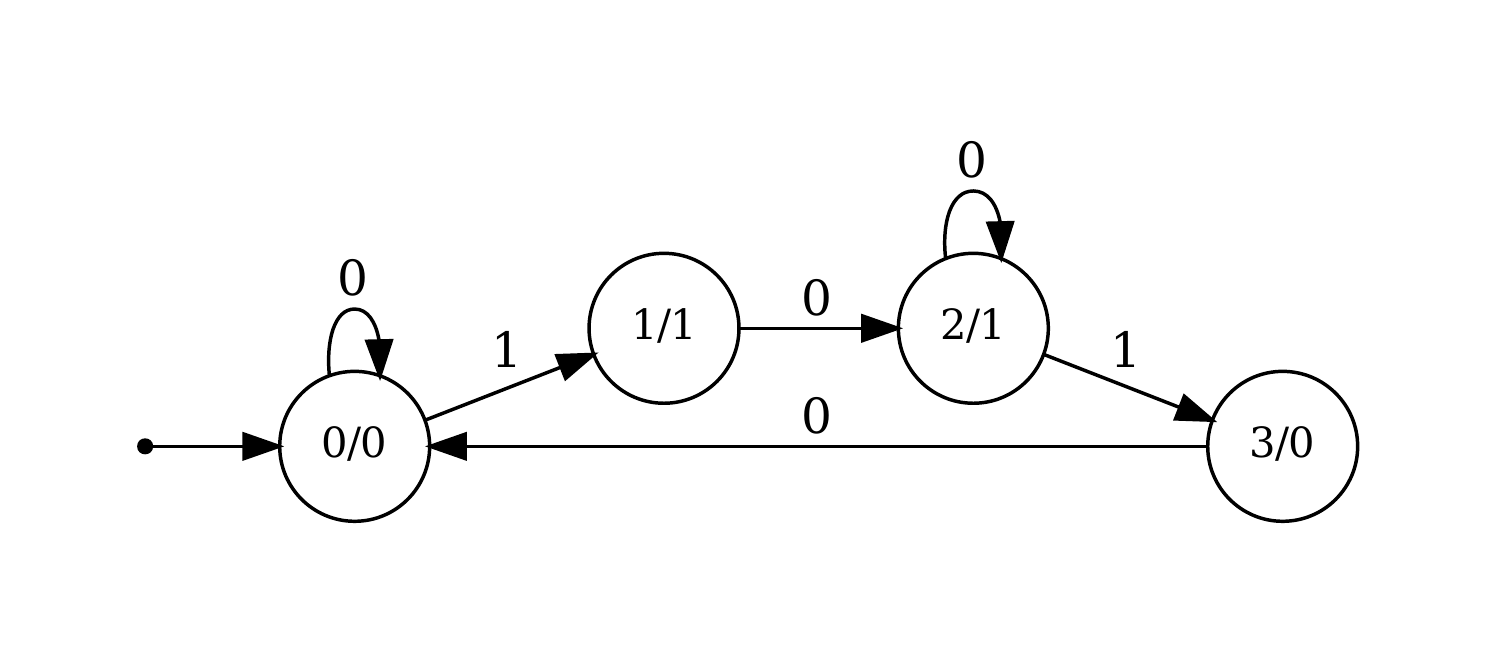}
\end{center}
\caption{Fibonacci automaton $M$ computing $\mathbf{ftm}$.}
\label{nfac7}
\end{figure}
Note that the above automaton is defined on Fibonacci representations only, i.e.,it is only defined on binary strings that do not have consecutive $1$s.
\end{example}

\medskip

As in the example above, the automaton $M$ is only defined on inputs of valid $\mathcal N$-representations. To allow us to apply our existing transduction algorithm to $M$, we wish to extend $M$ to be defined on all base-$k$ representations of numbers. We accomplish this by adding a dead state for all strings that are not $\mathcal N$-representations. The new automaton  behaves identically to the original on all inputs that are valid $\mathcal N$-representations, but ends in the dead state on an input that does not correspond to a valid $\mathcal N$-representation.

In the example of the \texttt{FTM} automaton, this extension is depicted in Figure~\ref{nfac9}.
\begin{figure}[H]
    \begin{center}
        \includegraphics[width=5in]{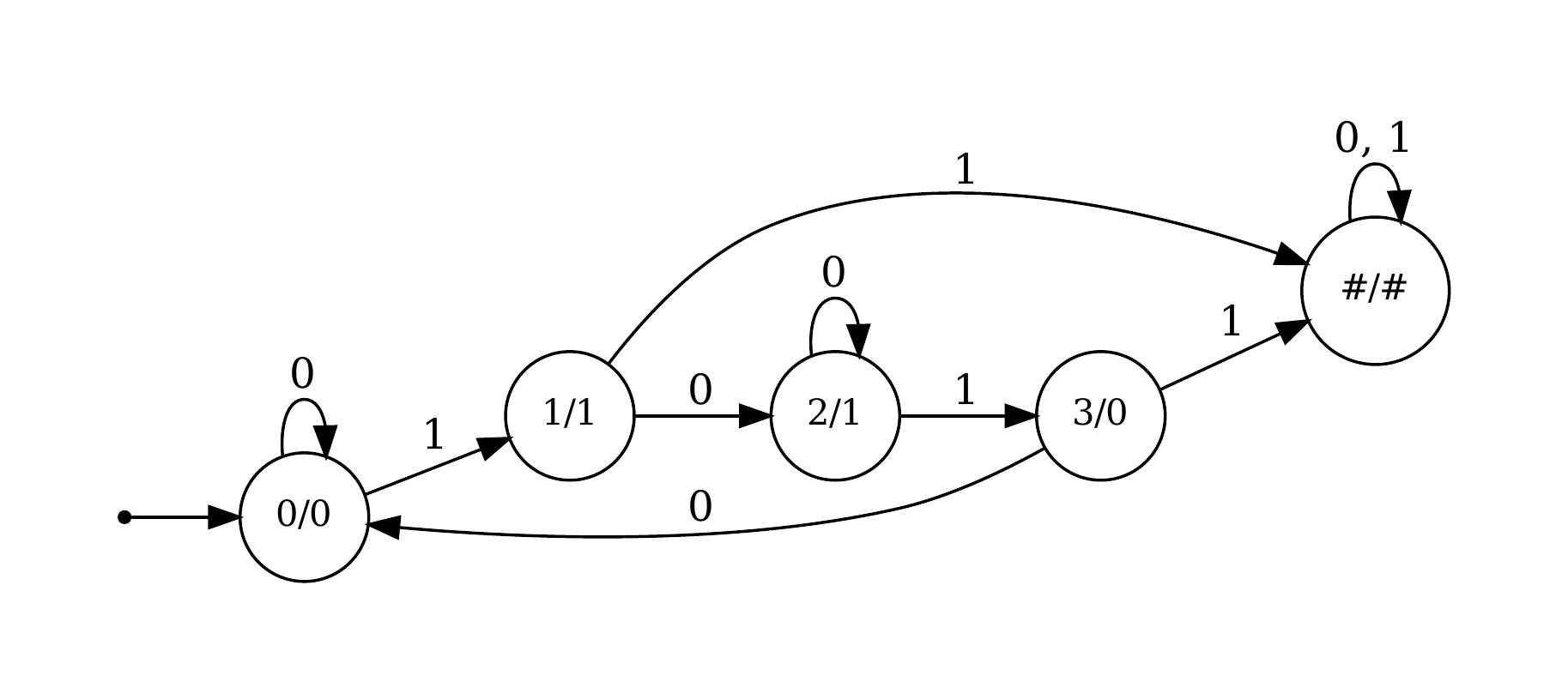}
    \end{center}
    \caption{Extension $M'$ of Fibonacci automaton $M$ to all base-$2$ inputs.}
\label{nfac9}
\end{figure}
Note that all inputs of $1$ from states that were previously undefined on an input of $1$ now go to a dead state with a special output. All valid Fibonacci representations will behave normally, while all binary strings that are not valid Fibonacci representations (i.e., strings with consecutive 1s) will end on the new dead state.

Formally, the extension of $M = \langle Q, \Sigma = \{0, \ldots, k-1\}, \delta, q_0, \Delta, \lambda \rangle$ is a new DFAO $M' = \langle Q', \Sigma, \delta', q_0, \Delta', \lambda' \rangle$, where $Q' = Q \cup \{q_{\#}\}$, $\Delta' = \Delta \cup \{ \# \}$, and $\lambda': Q' \to \Delta'$ is defined as follows:
$$
\lambda'(q') = \begin{cases}
    \#, & \text{if $q' = q_{\#}$}; \\
    \lambda(q'), & \text{otherwise}.
\end{cases}
$$
The transition function $\delta': Q' \times \Sigma \to Q$ is defined as follows:
$$
    \delta'(q', a) = \begin{cases}
        q_{\#}, & \text{if $q' = q_{\#}$ or $\delta(q', a) = \emptyset$}; \\
        q'' \in \delta(q', a), & \text{otherwise, where $\delta(q', a) = \{q''\}$}.
    \end{cases}
$$

Intuitively, $M'$ sends all inputs that are not valid representations to a new dead state $q_{\#}$ with output $\#$. As $M'$ is now defined over all inputs in $\Sigma^*$, it computes a $k$-automatic sequence $y = (y_n)_{n \geq 0}$. We now show that $M'$ behaves identically as $M$ on $\mathcal N$-representations, i.e., both automata compute the same output when given an $\mathcal N$-representation. The idea is for valid representations of numbers in $\mathcal N$ to be instead read in base $k$, resulting in possibly another number. For example, the string $10100$ represents $11$ in Fibonacci representation, but also represents $20$ in base $2$. If the base-$k$ representation of $m \in \mathbb N$ is the same as the $\mathcal N$-representation of some $n \in \mathbb N$, then we want the equality $y_m = x_n$ to hold. This is formalized by Theorem~\ref{trans5}.
\begin{theorem} \label{thm:sequencePadMatches}
    For $m \in \mathbb N$ we have
    $$
    y_m = \begin{cases}
        x_n, & \text{if there exists $n \in \mathbb N$ such that $[n]_{\mathcal N} = [m]_k$}; \\
        \#, & \text{otherwise}.
    \end{cases}
    $$
\label{trans5}
\end{theorem}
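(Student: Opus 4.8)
The plan is to reduce everything to tracking a single run of the extended automaton $M'$ on the base-$k$ string $[m]_k$, using that $M'$ computes $y$ by $y_m = \lambda'(\delta'(q_0, [m]_k))$, where $\delta'$ is extended to words over $\Sigma$ in the obvious way. The heart of the argument is a lemma describing $\delta'$ on arbitrary inputs: for every $w \in \Sigma^*$, if the run of the original automaton $M$ on $w$ starting from $q_0$ is everywhere defined (i.e., no transition $\delta$ encountered along the way is empty), then $\delta'(q_0, w) = \delta(q_0, w) \in Q$; otherwise $\delta'(q_0, w) = q_{\#}$.

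I would prove this lemma by induction on $|w|$. The base case $w = \epsilon$ is immediate, since $\delta'(q_0, \epsilon) = q_0 = \delta(q_0, \epsilon)$. For the inductive step, write $w = w' a$ with $a \in \Sigma$. If the run of $M$ on $w'$ is not everywhere defined, then by induction $\delta'(q_0, w') = q_{\#}$, and since $q_{\#}$ is a trap state (by definition $\delta'(q_{\#}, a) = q_{\#}$), we get $\delta'(q_0, w) = q_{\#}$, while the run of $M$ on $w$ remains undefined. If instead the run of $M$ on $w'$ is everywhere defined, then by induction $\delta'(q_0, w') = \delta(q_0, w') = q$ for some $q \in Q$, and there are two subcases: if $\delta(q, a) = \{q''\}$ is nonempty, then $\delta'(q, a) = q''$ and the run of $M$ stays defined; if $\delta(q, a) = \emptyset$, then $\delta'(q, a) = q_{\#}$ and the run of $M$ dies at this step. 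In every case the stated dichotomy is preserved. Note that the dichotomy is well-posed precisely because the property of having an everywhere-defined run is prefix-closed, which holds automatically for any DFA.

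The key observation connecting the lemma to the theorem is that, by the construction of $M$, the strings $w \in \Sigma^*$ on which $M$ has an everywhere-defined run are exactly the valid $\mathcal N$-representations, i.e., the strings of the form $[n]_{\mathcal N}$ for some $n \in \mathbb N$; moreover, on such an input the original automaton produces $\lambda(\delta(q_0, [n]_{\mathcal N})) = x_n$. Applying the lemma with $w = [m]_k$ then gives both cases at once. If there exists $n$ with $[n]_{\mathcal N} = [m]_k$, then $\delta'(q_0, [m]_k) = \delta(q_0, [m]_k) \in Q$, and since $\lambda'$ agrees with $\lambda$ on $Q$, we obtain $y_m = \lambda(\delta(q_0, [n]_{\mathcal N})) = x_n$. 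Otherwise $[m]_k$ is not a valid $\mathcal N$-representation, so the run of $M$ on it is not everywhere defined, $\delta'(q_0, [m]_k) = q_{\#}$, and $y_m = \lambda'(q_{\#}) = \#$. The first case is unambiguous because $\mathcal N$-representations are unique, so at most one $n$ satisfies $[n]_{\mathcal N} = [m]_k$ and $x_n$ is well-defined.

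As for obstacles, the argument is essentially bookkeeping, and the only point demanding genuine care is the identification of ``everywhere-defined run in $M$'' with ``valid $\mathcal N$-representation,'' since this is exactly what converts the automaton-theoretic dichotomy of the lemma into the number-theoretic dichotomy of the statement. Everything else---the trap-state induction and the agreement of $\lambda'$ with $\lambda$ on $Q$---follows routinely from the definitions of $M'$, $\delta'$, and $\lambda'$.
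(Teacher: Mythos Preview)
Your proposal is correct and follows essentially the same approach as the paper: both track the run of $M'$ on $[m]_k$ and use that $\delta'$ agrees with $\delta$ whenever the latter is defined, and lands in $q_{\#}$ otherwise. The paper's proof is terser, leaving the dichotomy you prove by induction as implicit parenthetical remarks (``as $\delta(q_0,[n]_{\mathcal N})\neq\emptyset$'' and ``as $\delta'(q_0,[n]_{\mathcal N})\neq q_{\#}$''), whereas you spell out the trap-state induction explicitly; substantively the arguments coincide.
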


\begin{proof}
    Given $m \in \mathbb N$, suppose that there exists $n \in \mathbb N$ such that $[n]_{\mathcal N} = [m]_k$. Then
    \begin{align*}
        x_n &= \lambda(\delta(q_0, [n]_{\mathcal N})) \\
        &= \lambda(\delta'(q_0, [n]_{\mathcal N})) \tag*{(as $\delta(q_0, [n]_{\mathcal N}) \neq \emptyset$)} \\
        &= \lambda'(\delta'(q_0, [n]_{\mathcal N})) \tag*{(as $\delta'(q_0, [n]_{\mathcal N}) \neq q_{\#}$)} \\
        &= \lambda'(\delta'(q_0, [m]_k)) \\
        &= y_m.
    \end{align*}
    Conversely, if $y_m \neq \#$, then by definition of $M'$ we must have that $[m]_k$ is a valid representation in $\mathcal N$ of some $n \in \mathbb N$, i.e., $[n]_{\mathcal N} = [m]_k$. Therefore, if such an integer $n$ does not exist, we must have $y_m = \#$, completing the proof.
\end{proof}

We can generalize this with Lemma~\ref{lem:yIsXWithPad}.
\begin{lemma}
    Let $n, m \in \mathbb{N}$ be such that $[n]_{\mathcal N} = [m]_k$. Then 
    $$
    y_0 \cdots y_m = s_0 \, x_0 \, s_1 \, \cdots \, s_{n-1} \, x_{n-1} \, s_n \, x_n,
    $$ 
    where $s_i \in \{\#\}^*$ for all $i = 0, \ldots, n$.
 \label{lem:yIsXWithPad}
\end{lemma}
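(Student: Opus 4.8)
The plan is to read off the prefix $y_0 \cdots y_m$ one position at a time and invoke Theorem~\ref{trans5} at each position. For each index $i$, let $m_i$ denote the integer whose canonical base-$k$ representation equals the $\mathcal N$-representation of $i$, i.e.\ $[m_i]_k = [i]_{\mathcal N}$; by Theorem~\ref{trans5} we then have $y_{m_i} = x_i$, while $y_j = \#$ for every $j$ whose base-$k$ representation is not a valid $\mathcal N$-representation. The hypothesis says exactly $m_n = m$, and since $0$ is represented by the empty word (or a single $0$) in both systems we also have $m_0 = 0$. Thus the whole lemma reduces to showing two things: that the indices $j \in \{0, \ldots, m\}$ carrying a non-$\#$ output are precisely $m_0 < m_1 < \cdots < m_n$, and that at $m_i$ the output is $x_i$. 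The second point is immediate from Theorem~\ref{trans5}; everything else concerns the arrangement of the $m_i$.

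I would carry this out by induction on $n$. The base case $n = 0$ forces $m = 0$ (the empty/zero word reads as $0$ in base $k$), and there $y_0 = x_0 = s_0 x_0$ with $s_0 = \epsilon$. For the inductive step, let $m' = m_{n-1}$, so that $[m']_k = [n-1]_{\mathcal N}$; the induction hypothesis gives $y_0 \cdots y_{m'} = s_0 x_0 \cdots s_{n-1} x_{n-1}$ with each $s_i \in \{\#\}^*$. It then suffices to show that $y_{m'+1} \cdots y_m = s_n x_n$ for some $s_n \in \{\#\}^*$; that is, $y_m = x_n$ (immediate from Theorem~\ref{trans5}, as $[m]_k = [n]_{\mathcal N}$) together with $y_j = \#$ for every $j$ with $m' < j < m$. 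By Theorem~\ref{trans5}, the latter amounts to showing that no $j$ strictly between $m'$ and $m$ has a base-$k$ representation that is simultaneously a valid $\mathcal N$-representation.

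The crux, and the main obstacle, is precisely this last point, which is where the structure of $\mathcal N$ enters: I will use that reading $\mathcal N$-representations as base-$k$ integers is order-preserving, i.e.\ the map $i \mapsto m_i$ is strictly increasing. This is the standard order property of the positional numeration systems in play (for Fibonacci representations it is immediate, since comparing Zeckendorf words lexicographically after zero-padding reproduces the natural order, which is exactly base-$2$ value order). Granting it, if some $j$ with $m' < j < m$ satisfied $[j]_k = [i]_{\mathcal N}$, then $m_{n-1} = m' < m_i < m = m_n$ would force $n - 1 < i < n$, impossible for an integer $i$; hence every such $j$ gives $y_j = \#$, and $s_n := y_{m'+1} \cdots y_{m-1} \in \{\#\}^*$. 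Assembling the pieces yields $y_0 \cdots y_m = s_0 x_0 \cdots s_{n-1} x_{n-1} s_n x_n$, with no trailing block since $m_n = m$, which completes the induction. (Without the inductive framing one can argue directly: strict monotonicity of $i \mapsto m_i$ with $m_0 = 0$ and $m_n = m$ shows the non-$\#$ positions in $\{0, \ldots, m\}$ are exactly $m_0 < \cdots < m_n$, and the runs of $\#$'s between consecutive ones define the blocks $s_i$.)
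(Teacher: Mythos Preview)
Your proposal is correct and follows essentially the same approach as the paper: define the indices $m_i$ (the paper's $\alpha_i$) with $[m_i]_k=[i]_{\mathcal N}$, invoke Theorem~\ref{trans5} to get $y_{m_i}=x_i$ and $y_j=\#$ elsewhere, and read off the factorization. The paper argues directly rather than by induction (exactly the alternative you sketch in your final parenthetical), and simply asserts $0=\alpha_0<\cdots<\alpha_n=m$ without comment; you are more explicit in isolating the order-preservation of $i\mapsto m_i$ as the one substantive assumption on $\mathcal N$ needed here, which is a fair point the paper glosses over.
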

\begin{proof}

Let $0 = \alpha_0 < \ldots < \alpha_n = m$ where $\alpha_i \in \{0, \ldots, m\}$ is the value when the representation of $i = 0, \ldots, n$ in $\mathcal N$ is instead read in base-$k$ representation, i.e., $[\alpha_i]_k = [i]_{\mathcal N}$. By Theorem \ref{thm:sequencePadMatches}, we have that $x_i = y_{\alpha_i}$ for all $i = 0, \ldots, n$. For $\gamma \in \{0, \ldots, m\}$ such that  $\gamma \neq \alpha_i$ for all $i = 0, \ldots, n$, we know that $[\gamma]_k$ is not a valid representation in $\mathcal N$, hence $y_\gamma = \#$ by Theorem \ref{thm:sequencePadMatches}. The claim follows.
\end{proof}

Suppose we wish to transduce $M$ through the transducer $T = \langle V, \Delta, \varphi, v_0, \Gamma, \sigma \rangle$. Instead of transducing $M$ directly, we transduce the extension $M'$ through an extension $T'$ of $T$. The extension is defined by $T' := \langle V, \Delta', \varphi', v_0, \Gamma', \sigma' \rangle$, where the input and output alphabets are extended to include $\#$; i.e., $\Delta' = \Delta \cup \{\#\}$ and $\Gamma' = \Gamma \cup \{\#\}$, and $\varphi': V \times \Delta' \to V$ is defined as follows:
$$
\varphi'(v, a) = \begin{cases}
    v, & \text{if $a = \#$}; \\
    \varphi(v, a), & \text{otherwise}.
\end{cases}
$$
The output function $\sigma': V \times \Delta' \to \Gamma'$ is defined as follows:
$$
\sigma'(v, a) = \begin{cases}
    \#, & \text{if $a = \#$}; \\
    \sigma(v, a), & \text{otherwise}.
\end{cases}
$$

As $M'$ now generates a $k$-automatic sequence $y = (y_n)_{n \geq 0}$, we may apply Dekking's method to transduce $M'$ through $T'$, giving a new $k$-automatic sequence $T'(y)$.
\begin{theorem} 
    Let $n, m \in \mathbb{N}$ be such that $[n]_{\mathcal N} = [m]_k$. Then $T'(y)_m = T(x)_n$.
\label{thm:transducedSequencePadMatches}
\end{theorem}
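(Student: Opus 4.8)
The plan is to reduce the claim about the transduced sequences to the already-established Lemma~\ref{lem:yIsXWithPad}, using the fact that the extended transducer $T'$ was designed precisely so that reading a $\#$ symbol leaves the transducer's state unchanged while emitting a $\#$ as output. The key observation is that $\varphi'(v, \#) = v$ for every state $v$, so a block of $\#$'s does nothing to the internal state of $T'$; it merely pads the output stream with $\#$'s. Thus the $\#$-padding present in $y$ (as characterized by Lemma~\ref{lem:yIsXWithPad}) is transparent to the transducer's state evolution, and the non-$\#$ symbols of $y$ are processed by $T'$ in exactly the same order and from exactly the same states as the symbols of $x$ would be processed by $T$.

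The first step is to recall from Lemma~\ref{lem:yIsXWithPad} that, given $[n]_{\mathcal N} = [m]_k$, we can write $y_0 \cdots y_m = s_0\, x_0\, s_1 \, x_1 \cdots s_{n-1}\, x_{n-1}\, s_n\, x_n$ where each $s_i \in \{\#\}^*$. Next, I would introduce the notation $f'_w$ for the state-transition map of $T'$ induced by an input word $w \in (\Delta')^*$, analogous to the $f_y$ defined in Section~\ref{sec:dekking} for $T$. The central lemma to prove is that $f'_{\#^j} = \mathrm{id}_V$ for all $j \geq 0$, and more generally that $f'_w = f_{\pi(w)}$, where $\pi(w)$ denotes the subsequence of $w$ obtained by deleting all $\#$ symbols. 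This is immediate by induction on $|w|$ using the definition of $\varphi'$: each $\#$ acts as the identity, and each non-$\#$ symbol acts exactly as the corresponding symbol under $\varphi$.

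With this in hand, I would compute $T'(y)_m = \sigma'(\varphi'^*(v_0,\, y_0 \cdots y_{m-1}),\, y_m)$ directly. Since the prefix $y_0 \cdots y_{m-1}$ has exactly $x_0, \ldots, x_{n-1}$ as its non-$\#$ symbols (the symbol $y_m = x_n$ being the final letter), the state reached is $\varphi'^*(v_0, y_0 \cdots y_{m-1}) = \varphi^*(v_0,\, x_0 \cdots x_{n-1})$ by the lemma above. Then, since $y_m = x_n \neq \#$, we have $\sigma'(\varphi^*(v_0, x_0 \cdots x_{n-1}),\, x_n) = \sigma(\varphi^*(v_0, x_0 \cdots x_{n-1}),\, x_n)$, and the right-hand side is exactly $T(x)_n$ by the definition of transduction. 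This chains together to give $T'(y)_m = T(x)_n$.

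The main obstacle, if any, is purely bookkeeping: one must carefully verify that the $\#$-padding structure furnished by Lemma~\ref{lem:yIsXWithPad} places the symbol $x_n$ at precisely position $m$, so that $y_m = x_n$ (and not some padding symbol), which is guaranteed because $\alpha_n = m$ in the notation of that lemma's proof. Everything else is a routine induction establishing that $\#$-transitions are identity transitions and that $\sigma'$ agrees with $\sigma$ on non-$\#$ input. No genuine difficulty arises because the extension $T'$ was engineered to make $\#$ inert; the statement is essentially a consistency check that this engineering achieves its intended purpose.
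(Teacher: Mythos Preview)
Your proposal is correct and follows essentially the same route as the paper's proof: both invoke Lemma~\ref{lem:yIsXWithPad} to factor $y_0 \cdots y_m$ as $s_0 x_0 s_1 \cdots x_{n-1} s_n x_n$ with $s_i \in \{\#\}^*$, then use the design of $T'$ (namely $\varphi'(v,\#)=v$ and $\sigma'$ agreeing with $\sigma$ on non-$\#$ inputs) to collapse the computation of $T'(y)_m$ down to $\sigma(\varphi^*(v_0, x_0\cdots x_{n-1}), x_n) = T(x)_n$. Your auxiliary observation that $f'_w = f_{\pi(w)}$ (where $\pi$ erases $\#$'s) makes explicit exactly what the paper dispatches with the phrase ``follows from the definition of $\varphi'$''.
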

\begin{proof}
    We see that
    \begin{align*}
        T'(y)_m = T(x)_n &\iff \sigma(\varphi(v_0, x_0 \cdots x_{n-1}), x_n) = \sigma'(\varphi'(v_0, y_0 \cdots y_{m-1}), y_m) \\
        &\iff \sigma(\varphi(v_0, x_0 \cdots x_{n-1}), x_n) = \sigma'(\varphi'(v_0, y_0 \cdots y_{m-1}), x_n) \tag*{(by Theorem \ref{thm:sequencePadMatches})} \\
        &\iff \sigma(\varphi(v_0, x_0 \cdots x_{n-1}), x_n) = \sigma(\varphi'(v_0, y_0 \cdots y_{m-1}), x_n) \tag*{(as $x_n \neq \#$)} \\
        &\iff \sigma(\varphi(v_0, x_0 \cdots x_{n-1}), x_n) = \sigma(\varphi'(v_0, s_0 x_0 s_1 \cdots s_{n-1} x_{n-1} s_n), x_n)
    \end{align*}
    by Lemma \ref{lem:yIsXWithPad}, where $s_i \in \{\#\}^*$ for all $i = 0, \ldots, n$.
    Therefore, it suffices to show that  $\varphi'(v_0, s_0 \, x_0 \, s_1 \, \cdots \, s_{n-1} \, x_{n-1} \, s_n) = \varphi(v_0, x_0 \cdots x_{n-1})$, but this follows from the definition of $\varphi'$, completing the proof.
\end{proof}

Thus we get

\begin{corollary}
If ${\bf a} = (a_n)_{n \geq 0}$ is a generalized automatic sequence 
over a numeration system $\cal N$, and $T$ is
a deterministic $1$-uniform finite state transducer, then
$T({\bf a})$ is also a generalized automatic sequence over $\cal N$.
\end{corollary}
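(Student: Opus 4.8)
The plan is to assemble the three results already established in this subsection and then to \emph{undo} the extension operation at the very end. Write $\mathbf{a} = \mathbf{x} = (x_n)_{n \geq 0}$, computed by the generalized DFAO $M = \langle Q, \Sigma, \delta, q_0, \Delta, \lambda \rangle$ over $\mathcal{N}$ with a partial transition function. First I would pass to the extension $M' = \langle Q \cup \{q_\#\}, \Sigma, \delta', q_0, \Delta', \lambda' \rangle$ described above; since $\delta'$ is total, $M'$ computes a genuine $k$-automatic sequence $y = (y_n)_{n \geq 0}$, and by Theorem~\ref{thm:sequencePadMatches} we have $y_m = x_n$ whenever $[n]_{\mathcal N} = [m]_k$, and $y_m = \#$ precisely when $[m]_k$ is not the $\mathcal N$-representation of any integer. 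Likewise I would replace $T$ by its extension $T'$ over the alphabets $\Delta'$ and $\Gamma'$, which leaves the state unchanged on the symbol $\#$ and emits $\#$ on reading $\#$.

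Next, because $y$ is $k$-automatic, Theorem~\ref{thm:automaticTransduce} (Dekking's construction) applies directly to $T'$ and $M'$, producing a DFAO $M''$ over the full base-$k$ alphabet $\Sigma$ that computes the $k$-automatic sequence $T'(y)$. Theorem~\ref{thm:transducedSequencePadMatches} then guarantees $T'(y)_m = T(x)_n$ whenever $[n]_{\mathcal N} = [m]_k$, so $M''$, run on the \emph{string} $[n]_{\mathcal N} = [m]_k$, outputs exactly the $n$-th term of $T(\mathbf{x})$. It remains only to convert $M''$, which is defined on all of $\Sigma^*$, back into a DFAO of the generalized-automatic form: one with a partial transition function defined precisely on the valid $\mathcal N$-representations.

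For this final step I would observe that, in Dekking's construction, each state of $M''$ has the form $(a, \mathcal{I}(w))$ with $a$ a state of $M'$, and its output (by the coding \eqref{def:tildeLambda} applied to $T'$) equals $\#$ if and only if $\lambda'(a) = \#$, i.e.\ if and only if $a = q_\#$. Since $q_\#$ is absorbing in $M'$, the corresponding states of $M''$ are absorbing as well, and a valid $\mathcal N$-representation read in $M'$ never enters $q_\#$; hence reading such a string in $M''$ never visits a $\#$-output state. I would therefore form the generalized DFAO $N$ by deleting from $M''$ every state whose $M'$-component is $q_\#$ together with all transitions into them, leaving a partial transition function whose domain is exactly the set of valid $\mathcal N$-representations. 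On input $[n]_{\mathcal N}$ the automaton $N$ follows the same path as $M''$ and outputs $T'(y)_m = T(x)_n$, so $N$ computes $T(\mathbf{x}) = T(\mathbf{a})$ over $\mathcal N$. (Lemma~\ref{lem:yIsXWithPad} is the structural reason the padding by $\#$-symbols does not disturb the transducer state, and is precisely what makes Theorem~\ref{thm:transducedSequencePadMatches} available.)

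The step I expect to be the main obstacle is the last one: arguing cleanly that the $\#$-output states of $M''$ form an absorbing region disjoint from every computation on a valid $\mathcal N$-representation, so that excising them yields a well-defined partial DFAO that genuinely agrees with $M''$ on all $\mathcal N$-representations. Everything else is a direct invocation of Theorems~\ref{thm:automaticTransduce}, \ref{thm:sequencePadMatches}, and \ref{thm:transducedSequencePadMatches}.
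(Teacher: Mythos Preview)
Your proposal is correct and follows precisely the route the paper intends: the paper states the corollary as an immediate consequence (``Thus we get'') of Theorem~\ref{thm:transducedSequencePadMatches} together with the extension construction, and in the ensuing example explicitly says one recovers the $\mathcal N$-automaton by deleting the states with output $\#$. You have simply spelled out that last step with the observation that in Dekking's product the first coordinate tracks the $M'$-state, so the $\#$-output states are exactly those with first coordinate $q_\#$ and form an absorbing sink avoided by every valid $\mathcal N$-representation.
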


\begin{example}[XOR of Fibonacci-Thue-Morse]
Recall the Fibonacci-Thue-Morse sequence $\mathbf{ftm}$ and the automaton $M$ that computes it from Example \ref{exm:ftm}.
Consider the transducer $T$ that computes the XOR of consecutive bits (with the first bit outputted always being $0$), pictured in Figure~\ref{nfac8}.
\begin{figure}[H]
    \begin{center}
    \includegraphics[width=4in]{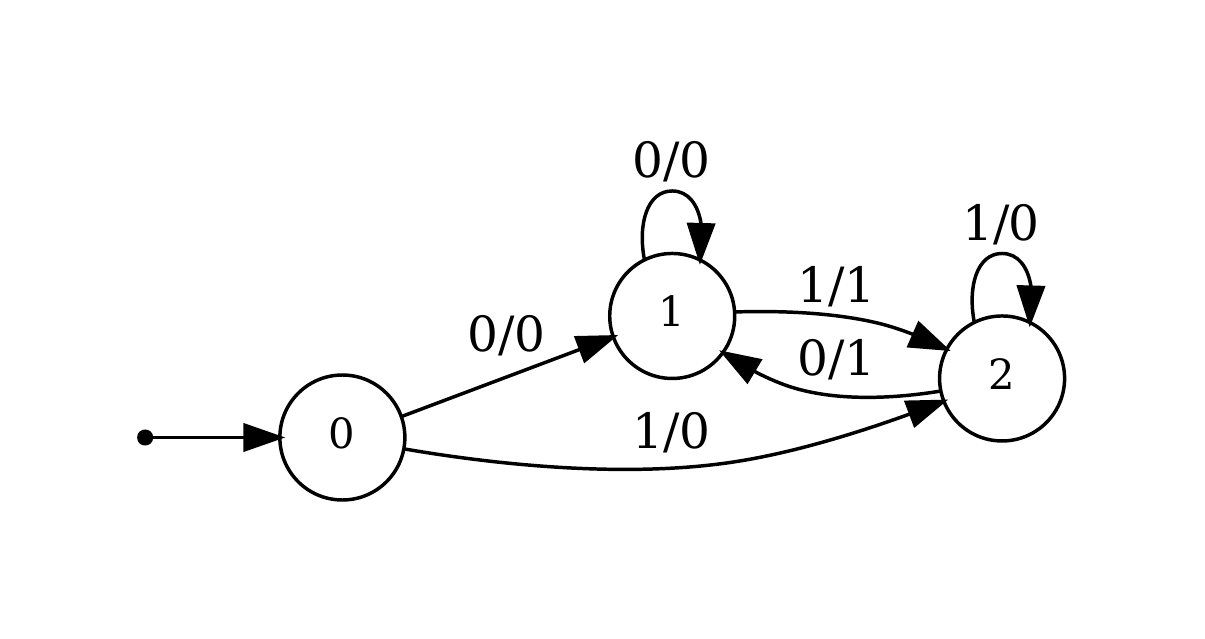}
    \end{center}
    \caption{Transducer $T$ computing the XOR of consecutive bits.}
    \label{nfac8}
\end{figure}

By directly computing from the definition in \S \ref{sec:transducers}, we get that
$$
T(\mathbf{ftm}) = 01001110110010100111 \cdots  .
$$

Our goal is to obtain a Fibonacci automaton that computes $T(\mathbf{ftm})$. However, we cannot directly apply our existing method to $M$ and $T$, as $M$ is not deterministic. Thus, we extend $M$ to a new automaton $M'$ that accepts all base-$2$ inputs, as shown in Figure~\ref{nfac9}.

We also extend the transducer $T$ to the transducer $T'$ as outlined earlier:
\begin{figure}[H]
    \begin{center}
        \includegraphics[width=4in]{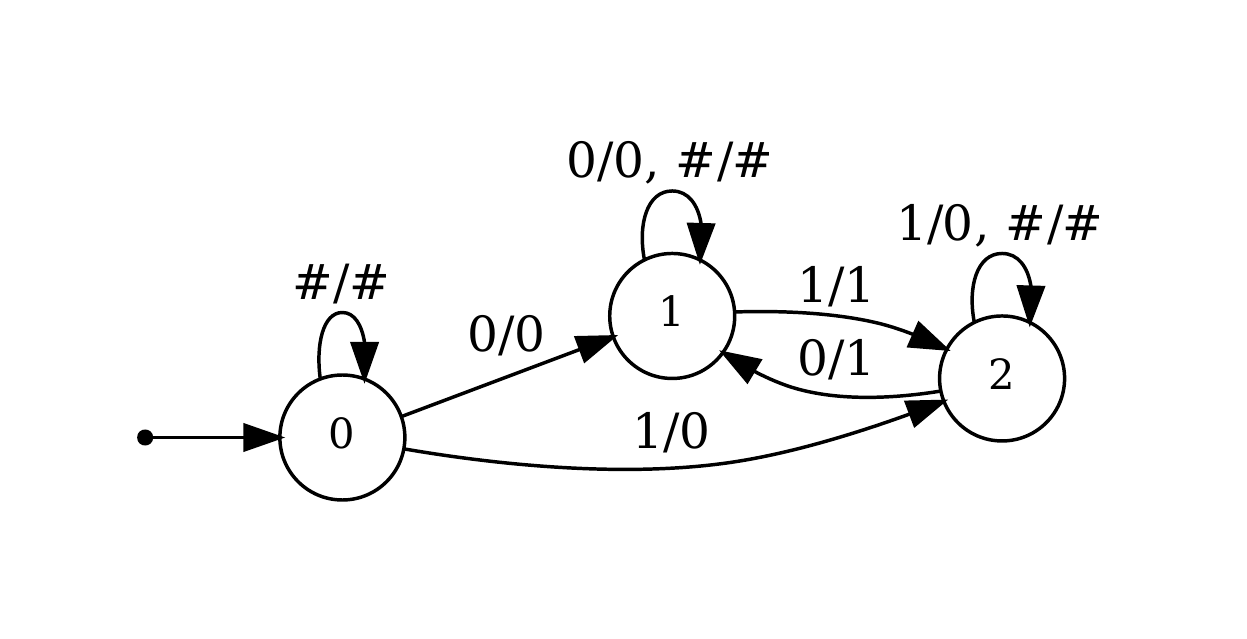}
    \end{center}
    \caption{Extension of the transducer $T$}
\end{figure}

Let $\mathbf{ftm}'$ be the sequence generated by $M'$. As $\mathbf{ftm}'$ is a $2$-automatic sequence, we can apply Dekking's algorithm to transduce $\mathbf{ftm}'$ with $T'$. We show the first few terms of the sequences in the following table. Entries that are left blank correspond to binary inputs that $M$ is not defined on.
\begin{center}
\begin{tabular}{ | c | c | c | c | c | c |} 
$[m]_2$ & $n : [m]_2 = [n]_{\text{fib}}$ & $\mathbf{ftm}[n]$ & $\mathbf{ftm}'[m]$ & $T(\mathbf{ftm})[n]$ & $T'(\mathbf{ftm}')[m]$ \\
\hline
$0$ & $0$ & $0$ & $0$ & $0$ & $0$\\
$1$ & $1$ & $1$ & $1$ & $1$ & $1$\\
$10$ & $2$ & $1$ & $1$ & $0$ & $0$\\
$11$ & & & $\#$ & & $\#$ \\
$100$ & $3$ & $1$ & $1$ & $0$ & $0$\\
$101$ & $4$ & $0$ & $0$ & $1$ & $1$\\
$110$ & & & $\#$ & & $\#$\\
$111$ & & & $\#$ & & $\#$\\
$1000$ & $5$ & $1$ & $1$ & $1$ & $1$ \\
$1001$ & $6$ & $0$ & $0$ & $1$ & $1$ \\
$1010$ & $7$ & $0$ & $0$ & $0$ & $0$ \\
$1011$ &  & & $\#$ & & $\#$ \\
$1100$ &  & & $\#$ & & $\#$ \\
$1101$ &  & & $\#$ & & $\#$ \\
$1110$ &  & & $\#$ & & $\#$ \\
$1111$ &  & & $\#$ & & $\#$ \\
$10000$ & $8$ & $1$ & $1$ & $1$ & $1$ \\
\end{tabular}
\end{center}
As we see in the table, the columns corresponding to $\mathbf{ftm}[n]$ and $\mathbf{ftm}'[m]$ as well as $T(\mathbf{ftm})[n]$ and $T'(\mathbf{ftm}')[m]$ are practically the same, except the places where $T(\mathbf{ftm})$ is undefined is now indicated by an output of $\#$ for $T'(\mathbf{ftm}')$. To recover the original numeration system, we can remove all states with an output of $\#$ in the resulting automaton that computes $T'(\mathbf{ftm}')$. 

\texttt{Walnut} automatically computes the necessary extensions of automata and transducers under the hood, allowing for the same transduction syntax as always. The \texttt{FTM} automaton can be transduced using the \texttt{XOR} transducer (\texttt{Walnut} definition in the Appendix) to give the \texttt{FTMXOR} automaton, pictured in Figure \ref{fig:FTMXOR}, using the \texttt{Walnut} command
\begin{verbatim}
transduce FTMXOR XOR FTM:
\end{verbatim}
\begin{figure}[H]
\begin{center}
    \includegraphics[width=5.5in]{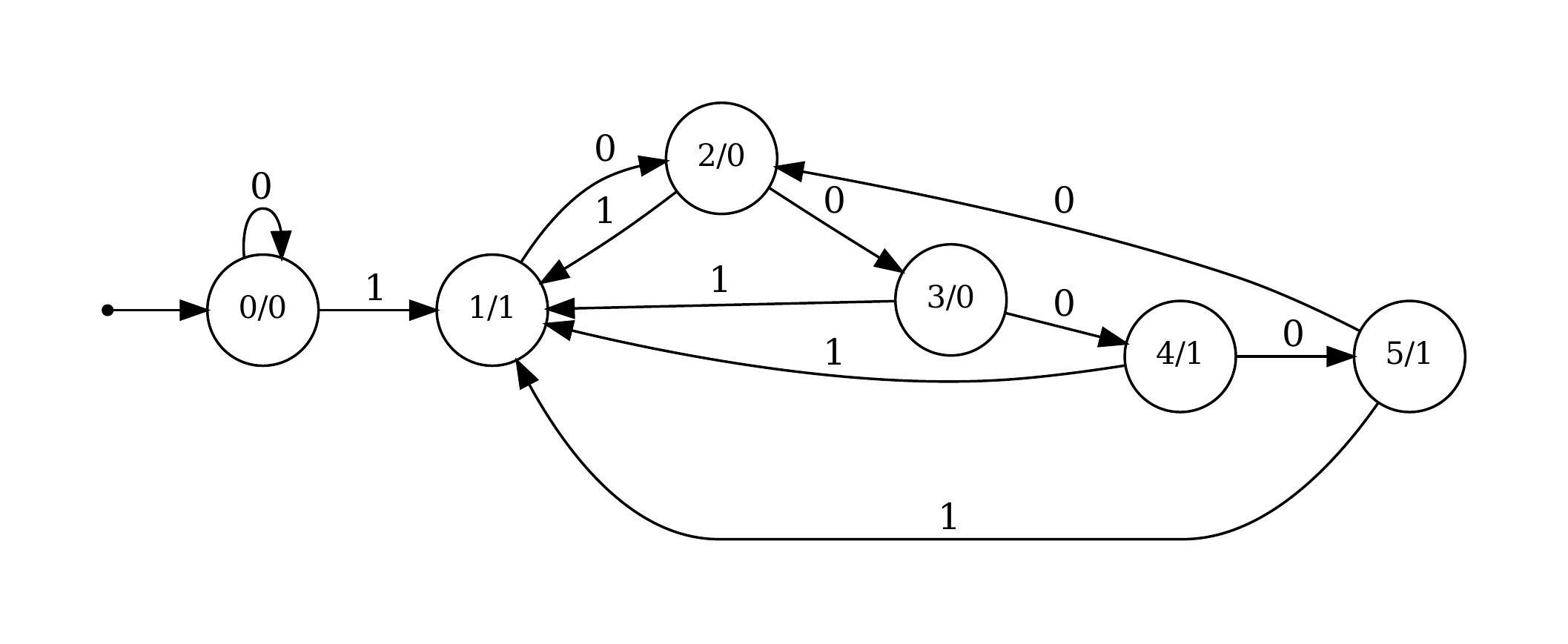}
\end{center}
\caption{DFAO \texttt{FTMXOR} computing $T(\mathbf{ftm})$.}
\label{fig:FTMXOR}
\end{figure}

\end{example}

\subsection{Another example}

Let us look at another example.   Define
$s(n)$ to be the second-to-last bit of the
Fibonacci representation of $n$ (most-significant-digit first), and let $s'(n) = 1-s(n)$.  Table~\ref{slzeck} gives
the first few terms of these sequences:
\begin{table}[H]
\begin{center}
\begin{tabular}{c|ccccccccccccccccc}
$n$ & 0 & 1 & 2 & 3 & 4 & 5 & 6 & 7 & 8 & 9 & 10 & 11 & 12 & 13 & 14 & 15 & 16 \\
\hline
$s(n)$ & 0 & 0 & 1 & 0 & 0 & 0 & 0 & 1 & 0 & 0 & 1 & 0 & 0 & 0 & 0 & 1 & 0 \\
$s'(n)$ & 1& 1& 0& 1& 1& 1& 1& 0& 1& 1& 0& 1& 1& 1& 1& 0& 1
\end{tabular}
\end{center}
\caption{Second-to-last bit of Fibonacci representation.}
\label{slzeck}
\end{table}
Then $s(n)$ is sequence \seqnum{A123740} in the OEIS and $s'(n)$ is sequence \seqnum{A005713}.

Another characterization of $s(n)$, given in OEIS sequence \seqnum{A123740}, is that  
$$s(n) = \lfloor (n+2) \tau \rfloor - \lfloor n\tau \rfloor - 3,$$ where $\tau = (1+\sqrt{5})/2$.  We can
verify this claim using a synchronized automaton for
$\lfloor n \tau \rfloor$, given in \cite[p.~278]{Shallit:2022}, as follows:
\begin{verbatim}
reg sldf msd_fib "0*(10|0)*10":
# second lowest bit of Fibonacci expansion is 1

combine SLDF sldf:
reg shift {0,1} {0,1} "([0,0]|[0,1][1,1]*[1,0])*":
def phin "?msd_fib (s=0 & n=0) | Ex $shift(n-1,x) & s=x+1":
def aseq "?msd_fib Ex,y x=y+3 & $phin(n+3,x) & $phin(n+1,y)":
def test1 "?msd_fib An $aseq(n) <=> SLDF[n]=@0":
\end{verbatim}
and {\tt Walnut} returns {\tt TRUE}.

We now prove the following result:
\begin{theorem}
Let $\bf f$ denote the infinite Fibonacci word \cite{Berstel:1986b}.
Then ${\bf f}[n+1] = (\sum_{0 \leq i \leq n} s'(n)) \bmod 2$ for all $n \geq 0$.
\end{theorem}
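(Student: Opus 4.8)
The plan is to read the claimed identity as the assertion that the running sum modulo $2$ of the Fibonacci-automatic sequence $s'$ agrees, after a one-place shift, with the infinite Fibonacci word $\mathbf{f}$, and thereby to reduce the theorem to a decidable check carried out by the transduction machinery built above. First I would produce a DFAO for $s'$. Since $s(n)$ is the second-to-last digit of the msd-first Fibonacci representation of $n$, it is recognized by the regular expression \texttt{sldf} appearing just before the statement (so that \texttt{SLDF} computes $s$), and $s'(n)=1-s(n)$ is obtained by complementing the outputs; call the resulting DFAO \texttt{SP0}. In particular $s'$ is Fibonacci-automatic.

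Next I would form the running sum. The sequence $\bigl(\sum_{0\le i\le n} s'(i)\bigr)\bmod 2$ is exactly the image of $s'$ under the running-sum-mod-$2$ transducer \texttt{RUNSUM2} of Example~\ref{exam1}, whose output convention is $b_k=\bigl(\sum_{i=0}^k a_i\bigr)\bmod 2$. By the Corollary following Theorem~\ref{thm:transducedSequencePadMatches}, transducing a Fibonacci-automatic sequence again yields a Fibonacci-automatic sequence, so a command of the form \texttt{transduce SP RUNSUM2 SP0} produces a DFAO \texttt{SP} with $\texttt{SP}[n]=\bigl(\sum_{0\le i\le n} s'(i)\bigr)\bmod 2$.

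Finally, since the Fibonacci word $\mathbf{f}$ is itself Fibonacci-automatic, the proposed identity $\mathbf{f}[n+1]=\texttt{SP}[n]$ is a first-order sentence over the Fibonacci numeration system and can be decided by \texttt{Walnut} via a command of the form \texttt{eval test "?msd\_fib An F[n+1] = SP[n]"}, which I expect to return \texttt{TRUE}. As a sanity check on the indexing, the first terms of \texttt{SP} are $1,0,0,1,0,1,0,0,\dots$, which indeed equal $\mathbf{f}[1],\mathbf{f}[2],\dots,\mathbf{f}[8]$.

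The main obstacle is bookkeeping rather than mathematical depth: one must align the one-place shift on $\mathbf{f}$ with the inclusive upper limit $i\le n$ built into \texttt{RUNSUM2}, and one must be sure that the dead-state padding used when extending the Fibonacci automaton to all binary inputs (Section~\ref{sec:nonuniform}) does not affect the running sum on valid representations. The latter point is exactly what Theorem~\ref{thm:transducedSequencePadMatches} guarantees, so once the statement is encoded correctly the verification is automatic.
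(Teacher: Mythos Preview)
Your proposal is correct and follows essentially the same route as the paper: build a Fibonacci DFAO for $s'$ by complementing \texttt{sldf}, transduce it with \texttt{RUNSUM2}, and then let \texttt{Walnut} verify the first-order identity $\mathbf{f}[n+1]=\texttt{SP}[n]$ over \texttt{msd\_fib}. The paper's proof is exactly this (with different variable names \texttt{NSLDF} and \texttt{TS}), so there is nothing to add.
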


\begin{proof}
We use the new capabilities of {\tt Walnut}.  
\begin{verbatim}
def nsldf "?msd_fib ~$sldf(n)":
combine NSLDF nsldf:
# sequence A005713

transduce TS RUNSUM2 NSLDF:
# A005614; this is infinite Fibonacci word shifted by 1

eval test2 "?msd_fib An TS[n]=F[n+1]":
# Walnut returns TRUE
\end{verbatim}
\end{proof}

\subsection{Transducing lsd-first automatic sequences} \label{sec:lsdtransduce}

In this section, we consider the transduction of automata that read in least-significant-digit-first representations to compute elements of an automatic sequence. We say that a DFAO $M$ computing an automatic sequence $\mathbf{x} = (x_n)_{n \geq 0}$ is \texttt{lsd\_k} if it computes computes $x_n$ when it reads the lsd-$k$ representation of $n$. 

Clearly, a DFAO $M$ reads in a most-significant-digit-first representation $d_t d_{t-1} \cdots d_1 d_0$ of $n$ and computes $x_n$, then the reversal $M^R$ of $M$ will compute $x_n$ when it reads the least-significant-digit-first representation $d_0 d_1 \cdots d_{t-1} d_t$ of $n$, and vice versa. This is formalized by a theorem of Allouche and Shallit \cite{Allouche&Shallit:2003}:

\begin{theorem}[4.3.3-4.3.4 of Allouche and Shallit \cite{Allouche&Shallit:2003}]
    If a DFAO $M = \inn{Q, \Sigma, \delta, q_0, \Delta, \lambda}$ with $n$ states computes a function $f \colon \Sigma^* \to \Delta$, then there is a DFAO $M^R$ with at most $|\Delta|^n$ states that computes $f^R \colon \Sigma^* \to \Delta$ defined by $f^R(w) = f(w^R)$.
\end{theorem}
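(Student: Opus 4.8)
The plan is to build $M^R$ by a determinization-style construction whose states are \emph{functions} $g \colon Q \to \Delta$, so that there are at most $|\Delta|^{|Q|} = |\Delta|^n$ of them, matching the claimed bound. The guiding idea is that to read a word $w = a_1 a_2 \cdots a_m$ left-to-right while simulating $M$ on its reverse $w^R = a_m \cdots a_1$, it suffices to remember, for \emph{each} potential state $q$ of $M$, what output $M$ would produce if it started in $q$ and processed the reversal of the prefix seen so far. This information is exactly a function from $Q$ to $\Delta$, and crucially it can be updated one input symbol at a time without knowing the remainder of the word.

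Concretely, I would define $M^R = \inn{Q^R, \Sigma, \delta^R, g_0, \Delta, \lambda^R}$ as follows. Let $Q^R = \Delta^Q$ be the set of all functions $g \colon Q \to \Delta$; take the initial state to be $g_0 = \lambda$ (i.e. $g_0(q) = \lambda(q)$); define the transition function by $\delta^R(g, a)(q) = g(\delta(q, a))$ for $a \in \Sigma$; and define the output coding by $\lambda^R(g) = g(q_0)$, where $\delta$ is extended to words in the usual way. Since $\delta^R(g, a)$ is again a function $Q \to \Delta$, the DFAO $M^R$ is well-defined and deterministic.

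The heart of the proof is a single induction establishing the invariant
$$
\bigl(\delta^R(g_0, a_1 a_2 \cdots a_i)\bigr)(q) = \lambda\bigl(\delta(q, a_i a_{i-1} \cdots a_1)\bigr) \quad\text{for all } q \in Q.
$$
The base case $i = 0$ is just $g_0 = \lambda$. For the inductive step I would peel off the newest symbol using $\delta(q, a_{i+1} a_i \cdots a_1) = \delta(\delta(q, a_{i+1}), a_i \cdots a_1)$ together with the definition of $\delta^R$. Applying the invariant to the full word $w = a_1 \cdots a_m$ and reading off the output gives
$$
\lambda^R\bigl(\delta^R(g_0, w)\bigr) = \bigl(\delta^R(g_0, w)\bigr)(q_0) = \lambda\bigl(\delta(q_0, a_m \cdots a_1)\bigr) = f(w^R) = f^R(w),
$$
as required. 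Finally, since only reachable states matter, restricting $Q^R$ to the states reachable from $g_0$ yields a DFAO with at most $|\Delta|^n$ states.

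The step I expect to require the most care is the bookkeeping in the transition rule: one must orient the function update so that reading the next symbol $a_{i+1}$ in $M^R$ corresponds to \emph{prepending} $a_{i+1}$ to the front of the word $a_i \cdots a_1$ that $M$ processes, which is precisely what makes $M^R$ simulate $M$ on the reversal. Everything else---well-definedness, the state count, and the output coding---is routine once this invariant is correctly stated.
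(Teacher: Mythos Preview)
Your construction and proof are correct; this is precisely the standard subset-style construction with states $\Delta^Q$, initial state $\lambda$, transition $\delta^R(g,a) = g \circ \delta(\,\cdot\,, a)$, and output $g \mapsto g(q_0)$. The invariant you state is the right one and the induction goes through exactly as you describe.

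Note, however, that the paper does not give its own proof of this theorem: it simply quotes the result as Theorems~4.3.3--4.3.4 of Allouche and Shallit and uses it as a black box. Your argument is in fact the same construction that appears in that reference, so there is nothing to compare---you have supplied the (omitted) standard proof.
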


Suppose that we wish to transduce an lsd-$k$ DFAO $M$ that computes an automatic sequence $\mathbf{x} = (x_n)_{n \geq 0}$ with a transducer $T$. To leverage our existing transduction algorithms for msd-$k$ automata, we can instead transduce the msd-$k$ reversal $M^R$, giving us a new msd-$k$ automaton $N^R$ that computes $T(\mathbf{x})$. Reversing $N^R$ then gives us an lsd-$k$ automaton $N$ that also computes $T(\mathbf{x})$.

\begin{example}[Running sums of Thue-Morse in lsd]
Consider the lsd-$2$ automaton that computes the running sum of the Thue-Morse sequence, pictured in Figure \ref{fig:TSUM1_REV}.
\begin{figure}[H]
\begin{center}
    \includegraphics[width=4in]{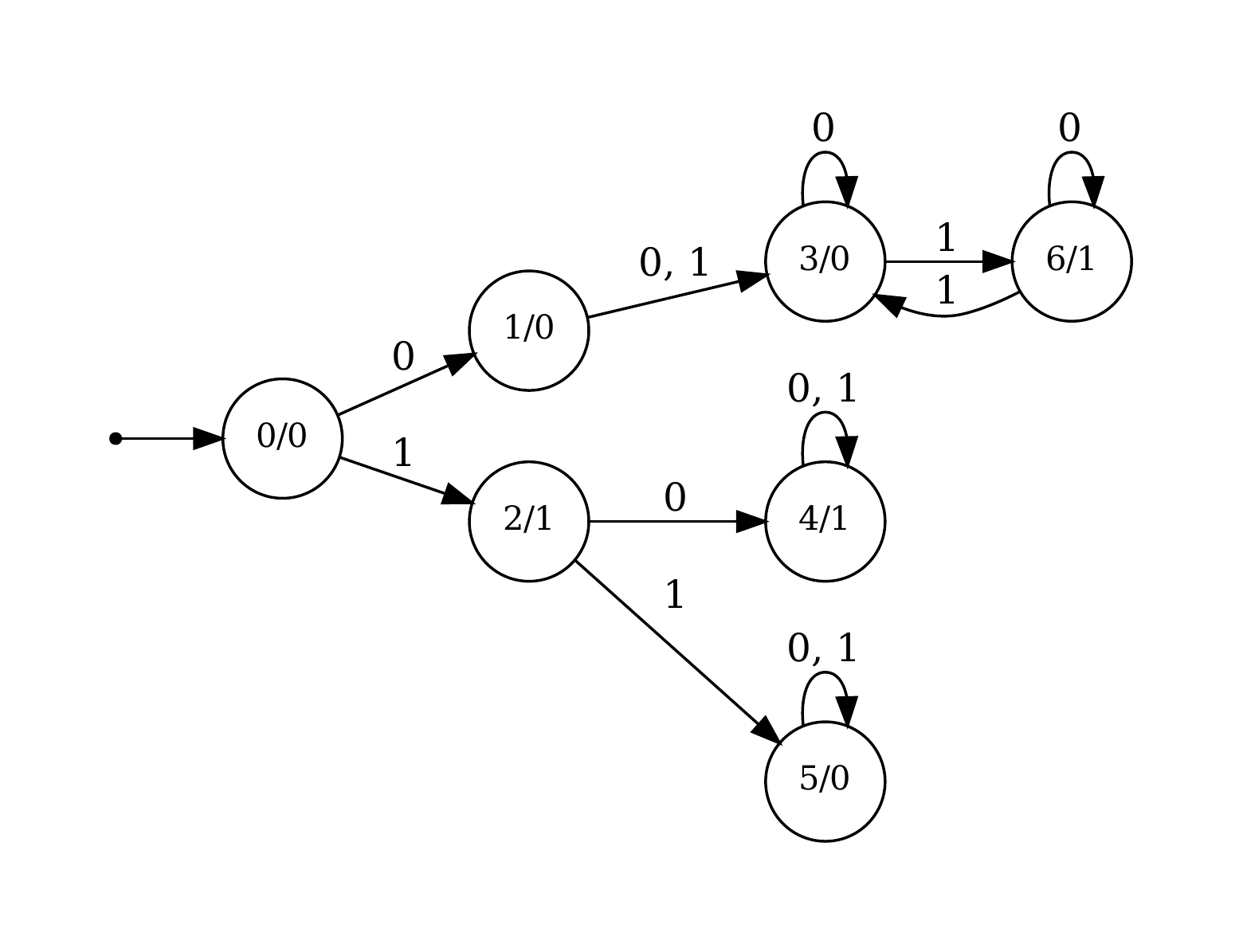}
\end{center}
\caption{An lsd-$2$ automaton \texttt{TSUM1\_REV} computing $\mathbf{t}_1$.}
\label{fig:TSUM1_REV}
\end{figure}
We can check using \texttt{Walnut} that the reversal of \texttt{TSUM1\_REV} is indeed the same one as in Figure \ref{fig:TSUM1} using the \texttt{reverse} command:
\begin{verbatim}
reverse TSUM1_ORIG TSUM1_REV:
eval test "An TSUM1_ORIG[n]=TSUM1[n]":
\end{verbatim}
and {\tt Walnut} returns {\tt TRUE}.
We can then transduce \texttt{TSUM1\_REV} using our \texttt{RUNSUM2} transducer as in Figure \ref{fig:RUNSUM2} to give the automaton in Figure \ref{fig:TSUM2_REV}:
\begin{verbatim}
transduce TSUM2_REV RUNSUM2 TSUM1_REV:
\end{verbatim}
\begin{figure}[H]
\begin{center}
    \includegraphics[width=5in]{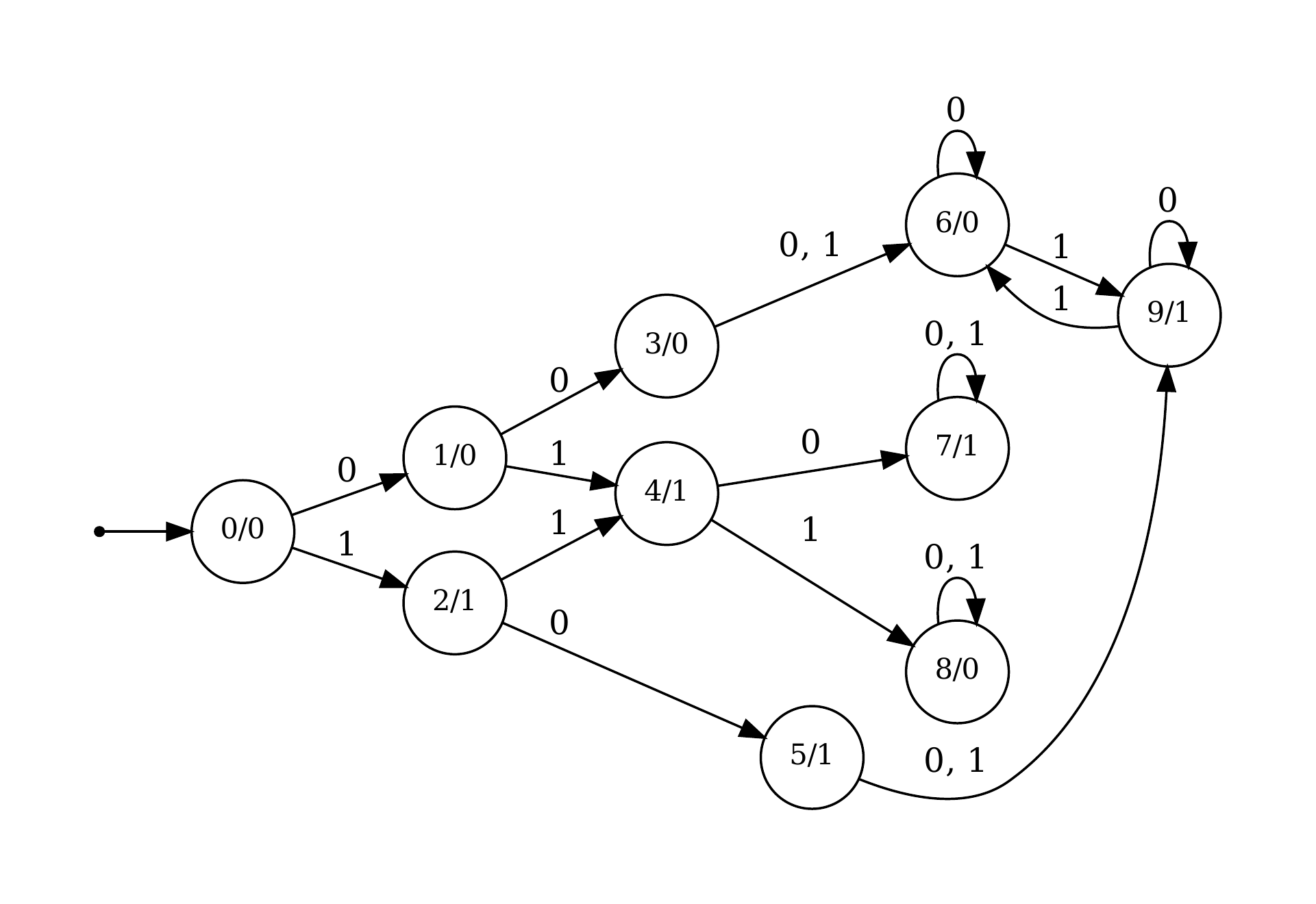}
\end{center}
\caption{An lsd-$2$ automaton \texttt{TSUM2\_REV} computing $\mathbf{t}_2$.}
\label{fig:TSUM2_REV}
\end{figure}
Again, we check that the reversal of \texttt{TSUM2\_REV} is just \texttt{TSUM2} as in Figure \ref{fig:TSUM2}:
\begin{verbatim}
reverse TSUM2_ORIG TSUM2_REV:
eval test2 "An TSUM2_ORIG[n]=TSUM2[n]":
\end{verbatim}
and {\tt Walnut} returns {\tt TRUE}.
\end{example}

Note that the reversed automaton $M^R$ has at most $|\Delta|^{|Q|}$ states even after minimization, so the state complexity of transducing an lsd-$k$ automaton is exponential in the worst case.

If $M$ computes a sequence $(x_n)_{n \geq 0}$ over an arbitrary numeration system $\mathcal{N}$ (for example, the Fibonacci numeration system) taking digits over $\{0, \ldots, k-1\}$, the preceding discussion can be applied to the extension $M'$ of $M$ (as defined in Section \ref{sec:nonuniform}), where the extension $M'$ now computes a $k$-automatic sequence.

\section*{Acknowledgments}

We are grateful to Narad Rampersad for suggesting the example in
Example~\ref{exam1}.

AZ thanks Andrey Boris Khesin for useful discussions, and for providing the statement of Lemma \ref{lem:tm_binom} and the statement and proof of Lemma \ref{lem:tm_pow2sum}.

\newpage

\section*{Appendix}

We provide the \texttt{Walnut} definitions for the transducers that appear in the paper. These are to be defined in \texttt{.txt} files in the \texttt{Transducer Library/} directory of \texttt{Walnut}.

\subsubsection*{\texttt{\# RUNSUM2.txt}}

{\footnotesize \begin{verbatim}
{0, 1}

0
0 -> 0 / 0
1 -> 1 / 1

1
0 -> 1 / 1
1 -> 0 / 0
\end{verbatim}}

\subsubsection*{\texttt{\# RUNPROD1357.txt}}

{\footnotesize \begin{verbatim}
{1, 3, 5, 7}

0
1 -> 0 / 1
3 -> 1 / 3
5 -> 2 / 5
7 -> 3 / 7

1
1 -> 1 / 3
3 -> 0 / 1
5 -> 3 / 7
7 -> 2 / 5

2
1 -> 2 / 5
3 -> 3 / 7
5 -> 0 / 1
7 -> 1 / 3

3
1 -> 3 / 7
3 -> 2 / 5
5 -> 1 / 3
7 -> 0 / 1
\end{verbatim} }

\subsubsection*{\texttt{\# XOR.txt}}

{\footnotesize \begin{verbatim}
{0, 1}

0
0 -> 1 / 0
1 -> 2 / 0

1
0 -> 1 / 0
1 -> 2 / 1

2
0 -> 1 / 1
1 -> 2 / 0
\end{verbatim}}

\subsubsection*{\texttt{\# TSUM1\_REV.txt}}

{\footnotesize \begin{verbatim}
lsd_2

0 0
0 -> 1
1 -> 2

1 0
0 -> 3
1 -> 3

2 1
0 -> 4
1 -> 5

3 0
0 -> 3
1 -> 6

4 1
0 -> 4
1 -> 4

5 0
0 -> 5
1 -> 5

6 1
0 -> 6
1 -> 3
\end{verbatim}}


\begin{thebibliography}{99}

\bibitem{Allouche&Bousquet-Melou:1994b}
J.-P. Allouche and M.~Bousquet-M\'elou.
\newblock Facteurs des suites de {Rudin-Shapiro} g\'en\'eralis\'ees.
\newblock {\em Bull. Belgian Math. Soc.} {\bf 1} (1994), 145--164.

\bibitem{Allouche&Shallit:2003}
J.-P. Allouche and J.~Shallit.
\newblock {\em Automatic Sequences: Theory, Applications, Generalizations}.
\newblock Cambridge University Press, 2003.

\bibitem{Berstel:1986b}
J.~Berstel.
\newblock Fibonacci words---a survey.
\newblock In G.~Rozenberg and A.~Salomaa, editors, {\em The Book of L}, pp.
  13--27. Springer-Verlag, 1986.

\bibitem{Bruyere&Hansel&Michaux&Villemaire:1994}
V.~{Bruy\`ere}, G.~Hansel, C.~Michaux, and R.~Villemaire.
\newblock Logic and $p$-recognizable sets of integers.
\newblock {\em Bull. Belgian Math. Soc.} {\bf 1} (1994), 191--238.
\newblock Corrigendum, {\it Bull.\ Belg.\ Math.\ Soc.} {\bf 1} (1994), 577.

\bibitem{Burns:2021}
R.~Burns.
\newblock Factorials and {Legendre's} three-square theorem.
\newblock Arxiv preprint arXiv:2101.01567 [math.NT], available at
  \url{https://arxiv.org/abs/2101.01567}, 2021.

\bibitem{Burns:2022}
R.~Burns.
\newblock Factorials and {Legendre's} three-square theorem: {II}.
\newblock Arxiv preprint arXiv:2203.16469 [math.NT], available at
  \url{https://arxiv.org/abs/2203.16469}, 2022.

\bibitem{Cobham:1972}
A.~Cobham.
\newblock Uniform tag sequences.
\newblock {\em Math. Systems Theory} {\bf 6} (1972), 164--192.

\bibitem{Damanik:2000}
D.~Damanik.
\newblock Local symmetries in the period-doubling sequence.
\newblock {\em Disc. Appl. Math.} {\bf 100} (2000), 115--121.

\bibitem{Dekking:1994}
F.~M. Dekking.
\newblock Iteration of maps by an automaton.
\newblock {\em Discrete Math.} {\bf 126} (1994), 81--86.

\bibitem{Deshouillers&Luca:2010}
J.-M. Deshouillers and F.~Luca.
\newblock How often is $n!$ a sum of three squares?
\newblock In K.~Alladi, J.~R. Klauder, and C.~R. Rao, editors, {\em The Legacy
  of Alladi Ramakrishnan in the Mathematical Sciences}, pp.  243--251.
  Springer-Verlag, 2010.

\bibitem{Hajdu&Papp:2018}
L.~Hajdu and {\'A}.~Papp.
\newblock On asymptotic density properties of the sequence $(n!)_{n=0}^\infty$.
\newblock {\em Acta Arith.} {\bf 184} (2018), 317--340.

\bibitem{Hopcroft&Ullman:1979}
J.~E. Hopcroft and J.~D. Ullman.
\newblock {\em Introduction to Automata Theory, Languages, and Computation}.
\newblock Addison-Wesley, 1979.

\bibitem{Lekkerkerker:1952}
C.~G. Lekkerkerker.
\newblock Voorstelling van natuurlijke getallen door een som van getallen van
  {Fibonacci}.
\newblock {\em Simon Stevin} {\bf 29} (1952), 190--195.

\bibitem{Mol&Rampersad&Shallit:2023}
L.~Mol, N.~Rampersad, and J.~Shallit.
\newblock Dyck words, pattern avoidance, and automatic sequences.
\newblock ArXiv preprint arXiv:2301.06145 [cs.DM], January 15 2023. Available
  at \url{https://arxiv.org/abs/2301.06145}.   To appear, WORDS 2023
  conference.

\bibitem{Mousavi:2016}
H.~Mousavi.
\newblock Automatic theorem proving in {{\tt Walnut}}.
\newblock Arxiv preprint arXiv:1603.06017 [cs.FL], available at
  \url{http://arxiv.org/abs/1603.06017}, 2016.

\bibitem{Nathanson:1996}
M.~B. Nathanson.
\newblock {\em Additive Number Theory---The Classical Bases}.
\newblock Springer-Verlag, 1996.

\bibitem{Prunescu:2011}
Mihai Prunescu.
\newblock The {Thue}--{Morse}--{Pascal} double sequence and similar
  structures.
\newblock {\em C. R. Acad. Sci. Paris} {\bf 349} (2011), 939--942.

\bibitem{Shallit:2022}
J.~Shallit.
\newblock {\em The Logical Approach To Automatic Sequences: Exploring
  Combinatorics on Words with {\tt Walnut}}, Vol. 482 of {\em London Math. Soc.
  Lecture Note Series}.
\newblock Cambridge University Press, 2022.

\bibitem{Sloane:2023}
N.~J.~A. Sloane.
\newblock The on-line encyclopedia of integer sequences.
\newblock Electronic resource, available at \url{https://oeis.org}.

\bibitem{Zeckendorf:1972}
E.~Zeckendorf.
\newblock {Repr\'esentation} des nombres naturels par une somme de nombres de
  {Fibonacci} ou de nombres de {Lucas}.
\newblock {\em Bull. Soc. Roy. {Li\`ege}} {\bf 41} (1972), 179--182.

\end{thebibliography}
\end{document}